\documentclass{article}
\usepackage{graphicx} 
\usepackage[margin=2.5cm]{geometry}
\usepackage{amsmath,amsfonts,amssymb,amsthm}
\usepackage{hyperref}
\usepackage{color}

\newtheorem{lemma}{Lemma}
\newtheorem{theorem}{Theorem}
\newtheorem{example}{Example}
\newtheorem{remark}{Remark}

\usepackage{stmaryrd}
\usepackage{amsmath}
\usepackage{amsthm}
\usepackage{amsfonts}
\usepackage{hyperref}

\newtheorem{proposition}{Proposition}
\newtheorem{problem}{Problem}

\usepackage{algorithm}
\usepackage{adjustbox}
\usepackage[noend]{algpseudocode}
\usepackage{booktabs}
\usepackage{subcaption}
\usepackage{tikz}
\usetikzlibrary{quantikz2}
\usepackage{xspace}

\usepackage{cases}
\usepackage{color}

\usepackage{natbib}
 \bibpunct[, ]{(}{)}{,}{a}{}{,}%

\newcommand{\CNOT}{{\sc cnot}}
\newcommand{\CNOTs}{{\sc cnot}s}
\newcommand{\RZ}{{rotation-Z}}
\newcommand{\NP}{{\sf NP}}
\newcommand{\Te}{\mathbb{T}}

\usepackage{todonotes}

\newcommand{\pbname}{Poly-Z min-depth circuit synthesis problem\xspace}
\newcommand{\pbac}{PZDCS\xspace}
\newcommand{\arpbac}{PZDCS-Top\xspace}

\newcommand{\cnot}[2]{\mathtt{cx}_{#1#2}}
\newcommand{\rz}[1]{\mathtt{rz}_{#1}}

\newcommand{\VT}{\texttt{Tempura}} 

\newcommand{\TB}{\texttt{HUBO\_BLP\_TB\_H}}
\newcommand{\TBFM}{\texttt{HUBO\_BLP\_TBFM\_H}}
\newcommand{\CDP}{\texttt{HUBO\_BLP}}
\newcommand{\Halmos}{\qed}

\begin{document}
\author{Dylan Laplace Mermoud $^{a}$, Dimitri Watel $^{b,c}$, Sourour Elloumi $^{b,c,d}$, Andrea Simonetto $^{c,d}$ \\
\footnotesize
$^a$ Sorbonne Université, CNRS, LIP6, F-75005 Paris, France \\
\footnotesize $^b$ ensIIE, 1 place de la Résistance, Évry-Courcouronnes, France\\
\footnotesize $^c$ CEDRIC lab, CNAM, Paris, France\\
\footnotesize $^d$ Unité des Mathématiques Appliquées, ENSTA, Institut Polytechnique de Paris,  Palaiseau, France \\
}

\title{Circuit Design Problems in Binary Polynomial Quantum Optimization}

\date{}

\maketitle

\begin{abstract}
Solving combinatorial optimization problems on near-term quantum computers mainly relies on their quadratic unconstrained binary formulation. Lately, binary polynomial formulations start to take the center stage, having a better scaling with respect to the number of used qubits. However, while quadratic forms are straightforward to implement on quantum circuits, polynomial forms incur in circuit depth overheads, due to the lack of $n$-qubit gates, for $n\geq 3$, on current hardware. In this paper, we study the problem of designing the shallowest quantum circuit for a given polynomial form. We formulate it as a classical binary linear problem and we prove the \textsf{NP}-completeness of a variant of it. On the algorithmic side, we offer a scalable and performant heuristic method to tackle it, \VT, as well as faster binary linear problem formulations, which offer computational feasibility while sacrificing optimality.  Our performance evaluation shows that the algorithms yield far shorter circuits than the state-of-the-art approaches, and \VT\ is mostly comparable with them in terms of runtime.    
\end{abstract}


\section{Introduction}\label{sec:Intro}

The current state of the art in solving combinatorial optimization problems on near-term quantum computers favors their reformulation as quadratic unconstrained binary optimization (QUBO) problems. This is mainly due to a popular quantum variational algorithm, QAOA~\citep{QAOA,Zhou2018}, which was initially thought for these problems, and for the ease in which quadratic forms map into a quantum circuit. Such quantum circuits can then be encoded into current hardware. The use of QUBO formulations is not very restrictive from a mathematical standpoint. We know that {\sc{maxcut}} is \textsf{NP}-Hard and a QUBO. As such, we could polynomially reduce many interesting problems (i.e., all known \textsf{NP}-Hard problems) to it~\citep{arora2009computational}. However, this way of doing hides critical complexities in how many additional variables we may need to consider, which maps directly into the number of used qubits. Quadratization has also been considered, with similar overheads~\citep{mandal2020compressed,GOQAOA}. Qubits, the quantum equivalent to classical logical bits, as well as the number of operations we perform on them, i.e., the number of gates, represent the current bottleneck in implementing quantum optimization algorithms at scale. As such, compact formulations of optimization problems are to be preferred to loose ones, with many variables and redundancies.

In an effort to reduce the number of variables and qubits, binary polynomial problems have started to take a more central role. We often call these problems HUBOs, for higher-order unconstrained binary optimization. Several works have demonstrated how their formulation is more compact in terms of number of variables~\citep{HOBO}, and sometimes it even needs less gates when applied to multi-layer QAOA~\citep{koch2026resource}. See also interesting recent applications in~\citep{uotila2025higher,romero2025bias}. In addition, while QUBOs have been solved classically up to thousands of variables~\citep{Rehfeldt2023} and a quantum advantage seems unlikely in the medium term, HUBOs are expected to offer a quicker path to quantum utility~\citep{Shaydulin2024}. 

The key disadvantage of using HUBO formulations is their quantum circuit depth. If QAOA can be used also for HUBOs, when applied to a polynomial of maximum degree $d$, it generates quantum gates featuring $d$-qubits, which cannot be easily encoded in current quantum hardware if $d \geq 3$. Typically, one would need to decompose such gates into one and two-qubit gates, yielding a circuit depth overhead~\citep{shende2004lowerbound}. As such, QUBOs and HUBOs typically stand at the opposite side of the trade-off: short circuits for many qubits, or long circuits for few qubits\footnote{As said, \cite{koch2026resource} claim less depth for HUBO problems than QUBO for specific formulations. This can be the case if QAOA requires less layers for the HUBO than the QUBO to obtain a good approximate solution. However, very often each layer's depth is longer in the HUBO case than in the QUBO case.}.

Said so, HUBOs are still very interesting from their quantum utility potential. In this paper, we start from the HUBO polynomial form (with $n$ variables, max $d$ degree, and $m$ monomials) and we study how to use classical combinatorial optimization to map the resulting $d$-qubit gates into parametrized \RZ\ and \CNOT\ gates. Specifically, we focus on how to minimize the depth of the resulting quantum circuit at most as possible. We start from our preliminary work~\citep{verchere2023optimizing} which formulated the HUBO circuit design problem as a binary linear optimization problem and proposed an heuristic method to tackle it. Here, we offer the following main contributions: \emph{(i)} we shed light on the complexity of this problem, by proving a {\textsf{NP}}-completeness result of a variant of it; \emph{(ii)} we formulate several binary linear problems that encode the problem with a various degree of optimality vs computational feasibility, which are computationally better than our preliminary work; \emph{(iii)} we propose an efficient, scalable, and performant heuristic algorithm \VT, which allows us to compile large circuits featuring thousands of monomials and has $O(m n)$ worst-case, but in practice better than $O(m d)$, depth, going beyond the naive $O(m 2^d)$ obtained in our preliminary work; \emph{(iv)} we demonstrate practical performance with respect to two common state-of-the-art solvers (Qiskit and \texttt{tket}), showing far shorter circuits and similar runtimes (for \VT).

\subsection{Related work}

The general problem of transferring a quantum algorithm into a quantum circuit goes under the name of compilation. In the most wholistic sense, we can think of a quantum algorithm featuring $n$-qubits as a unitary matrix $U\in\mathbb{C}^{2^n \times 2^n}$ mapping input to output. Compilation is then the task of decomposing and transferring such matrix into operations (gates) that we can implement on a specific hardware. Typical constraints that we have are the gate set, how the qubits can be arranged (topology or connectivity), and the total number of operations (circuit length) limitations. 

Since the celebrated Solovay-Kitaev theorem~\citep{nielsenbook}, quantum compiling has been a rich research area. Works have investigated how to efficiently map different gates into canonical (universal) gate sets up to an arbitrary accuracy, or how to ``place'' the target circuit onto the real connectivity-limited hardware. See for instance~\citep{madden2022best,nannicini2022optimal} and the many references therein. More recently, machine learning techniques have also been used for this task~\citep{moro2021quantum,yan2024quantum,theissinger2026beyond}. In fact, quantum compilation off-the-shelf softwares, such as \texttt{tket}~\citep{tket}, \texttt{bqskit}~\citep{doecode58510}, and the full software stack Qiskit~\citep{Qiskit2019} encode most of the current research in proving a scalable, fast, general-purposes compilation procedure, often based on fast routines and sacrificing circuit depth.

The goal of compiling a circuit given a gate set and connectivity constraints can be thought in the lens of operations research. Mixed-integer linear formulations have then appeared to tackle specific instances of the compilation and mapping problem, e.g.,~\citep{nagarajan2021quantumcircuitopt,nannicini2022optimal,verchere2023optimizing,nagarajan2025provably,jung2025new}, all featuring optimal solutions but unmanageable classical overheads after $n>10$ qubits. In particular, \cite{nannicini2022optimal} present an integer linear program how to optimally place and route the gates on a quantum computer via limited connectivity, given a certain full-connectivity quantum circuit. The works of~\citep{nagarajan2021quantumcircuitopt,nagarajan2025provably,jung2025new}, they focus instead on decomposing and compiling a given unitary matrix in a given gate set with the goal of minimizing depth. 

In this paper, we focus on a specific instance of the compilation problem: how to map diagonal multi-qubit $Z$ gates (which are the $d$-qubit gates stemming from the HUBO polynomial form) into parametrized \RZ\ and \CNOT\ gates. Universal baselines for the length of the circuits are the $O(2^n)$ lower bound in exact compilation and the $O(1/\epsilon)$ in a $\epsilon$-approximate one~\citep{zylberman2025efficient}. However, here we do not have a general unitary and we can compile shorter circuits if the degree $d$ of the polynomial is less than $n$, as we will see. 

Given the combinatorial nature of the compilation task, it seems natural to ask whether some instances of the compilation problem are \NP-Complete. In general, the \NP-completeness of quantum compilation is an open problem. \cite{botea2018complexity} provides probably one of the first proofs in this direction for a particular instance. Close in spirit to our task is the work of \cite{Amy2019} and the very recent \cite{acuaviva2026cnotdistance}. In this paper, we provide a \NP-Complete result for a topologically-constrained version of compilation, close to our formulation.

\subsection{Organization}

The next section introduces the problem formulation, some useful notation, and definitions. In Section~\ref{sec:np-completeness}, we prove the \NP-completeness of an instance of our compilation problem under some conditions about the connectivity of the qubits. Section~\ref{sec:TPA} describes our efficient template and pruning algorithm to the compilation task, while Section~\ref{sec:ILP} presents the optimal binary linear formulation. Then, in Section~\ref{sec:results}, we discuss performance evaluations of the different approaches, and we finish the paper with some concluding remarks. 

\section{Problem description}

We are interested in binary combinatorial optimization problems of the form
\begin{equation}
\min_{z \in \{-1, 1\}^n} P(z), 
\end{equation}
where $P(z)$ is a polynomial in the $\{-1,1\}$ variables $z$. To make the dependence explicit, we introduce the binary selection variables $\mu_j \in \{0,1\}$ for variable $z_j$, we stack them into the selection vector $\mu \in \{0,1\}^n$ and consider the set of all possible $\mu$'s, $M =\{\mu | \mu \in \{0,1\}^n \}$ of cardinality $2^n$. In this way, we can write the polynomial $P(z)$ as,
\begin{equation}
P(z) =  \sum_{\mu \in M} \alpha_{\mu} \prod_{j = 1}^n z_j^{\mu_j},
\end{equation}
where \(\alpha_\mu\) is a real number for every \(\mu \in M\). We denote with $d$ the maximum degree among all the monomials, $d = \max_{\mu \in M} \{ {\mathbf{1}}_{\alpha_{\mu}}\sum_j \mu_j\}$, and by $m$ the number of monomials, i.e., $m = \sum_{\mu \in M} {\mathbf{1}}_{\alpha_{\mu}}$, where ${\mathbf{1}}_{\alpha_{\mu}}$ is $1$ if and only if $\alpha_{\mu} \neq 0$. Typically $m \ll 2^n$. If $d = 2$, then the polynomial is quadratic. 

To find an approximate \(z\) minimizing \(P\) over \(\{-1, 1\}^n\), one can resort to the quantum variational heuristic \textsc{QAOA}. To do so, we associate to each variable $z_i$ a qubit, we introduce the Pauli-Z gate, or $Z$ operator, as,
$
Z = \begin{bmatrix} 1 & 0\\ 0 & -1\end{bmatrix},
$
and the problem Hamiltonian,
\[
H_P(Z) = \sum_{\mu \in M} \alpha_\mu \bigotimes_{j = 1}^n Z_j^{\mu_j}. 
\]
Here, we have defined \(Z_j\) the operator \(Z\) acting on the \(j\)-th qubit. We have now to build the quantum circuit implementing the exponential of the matrix \(U_P(\gamma) = \exp{(-\jmath \gamma H_P(Z))}\), where $\gamma$ is a free rotation parameter that will be optimized over classically, and $\jmath = \sqrt{-1}$. 
In the following, we let $\gamma = 1$ without loss of generality, we define $U_P = U_P(1)$ and load into the $\alpha$ coefficient the angle $\gamma$ information. 
The matrix \(\exp{(-\jmath H_P(Z))}\) is a diagonal matrix of dimension $2^n \times 2^n$. Up to a global phase, we can implement this unitary with \CNOT\ and \RZ\ gates~\citep{nielsenbook}, which are the typical ones available in hardware. We will indicate the 2-qubit \CNOT\ gate as \(\cnot{i}{j}\) if it is controlled by qubit \(i\), targeting the qubit \(j\). We adopt the notation that a \CNOT\ is downward if $j>i$, and upward otherwise. 
The 1-qubit \RZ\ gates, will be indicated as $\rz{}$ (leaving the angle dependency hidden). Whenever the dependencies of the qubit is important, we will say $\rz{i}$. Often we will also say $\rz{i,j,k, \dots}$ to stress that the \RZ\ gate is implementing the coefficient $\alpha_{i,j,k, \ldots}$. In the quantum circuit, we will simply indicate $\alpha_{i,j,k, \ldots}$. This notation is useful, since it allows us to present parametric circuits (that are valid for all the choices of non-zero coefficients), rather than fixing them to specified values. 

Our aim is to design such circuits while minimizing their depth. We define here depth as the count of time steps needed to execute all the gates in the quantum circuit, when one time step is the time it takes to execute any gate. If $N$ gates can be executed in parallel, then only one time step is needed. Formally, we introduce the following problem. 

\begin{problem}[\pbname (\pbac)]
Given a Pauli-Z polynomial
\[
H_P(Z) = \sum_{\mu \in M} \alpha_\mu \bigotimes_{j=1}^n Z_j^{\mu_j}, 
\]
design a quantum circuit implementing \(U_P = \exp{(-\jmath H_P(Z))}\) of minimum depth, using only \CNOT\ and \RZ\ gates. \Halmos
\end{problem}

The unitary $U_P$ can be compiled into a circuit featuring \CNOT\ gates and \RZ\ gates in $O(2^n)$ layers using standard rewriting rules. However, this is far from optimal for low degree polynomials.

\begin{example}
To fix the ideas, let us consider a simple Hamiltonian, $H_P(Z) = \alpha_{123} Z_1 \otimes Z_2 \otimes Z_3$. The corresponding operator is $U_P = \exp{(-\jmath\alpha_{123}Z_1 \otimes Z_2 \otimes Z_3)}$. A known decomposition~\citep{HOBO,BK12} of this operator using the allowed gates is depicted in Figure \ref{fig:decomp_example}a. 
\Halmos
\end{example}

\begin{example}
Next, we consider a slightly more complicated Hamiltonian of the following form: 
$$
H_P(Z) = \underset{I \subseteq \{1,2,3\}}{\sum} \alpha_I  \underset{i \in I}{\bigotimes} Z_i.
$$ 
Inspired by Gray code, in \cite{HOBO}, the authors propose the circuit shown in Figure \ref{fig:decomp_example}b. This circuit lets us encode a polynomial of degree three. However, it is not clear how one would do for more complex polynomials, for instance the sum of $H_P(Z)$ above with a second polynomial $H_2(Z) = \underset{I \subseteq \{2,3,4\}}{\sum} \beta_I \otimes_{i \in I} Z_i$. Compiling the first and then the second and subsequently putting them together one after the other would include twice all the monomials that are common to the two Ising Hamiltonians, in this case $\{2\}$, $\{3\}$, and $\{2,3\}$. \Halmos
\end{example}

\begin{figure}[b]
\centering
    \begin{subfigure}{0.285\textwidth}
    \centering
    \includegraphics[width=\textwidth]{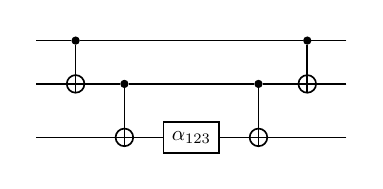}
    \subcaption{Circuit for Example~1.}
    \end{subfigure}
    \hfill 
    \begin{subfigure}{0.695\textwidth}
    \centering
    \includegraphics[width=\textwidth]{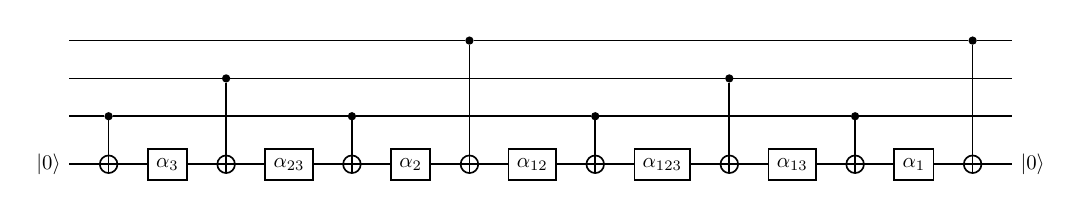}
    \subcaption{Circuit for Example~2.}
    \end{subfigure}
    \caption{Quantum circuits for a few simple Ising Hamiltonians.}
    \label{fig:decomp_example}
\end{figure}

From the discussion of the above examples, it seems reasonable that a better strategy to encode the Hamiltonian should be achievable. In this paper, we take the point of view of \emph{routing}. In fact, from a more high-level abstraction, the circuit in Figure \ref{fig:decomp_example}a can be seen as a form of routing of the singletons $Z_i$, in order to form the product $Z_1 \otimes Z_2\otimes  Z_3$ on the third qubit, at which moment, the appropriate phase rotation gate is applied. 

Following this point of view, at the start of the circuit, each qubit represents its corresponding singleton: $Z_1$ is on qubit 1, $Z_2$ on qubit 2, $Z_3$ on qubit 3, and so forth. The first \CNOT\ gate from qubit 1 to qubit 2 replaces the current monomial on qubit 2 by the product of the monomials on qubit 1 and 2, that is to say $Z_2$ is replaced by $Z_1 \otimes Z_2$. In this context, we say that qubit 1 is {\bf routed} or {\bf sent} to qubit 2 to form the product $Z_1 \otimes Z_2$. 

This interpretation is also valid for the circuits in Figures~\ref{fig:decomp_example}b, where we simply have to consider that the ancilla qubit starts by representing the constant $1$. 

This view leads the search for the shallowest circuit for a given Hamiltonian as a combinatorial optimization problem, where variables and constraints describe which gates are put on which qubits, and this routing process of the monomials through the circuit through the use of the \CNOT\ gates. 


\section{An \NP-complete result for topologically-constrained problems}\label{sec:np-completeness}

We are now ready to provide an \NP-completeness result. First of all, one may think that the circuit design problem is easier when the degree of the polynomial is lower or when we require a small circuit depth. In this section, we show that the complexity of \pbac depends mainly on the number of qubits and monomials and not on the degree or the depth of the circuit. To do so, we give a proof of the \textsf{NP}-Completeness of a variant of \pbac under a topology constraint of the qubit connectivity - that is we are not allowed to connect every pair of qubits. In the proof, the degree of the polynomial and the bound on the depth are constant values fixed respectively to $2$ and $9$. 

To our knowledge, determining the complexity class of (the unconstrained version of) \pbac is open. One variant was previously considered and shown \textsf{NP}-Complete in \citep{Amy2019} where the \(\cnot{}{}\) gates must all target the same ancilla qubit. The connectivity is described by a directed topology graph $\mathcal T$ associating each qubit with a vertex and defining the right to place a $\cnot{i}{j}$ gate with the arc $(i, j)$. For example in \citep{Amy2019}, the topology graph is a star where the arcs are directed toward the center. 

The usual topology graph of \pbac is the complete bidirectional graph, referred to as ``all-to-all topology''. Here, we define the following decision variant of \pbac where $d, T \in \mathbb{N}$.

\begin{problem}[($d, T$)-\arpbac]
Given a Pauli-Z polynomial \(H_P(Z) = \sum_{\mu M} \alpha_\mu \bigotimes_{j=1}^n Z_j^{\mu_j} \) with degree $d$, a 3-regular acyclic directed graph $\mathcal D$ on $n$ vertices, is there a quantum circuit implementing \(U_P = \exp{(-\jmath H_P(Z))}\) of depth at most $T$, using only \CNOT\ and \RZ\ gates and satisfying the topology constraint defined by $\mathcal D$? \Halmos
\end{problem}

\begin{theorem}
    \label{thm:np-completeness:1}
    (2, 9)-\arpbac is \textsf{NP}-Complete. 
\end{theorem}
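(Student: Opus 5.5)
Membership in \NP{} is immediate: a circuit of depth at most $9$ on $n$ qubits contains at most $9n$ gates. It can therefore be given as a polynomial-size certificate listing, for each time step, the \CNOT\ and \RZ\ gates. Checking it is also polynomial. Each arc used must belong to $\mathcal D$. The circuit implements $U_P$ exactly when we track, in the routing interpretation of the previous section, the parity (a vector in $\mathbb F_2^n$) held by each qubit and verify two things. First, the final linear map is the identity. Second, every monomial $\mu$ with $\alpha_\mu\neq 0$ receives an \RZ\ at some moment when some qubit holds exactly the parity $\mu$; parities without a nonzero coefficient may receive angle $0$, so they can simply be ignored. This parity-phase characterization of \CNOT+\RZ\ diagonal circuits is standard and I would state it as a lemma first. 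Since the coefficients are generic, we may assume each required parity must be produced at least once.

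For hardness I would reduce from a 3-SAT variant with bounded occurrences, for instance 3-SAT in which each variable occurs at most three times, or planar/cubic variants. The reason is that $\mathcal D$ must be 3-regular, i.e., total degree $3$ at every vertex. Degree $2$ monomials $Z_iZ_j$ can only be created by a \CNOT\ along a directed path in $\mathcal D$ from one to the other, possibly chained. Depth $9$ bounds path lengths and the number of \CNOT s each qubit participates in, and the uncomputation needed to restore the identity roughly doubles the cost. I would build variable gadgets and clause gadgets as follows.

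A variable gadget is a small subgraph in which a required monomial $Z_aZ_b$, with $a,b$ at distance $2$, must be built through one of two intermediate routes, say via $c$ or via $c'$. The choice encodes true or false. Within the depth budget, each route blocks the corresponding intermediate qubit for a fixed window of time steps. A clause gadget then has a required monomial whose only possible routes pass through one of three qubits shared with (or adjacent to) the literal gadgets. A route is available in time only if that literal's qubit is not blocked, i.e., only if the literal is satisfied. Acyclicity of $\mathcal D$ forces all \CNOT s along fixed directions. This simplifies the case analysis, since parity can only flow downstream, and each qubit's content must be undone by a mirror sequence of \CNOT s. I would pad all vertices to degree $3$ with dummy acyclic gadgets carrying no monomials.

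The main obstacle is the soundness direction: showing that every depth-$9$ circuit respecting $\mathcal D$ induces a consistent truth assignment. Circuits might route parities in unexpected ways, for instance by building $Z_aZ_b$ as a sum of longer parities, or by interleaving the uncomputation. I would first prove a timing lemma: for a required monomial $Z_iZ_j$ with $d_{\mathcal D}(i,j)=k$, any valid circuit needs at least $2k$ layers involving certain qubits, and these layers occupy specific time slots around the \RZ\ application. Choosing gadget distances so that the budget $9$ is tight leaves only the intended two route options per variable and forbids sharing a qubit between conflicting routes. Completeness is then a direct explicit schedule of depth $9$ built from a satisfying assignment.
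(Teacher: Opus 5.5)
Your membership-in-\NP\ argument is sound, and the paper actually omits this half. Make it precise by first observing that merging all \RZ\ gates on the same parity into one gate of angle $\alpha_\mu$, and deleting those on parities outside the support, never increases depth. Without that, the verifier must check angle sums per parity, not just the existence of a free slot.

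The genuine gap is the hardness half: what you give is a plan, not a reduction. No gadget is specified (vertices, arcs, monomials), so neither completeness nor soundness can be checked. The soundness step, which you yourself call the main obstacle, is where the whole proof lives, and the tools you propose for it do not work as stated.
\begin{itemize}
\item Your timing lemma (``$2k$ layers'') does not pin anything to fixed time slots. For $a\to c\to b$, the sequence $\cnot{c}{b},\cnot{a}{c},\cnot{c}{b},\rz{b},\cnot{c}{b},\cnot{a}{c},\cnot{c}{b}$ realizes $z_az_b$ in $7$ layers and restores all qubits. Within a budget of $9$, the window during which $c$ is ``blocked'' can therefore start at layer $1$, $2$ or $3$. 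You would have to prove that clause routes collide with literal routes for every such shift.
\item Your assertion that each qubit is undone by a mirror sequence of \CNOTs\ does not hold for arbitrary feasible circuits. Un-computation can be interleaved with the next computation, so soundness cannot assume it.
\item With total degree $3$, a variable occurring up to three times needs fan-out or propagation gadgets, which you have not described.
\item Padding to exact $3$-regularity adds arcs. Each new arc is a newly permitted \CNOT\ and a new route, which can break soundness.
\end{itemize}

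The idea that keeps the paper's proof short is to reduce instead from 3-edge-coloring of cubic graphs (\NP-complete by Leven and Galil). The three colors play the role of three time windows: layers 1--3, 4--6 and 7--9.
\begin{itemize}
\item Each edge is replaced by a six-vertex gadget. The resulting cubic graph $\mathcal G'$ has the same colorability and can be oriented as a $3$-regular acyclic digraph with sources, intermediates and sinks.
\item The polynomial is just $\sum_{(w_i,w_j)\in A} z_iz_j$. Every monomial is an arc, and since only downward \CNOTs\ are possible, it must be rotated on its head qubit.
\item First normalize so that every \CNOT\ carries a single variable. A sink then has in-degree $3$, so its qubit needs three \RZ\ gates. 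Each is preceded by a \CNOT\ bringing its neighbour in and followed by one removing it.
\item That is at least $9$ gates on one qubit in depth $9$. This forces the pattern \CNOT, \RZ, \CNOT, \CNOT, \RZ, \CNOT, \CNOT, \RZ, \CNOT, so the sink's \RZ\ gates sit at layers $2,5,8$.
\item Similar arguments place the intermediate vertices' \RZ\ gates at layers $2,5,8$ and show that each layer class is a matching.
\item Completeness is one \CNOT--\RZ--\CNOT\ block per color class.
\end{itemize}
Tightness thus comes for free from the degree bound. Your 3-SAT plan would instead have to obtain it from delicate gadget distances, which still need to be engineered and verified.
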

\begin{proof}
    We offer the proof in Appendix~\ref{app.proof}.
\end{proof}

In the following sections, we focus on the unconstrained version of \pbac, with no topology constraint. 
In the light of the proof for ($d, T$)-\arpbac (which is different from \pbac), we conjecture that \pbac is \textsf{NP}-Complete. The main obstacle for the proof consists mainly in constraining the \CNOT\ gates without any topology constraint
\footnote{The very recent connected work~\citep{acuaviva2026cnotdistance} appeared while we were writing this manuscript. There, the authors prove the \textsf{NP}-Completeness of optimizing the size of \CNOT\ circuits (without $\rz{}{}$ gates and without topology constraint) having as input the $\mathrm{GL}_n(\mathbb{F}_2)$ matrix representing the circuit. A promising perpective would then to mix their proof and ours to reach the goal of proving that \pbac is \textsf{NP}-Complete.}
, using only the bound on the depth, the limited number of qubits and the high number of rotation gates. On the other side, any polynomial algorithm for \pbac, if such an algorithm exists, must be based on the fact that no topology is given: any generic algorithm that can be adapted to any given topology cannot be polynomial unless \(\textsf{P} = \textsf{NP}\). 

\section{A template-pruning model}\label{sec:TPA}

We present now an efficient approach to the compilation of $\pbac$.  The method provides a template circuit for a polynomial of a given degree. The template is generated recursively (that is, the template of degree $d$ is based on the one of degree $d-1$) and it is then combined and simplified (or pruned) to obtain a circuit for a given polynomial. 

\subsection{Template generation}

The original form of the template can be found in our preliminary work \citep{verchere2023optimizing}. We present here a different implementation, to which we add a pruning technique to adapt to the polynomial under consideration. While the aforementioned authors apply the template on each monomial of a polynomial in $\{0,1\}$ variables, and then concatenate all the circuits thus obtained, we apply it to the whole polynomial in $\{-1,1\}$ at once, avoiding duplicates. 

The goal of this template is to be generated quickly, in a recursive manner. It generates a circuit for any binary polynomial instance of $n$ variables,  degree \(d\) containing \(m\) monomials in \(O(2^n)\) layers, which is later reduced to $O(m n)$, in the worst-case, but practically less than $O(m d)$. 

Since the template is recursive, we start with a polynomial of maximum degree $n=d=2$. Let \(P\) be a generic bivariate binary polynomial, written as \(P(z) = \alpha_1 z_1 + \alpha_2 z_2 + \alpha_{12} z_1 z_2\). It is the base case for the template, corresponding to the Hamiltonian
$
H_P(Z) = \alpha_1 Z_1 + \alpha_2 Z_2 + \alpha_{12} Z_1 \otimes Z_2 
$. The corresponding time-evolution unitary \(U_P = \exp{(-\jmath H_P(Z))}\) can be written as
\[
U_P = \left( \exp{(-\jmath \alpha_1 Z_1)} \right) \left( \exp{(-\jmath\alpha_2 Z_2)} \right) \left( \exp{(-\jmath\alpha_{12} Z_1 \otimes Z_2)} \right), 
\]
thanks to the commutativity of the \(Z\) gates, and can be further simplified as 
\[
U_P = \left( \exp{(-\jmath \alpha_1 Z_1)} \right) \left( \exp{(-\jmath \alpha_1 Z_2)}\right) \cnot{1}{2} \left( \exp{(-\jmath \alpha_{12} Z_2)} \right) \cnot{1}{2}. 
\]
The circuit implementing this polynomial, denoted by \(\Te_2\) for template of $n=d=2$, is depicted in Figure~\ref{subfig: C2}. The construction of the circuit \(\Te_{n+1}\) from  the circuit \(\Te_n\) resembles the recursivity of the Gray code, a total ordering of binary words of the same size such that two consecutive words only differs by a bit flip. Because the Hamming distance between each word is minimal and all words belong to the sequence, the Gray code is an optimal way to sequentially visit all words while minimizing the total number of bit flips.

\begin{figure}
\centering
    \begin{subfigure}{0.34\textwidth}
    \centering
    \begin{quantikz}[row sep = {0.7cm,between origins}]
        & \gate{\alpha_1} & \ctrl{1} & \qw                & \ctrl{1} & \qw \\
        & \gate{\alpha_2} & \targ{}  & \gate{\alpha_{12}} & \targ{}  &  \qw 
    \end{quantikz}
    \subcaption{Template circuit \(\Te_2\).}
    \label{subfig: C2}
    \end{subfigure}
    \hspace{0.8cm}
    \begin{subfigure}{0.55\textwidth}
    \centering
    \begin{quantikz}[row sep = {0.7cm,between origins}, column sep = 0.25cm]
        & \gate{\alpha_1} & \ctrl{2} & \qw                & \qw           & \ctrl{1}             & \ctrl{2}           & \ctrl{1}            & \qw      & \qw \\
        & \gate{\alpha_2} & \qw      & \qw                & \ctrl{1}      & \targ{}              & \gate{\alpha_{12}} & \targ{}             & \ctrl{1} & \qw \\
        & \gate{\alpha_3} & \targ{}  & \gate{\alpha_{13}} & \targ{}       & \gate{\alpha_{123}}  & \targ{}            & \gate{\alpha_{23}}   & \targ{}  & \qw
    \end{quantikz}
    \subcaption{Template circuit \(\Te_3\).}
    \label{subfig: C3}
    \end{subfigure}
    \caption{Template circuits for binary polynomials with \(2\) or \(3\) variables. We use the non-standard notation of superposing the \RZ\ gates and the \CNOT\ gate to stress that they can be simultaneously performed.}
    \label{fig:template}
\end{figure}

In our context, the binary words represent the monomials, where a letter takes the value \(1\) when the corresponding variable is contained in the monomial. Hence, a bit flip is implement by a \CNOT\ gate. This is precisely what our simplification heuristic focus on, to rewrite the parts of the circuit in which only \CNOT\ gate appears.

The Gray code on \(n + 1\) bits can be constructed from the Gray code on \(n\) bits as follows. We take twice the Gray code on \(n\) bits, and we reverse the order of the second copy of it. Then, we add the letter \(0\) in front of all words of the first copy (the one that is not reversed), and we add the letter \(1\) in front of all words of the second copy. Finally, we concatenate the two copies. 

\begin{example}
The Gray code on two bits is 
$
\{00, \hspace{4pt} 01, \hspace{4pt} 11, \hspace{4pt} 10\}$. We take two copies, we flip the second one, and we get 
$\{
00, \hspace{4pt} 01, \hspace{4pt} 11, \hspace{4pt} 10 \, |\,  10, \hspace{4pt} 11, \hspace{4pt} 01, \hspace{4pt} 00 
\}$. Then, we add a \(0\) in front of the first four words, and a \(1\) to the last four, and concatenate: 
$\{
000, \hspace{4pt} 001, \hspace{4pt} 011, \hspace{4pt} 010, \hspace{4pt} 110, \hspace{4pt} 111, \hspace{4pt} 101, \hspace{4pt} 100 
\}$, which is indeed the Gray code on three bits.  \Halmos
\end{example}

Notice that, by construction, the leftmost bit of the Gray code is always \(1\) on the second half, and always \(0\) in the first half. Hence, when we add a new variable to the polynomial $z_n$, we only have to deal with it in the second half of the Gray code, which is equivalent to the Gray code with one variable less. This corresponds to the partial factorization of our polynomial. More formally,
\begin{align}
P(z) &= \sum_{\mu \in M} \alpha_\mu \prod_{j=1}^n z_j^{\mu_j} \\ 
&=\sum_{\mu \in M} \alpha_\mu \left(\prod_{\substack{j=1, \mu_n = 1}}^{n} z_j^{\mu_j} + \prod_{\substack{j=1, \mu_n = 0}}^{n} z_j^{\mu_j} \right) = z_n \sum_{\substack{\mu \in M, \mu_n = 1}} \alpha_\mu \prod_{j=1}^{n-1} z_j^{\mu_j} + \sum_{\substack{\mu \in M, \mu_n = 0}} \alpha_\mu \prod_{j=1}^{n-1} z_j^{\mu_j}  \\& =: z_n Q(z) + R(z), \label{eq:QR}
\end{align}
where \(Q\) and \(R\) are two generic polynomials on \(n-1\) variables.

If we implement the monomials of \(P\) in the order given by the Gray code, all monomials of \(R\) will come first. Furthermore, constraining the \(\cnot{}{}\) gates to point only downwards, every monomial of \(z_n Q\) needs to be implemented on the last qubit. The template focuses then on implementing the time-evolution operators for monomials of \(z_n Q\) on the last qubit using the Gray code, and then implements \(R\) elsewhere in the circuit. Because this template is recursive, the configuration of the monomials of \(R\) is already determined. 

Formally, we have the algorithm presented in Algorithm~\ref{alg:cap}, where we have indicated with $\llbracket n+1 \rrbracket := \{1, \ldots, n+1\}$. To illustrate this template, we also depicted  the circuit \(\Te_4\) in Figure~\ref{fig: C4}.

\begin{algorithm}
\caption{Template generator}\label{alg:cap}
\small
\begin{algorithmic}[1]
\Require A template for $\Te_n$ in $n$ qubit and $2^n$ layers. 
\Ensure A template for $\Te_{n+1}$ in $n+1$ qubit and $2^{n+1}$ layers.
\State Initialize \(\Te_{n+1}\) as an empty circuit
\State Add all the \RZ\ gates $\alpha_k$, $k \in \llbracket n+1 \rrbracket$ on the first layer \hfill (layer \(1\))
\State Build \(\Te_n^*\) from \(\Te_n\) by removing its first and last layers
\State Add \(\Te_n^*\) to \(\Te_{n+1}\), on qubits \(\llbracket n-1 \rrbracket \cup \{n+1\}\) \hfill (layers \(2 \to 2^n-1\))
\State Build \(\Te_n^\star\) from \(\Te_{n+1}\) by removing the gates not involving qubit \(n+1\)
\State Add a \(\cnot{}{}\) gate to \(\Te_{n+1}\) controlled by qubit $n$ acting on qubit $n+1$ \hfill (layer \(2^n\))
\State Build \(\Te_n^\circ\) from \(\Te^*_n\) by removing the gates not involving the last qubit
\State Add \(\Te_n^\circ\) to \(\Te_{n+1}\) on the qubits \(\llbracket n \rrbracket\) \hfill (layers \(2^n+1 \to 2^{n+1} - 1\))
\State Add \(\Te_n^\star\) to \(\Te_{n+1}\) on \(\llbracket n-1 \rrbracket \cup \{n+1\}\) in reverse order \hfill (layers \(2^n + 1 \to 2^{n+1} - 1\))
\State Add a \(\cnot{}{}\) gate to \(\Te_{n+1}\) controlled by qubit $n$ acting on qubit $q+1$ \hfill (layer \(2^{n+1}\))
\end{algorithmic}
\end{algorithm}

\begin{figure}
    \centering
    \begin{adjustbox}{width=0.9\textwidth}
    \begin{quantikz}[row sep = {0.7cm,between origins}, column sep = 0.2cm]
    & \gate{\alpha_1} & \qw & \ctrl{3}\gategroup[4, steps = 6, style = {dashed, rounded corners, inner xsep = 2pt}, label style = {label position = below, anchor = north, yshift = -0.2cm}]{\begin{minipage}{4cm} \begin{center} \(\Te_3^*\) on \(\{1, 2, 4\}\) \end{center} \end{minipage}} & \qw & \qw & \ctrl{1} & \ctrl{3} & \ctrl{1} & \qw & \qw & \qw & \ctrl{2}\gategroup[3, steps = 7, style = {dashed, rounded corners, inner xsep = 2pt}, label style = {label position = above, anchor = south, yshift = -0.2cm}]{\(\Te_3^\circ\) on \(\{1, 2, 3\}\) } & \ctrl{3} & \qw & \qw & \ctrl{2} & \ctrl{3} & \qw & \qw & \qw &\qw \\
    & \gate{\alpha_2} & \qw      & \qw & \qw & \ctrl{2} & \targ{} & \gate{\alpha_{12}} & \targ{} & \qw & \qw & \qw & \qw & \qw & \ctrl{1} & \ctrl{2} & \qw & \qw & \ctrl{1} & \qw & \qw & \qw \\
    & \gate{\alpha_3} & \qw      & \qw & \qw & \qw & \qw & \qw & \qw & \qw & \ctrl{1} & \qw & \targ{} & \gate{\alpha_{13}} & \targ{} & \gate{\alpha_{123}} & \targ{} & \gate{\alpha_{23}} & \targ{} & \qw &\ctrl{1} & \qw \\[0.5cm]
    & \gate{\alpha_4} & \qw & \targ{} & \gate{\alpha_{14}} & \targ{} & \gate{\alpha_{124}} & \targ{} & \gate{\alpha_{24}} & \qw & \targ{} & \qw & \gate{\alpha_{234}}\gategroup[1, steps = 7, style = {dashed, rounded corners, inner xsep = 2pt}, label style = {label position = below, anchor = north, yshift = -0.2cm}]{\(\Te^\star_3\) on \(\{1, 2, 4\}\) in reverse order } & \targ{} & \gate{\alpha_{1234}} & \targ{} & \gate{\alpha_{134}} & \targ{} & \gate{\alpha_{34}} & \qw & \targ{} & \qw
    \end{quantikz}
    \end{adjustbox}
    \caption{Template circuit \(\Te_4\). Is useful to interpret Algorithm~\ref{alg:cap} as Gray code applied as considering the $i$-th qubit as ancilla for the Gray code of the polynomial up to $z_i$. The last qubit represent $z_4 Q(z)$, while the first three qubits are $R(z)$, see Equation~\eqref{eq:QR}.}
    \label{fig: C4}
\end{figure}

The circuit \(\Te_4\) is built from three \(\Te_3\) circuits, corresponding to the following decomposition: 
\begin{align} 
P(z) = z_n Q(z) + R(z) & = z_n \left( z_{n-1}\!\! \sum_{\substack{\mu \in M, \mu_n = 1, \mu_{n-1} = 1}} \!\!\alpha_\mu \prod_{j=1}^{n-2} z_j^{\mu_j} + \sum_{\substack{\mu \in M, \mu_n = 1, \mu_{n-1} = 0}} \!\!\alpha_\mu \prod_{j=1}^{n-2} z^{\mu_j}\right) + R(z) \\
& = z_n z_{n-1} Q_1(z) + z_n Q_2(z) + R(z), 
\end{align}
where \(Q_1\) and \(Q_2\) are two generic polynomials on \(n-2\) variables. The template starts by implementing the polynomial \(Q_2\), formed of all non-singleton monomials containing \(z_n\) but not \(z_{n-1}\). Once this is done, we apply a \(\cnot{}{}\) gate which bonds \(z_n\) together with \(z_{n-1}\), because \(Q_1\) is the polynomial of all monomials containing \(z_n z_{n-1}\). Because we no longer acts on the (\(n-1\))-th qubit, the template has the space for implementing \(R\) at the same time as \(Q_1\). Notice that the template, due to its similarity with the Gray code, alternate between \(\cnot{}{}\) gate and \(\rz{}\) gate. So we can implement \(R\) and \(Q_1\) at the same time, by simply delaying the implementation of \(R\) by one layer. 

Once we have generated the circuit on the adequate number of qubits using this template, we feed into each rotation gate the coefficient of the corresponding monomial. If a monomial does not appear in the polynomial, we set the angle to zero and remove the $\rz{}{}$ gate. In general, doing so generate inefficient \(\cnot{}{}\) circuits between consecutive non-trivial rotation gates, and these \(\cnot{}{}\) circuits can be simplified using some rewriting rules, as we explain in the Section~\ref{sec:simplification}. 

For Algorithm~\ref{alg:cap}, we have the following theoretical result. 
\begin{proposition}\label{prop:1}
For polynomial $P(z)$ in $n$ variables, the Template generator in Algorithm~\ref{alg:cap} outputs a circuit $\Te_{n}$ given the preceding circuit $\Te_{n-1}$ that has all the rotation gates we need to represent a polynomial in $z$ of degree $n$ and contains all the products of $Z$'s. It has exactly $2^{n}$ layers and $2^{n}$ unique \RZ\ gates.
\end{proposition}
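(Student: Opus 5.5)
We argue by induction on $n$, with base case $\Te_2$ from Figure~\ref{subfig: C2}. We track, at every layer, the monomial carried by each qubit in the routing interpretation of Section~2. The labelling is over $\mathbb{F}_2$: a qubit carries $\prod_{j\in S} Z_j$, encoded by its indicator vector $\mu_S\in\{0,1\}^n$. A gate $\cnot{i}{j}$ replaces the label of qubit $j$ by the XOR of the labels of qubits $i$ and $j$. An \RZ\ gate on qubit $j$ at a given layer then implements $\exp(-\jmath\alpha\bigotimes_{k\in S}Z_k)$, where $S$ is the current label of qubit $j$. Since all these operators are diagonal and commute, the circuit implements $U_P$ exactly when two conditions hold: every nonempty $S\subseteq\llbracket n\rrbracket$ receives exactly one \RZ\ gate, and every qubit $j$ ends with label $\{j\}$, so that the \CNOT\ network composes to the identity. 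For $\Te_2$ all of this is checked directly: the labels $\{1\},\{2\}$ at layer 1, $\{1,2\}$ on qubit 2 at layer 3, restored at layer 4, and 4 layers in total.

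For the inductive step we assume the property for $\Te_n$ and analyse $\Te_{n+1}$ layer by layer, following Algorithm~\ref{alg:cap}.

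\textbf{Layer 1.} It produces the $n+1$ singletons.

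\textbf{Layers $2$ to $2^n-1$.} These contain $\Te_n^*$ placed on qubits $\llbracket n-1\rrbracket\cup\{n+1\}$. By the induction hypothesis, relabelling the last qubit $n$ as $n+1$, this block realises exactly once each monomial of $\{1,\dots,n-1,n+1\}$ of size at least $2$. Moreover, it leaves the qubit labels unchanged except that its last layer is removed, so qubit $n+1$ still carries $\{c,n+1\}$, where $\cnot{c}{}$ was that removed final \CNOT. Here I will need the auxiliary invariant that in $\Te_n$ the first and last \CNOT\ onto the last qubit come from qubit $n-1$. This holds from the Gray-code structure and is proved within the same induction.

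\textbf{Layer $2^n$.} The gate $\cnot{n}{n+1}$ appends $z_n$ to the label on qubit $n+1$.

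\textbf{Layers $2^n+1$ to $2^{n+1}-1$.} Two blocks run here on disjoint qubits.
\begin{enumerate}
\item $\Te_n^\circ$ runs on qubits $\llbracket n\rrbracket$. These are the monomials of $\Te_n$ not involving its top qubit, which is exactly the $n$-variable polynomial $R$ of \eqref{eq:QR} restricted to degree $\ge2$, with no variable $z_{n+1}$.
\item The reversed $\Te_n^\star$ runs on qubit $n+1$. Since its labels are now shifted by $z_n$, the Gray-code traversal produces exactly the monomials containing both $z_n$ and $z_{n+1}$.
\end{enumerate}

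\textbf{Layer $2^{n+1}$.} The final \CNOT\ removes $z_n$ from qubit $n+1$, and reversal undoes the remaining \CNOTs.

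These layers partition the nonempty subsets of $\llbracket n+1\rrbracket$ into four classes: the singletons; the sets containing $n+1$ but not $n$; the sets containing both $n$ and $n+1$; and the sets of size at least $2$ inside $\llbracket n\rrbracket$. Their counts are $(n+1)+(2^{n}-1-(n-1)-1)+2^{n-1}+\dots$; I will simply verify that the classes are disjoint and that their union has size $2^{n+1}-1$. This gives the uniqueness and completeness claim: $2^{n+1}$ rotations, counting the trivial empty one, or equivalently one slot per layer. The layer count is immediate from the indices in Algorithm~\ref{alg:cap}: $1+(2^n-2)+1+(2^n-1)+1=2^{n+1}$.

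The main obstacle is the precise bookkeeping of the reversed $\Te_n^\star$ block together with the running label of qubit $n+1$. One has to show that reversing a Gray-code walk, starting from the shifted label, visits precisely the complementary half of the code (the prefix-$1$ words in the recursive Gray construction recalled before the Example), and that the parallel $\Te_n^\circ$ never shares a qubit or layer slot with it. I would handle this by stating a stronger induction hypothesis: the sequence of labels on the last qubit of $\Te_n$ is the reflected Gray code on the variables $\{1,\dots,n-1\}$ times $z_n$. Uniqueness and the reversal behaviour then follow from the reflect-and-prefix construction of the Gray code.
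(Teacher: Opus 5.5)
Your framework is sound: induction on $\mathbb{F}_2$ labels, checking that every nonempty subset is hit exactly once and that the \CNOT\ network returns to the identity. If completed, it would prove more than the paper does, since the paper's proof is only a block-by-block count of rotations and layers. The genuine gap is in the step that carries that count. You assert that $\Te_n^\circ$ realises ``exactly the $n$-variable polynomial $R$ restricted to degree $\ge 2$'', i.e.\ every subset of $\llbracket n\rrbracket$ of size at least $2$. That is not what Algorithm~\ref{alg:cap} produces.

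Algorithm~\ref{alg:cap} builds $\Te_n^\circ$ by \emph{keeping} only the gates that involve the last qubit. Since all \CNOTs\ are downward, these are \CNOTs\ into qubit $n$ and rotations on qubit $n$. So qubit $n$ is the only label that changes in this block, and the block realises exactly the $2^{n-1}-1$ sets $\{n\}\cup T$ with $\emptyset\neq T\subseteq\llbracket n-1\rrbracket$. The subsets of $\llbracket n-1\rrbracket$ of size at least $2$ are already produced inside the $\Te_n^*$ block, as your own (correct) description of that block implies. For instance, in $\Te_4$, $\alpha_{12}$ sits on qubit $2$ at layer $6$, while $\Te_3^\circ$ contributes only $\alpha_{13},\alpha_{123},\alpha_{23}$.

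With your attribution these sets are counted twice, and the blocks would total $(n+1)+(2^n-n-1)+2^{n-1}+(2^n-n-1)=2^{n+1}+2^{n-1}-n-1$ rotations. That contradicts both uniqueness and the target $2^{n+1}-1$ for every $n\ge 3$. The error is invisible at $n=2$ only because $\llbracket 1\rrbracket$ has no subset of size $2$, and the elided ``$\dots$'' in your count is exactly where it would surface.

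Your four classes do partition the nonempty subsets, but the blocks map onto them differently:
\begin{itemize}
\item layer $1$ gives the $n+1$ singletons;
\item $\Te_n^*$ gives the $2^n-n-1$ sets of size $\ge 2$ avoiding $n$;
\item $\Te_n^\circ$ gives the $2^{n-1}-1$ sets of size $\ge 2$ containing $n$ but not $n+1$;
\item the reversed $\Te_n^\star$ gives the $2^{n-1}$ sets containing $\{n,n+1\}$.
\end{itemize}
These sum to $2^{n+1}-1$ and reproduce the paper's count. In your indexing, that is $2^n-1$ inherited rotations plus one for $\alpha_{n+1}$, then $2^{n-1}-1$ on qubit $n$ and $2^{n-1}$ on qubit $n+1$.

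The corrected observation also resolves what you call the main obstacle. Because $\Te_n^\circ$ modifies only qubit $n$, the controls of the reversed $\Te_n^\star$, which lie in $\llbracket n-1\rrbracket$, keep their singleton labels throughout. The one-layer offset between the two alternating rotation/\CNOT\ patterns then keeps the gates of the two blocks in distinct slots.

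Two smaller slips. First, your auxiliary invariant is half false: the first \CNOT\ into the last qubit of $\Te_n$ comes from qubit $1$ (it is $\cnot{1}{3}$ in $\Te_3$). Only the last one comes from qubit $n-1$; that half is all you use, and it holds trivially by the final step of Algorithm~\ref{alg:cap}. Second, the $2^{n+1}$ rotations of your conclusion are not ``one slot per layer'': layer $1$ carries $n+1$ rotations, and for instance the two $\cnot{n}{n+1}$ layers carry none. What the construction, and the paper's proof, actually give is $2^{n}-1$ distinct nonempty monomials for $\Te_n$. The $2^n$ in the statement matches only if the empty monomial, a global phase needing no gate, is counted.
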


\begin{proof}We offer the proof in Appendix~\ref{ap:prop:1}.
\end{proof}  

\begin{remark}[A walk on the hypercube and other techniques]
It is useful to interpret Algorithm~\ref{alg:cap} as Gray code applied as considering the $i$-th qubit as ancilla for the Gray code of the polynomial up to $z_i$. The Gray code represents an Hamiltonian path on the hypercube formed by all the possible $2^n$ monomials. A way to reduce the depth of the circuit (in the case of less non-zero monomials) would then be to consider optimal paths on sparser graphs, e.g., Steiner trees/paths. This has been investigated in the quantum computing literature for other compilation tasks~
\citep{vandaele2022phase,de2022dynamic,gavrielov2024linear}, but has the disadvantage to be computationally costly. Since the optimal reduction seems out of reach in general~\citep{acuaviva2026cnotdistance}, in the following, we give a heuristic yet effective and scalable pruning technique based on transvection operations. \Halmos
\end{remark}

\subsection{The pruning method} \label{sec:simplification}

In general, the binary polynomial to encode does not contain every monomial of its variables. Then, when using the template described above, around the rotation gates with a null angle, the layout of \CNOT\ gates is not efficient. So, between each pair of consecutive non-trivial rotation gates, we want to rewrite the circuits made of \CNOT\ gates. Given our template, we focus on circuits solely made of {\bf downward} \CNOT\ gates. We will use group theoretic concepts to facilitate our task. 

Each circuit corresponds to an element of the group generated by the \CNOT\ gates, represented by a given word of \CNOTs\ determined by the template. One unique group element can be written using many different words, of different lengths. The point of the following discussion is to present a method that takes as an input a specific word of \CNOT\ gates, for which the index of the control qubit is always smaller than the index of the target qubit, and that return another word of \CNOT, corresponding to the same group element, in a canonical form, and with at most linear depth. 

The methods presented in this part of the paper resemble standard methods used in the literature, in particular by~\citep{maslov2007linear}, \citep{markov2008optimal}, as well as~\citep{gavrielov2024linear}, even though the goals and the implementation differ\footnote{
We also cite~\citep{Webster2026heuristicoptimal,de2021gaussian} which provide alternative algorithms to \CNOT\ synthesis. The former is typically too slow for our purpose, the latter less performant in practice in term of depth for our application. In general, the authors focus on reducing the total number of gates, rather than having a quick heuristic with bounded linear depth, as we do. So we have founded our simplification very effective in practice. 
}
. In particular, we make express use of the downward nature of the \CNOTs, yielding lower-triangular matrices. 

The method contains two phases. Let \(\mathcal{C}\) be a given circuit only composed of \CNOT\ gates, for instance, a subcircuit comprised between two consecutive \RZ\ gates. The first step is to translate \(\mathcal{C}\) into a matrix \(M_\mathcal{C}\). This matrix encodes the action on the quantum states implemented by the circuit, hence, all the circuits of \CNOT\ implementing the same unitary correspond to the same matrix (note, however, that the matrix \(M_\mathcal{C}\) has nothing to do with the unitary matrix that acts on quantum states, it is a much more compact representation of the \CNOT\ gates). The second step is to transform back the matrix \(M_\mathcal{C}\) into a circuit \(\mathcal{C}_\star\), with several properties: \emph{(i)} \(\mathcal{C}_\star\) is canonical, i.e., two different initial circuits being translated into the same matrix are transformed back as the exact same circuit; \emph{(ii)} \(\mathcal{C}_\star\) is of at most quadratic size and linear depth in the number of qubits, independently of the number of gates in the initial circuit; \emph{(iii)} in general, the circuit \(\mathcal{C}_\star\) is much shorter than the original circuit \(\mathcal{C}\).

\paragraph{From the circuit to the matrix.}

We start by presenting the isomorphism\footnote{A bijection \(f\colon (G_1, \cdot) \to (G_2, *)\) preserving the group structure, i.e., \(f(x \cdot y) = f(x) * f(y)\).} allowing us to manipulate \CNOT\ circuits using much more common objects. Denote by \(\mathbb{F}_2\) the field with two elements, by \(\mathrm{GL}_n(\mathbb{F}_2)\) the group of \(n \times n\) invertible matrices whose entries belongs to \(\mathbb{F}_2\), equipped with the usual matrix product. Denote by \({\bf T}^{(ij)}\) the transvection matrix associated with the pair \((i, j)\) satisfying \(i \neq j\), defined by 
\begin{equation}
{\bf T}^{(ij)}_{kl} = \begin{cases}
1, & \quad \text{if } k = l \text{ or } (k, l) = (i, j), \\
0, & \quad \text{otherwise}. 
\end{cases}
\end{equation}
A transvection matrix is the sum of the identity matrix and a matrix composed of zeros everywhere except at the entry \((i, j)\). Transvections generate \(\mathrm{GL}_n(\mathbb{F}_2)\) as a group~\cite[Theorem~8.8]{rotman1999introduction}. 

\begin{theorem}[{\cite[Theorem~6]{bataille2022quantum}}]\label{th:iso-gl}
The group generated by \CNOT\ gates acting on \(n\) qubits is isomorphic to \(\mathrm{GL}_n(\mathbb{F}_2)\). The group homomorphism\footnote{A map between group preserving the group structure, not necessarily bijective.} sending the \CNOT\ gate \(\cnot{j}{i}\) to the transvection \({\bf T}^{(ij)}\) is a group isomorphism. 
\end{theorem}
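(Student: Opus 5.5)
The plan is to realise the \CNOT\ group concretely as a group of permutations of the computational basis, and to compare that action with left multiplication by matrices of $\mathrm{GL}_n(\mathbb{F}_2)$ on $\mathbb{F}_2^n$. First, I identify each basis state $|x\rangle$, $x\in\mathbb{F}_2^n$, with the column vector $x$. The gate $\cnot{j}{i}$ (control $j$, target $i$) sends $|x\rangle$ to $|x'\rangle$, where $x'_i = x_i + x_j$ and $x'_k = x_k$ for $k\neq i$. This is exactly $x' = {\bf T}^{(ij)}x$, because row $i$ of ${\bf T}^{(ij)}$ has ones in columns $i$ and $j$ and every other row is a unit row. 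Hence the unitary of $\cnot{j}{i}$ is the permutation matrix $P_{{\bf T}^{(ij)}}$, where for $A\in\mathrm{GL}_n(\mathbb{F}_2)$ we set $P_A|x\rangle = |Ax\rangle$.

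Second, I check that $\Phi\colon A\mapsto P_A$ is an injective group homomorphism from $\mathrm{GL}_n(\mathbb{F}_2)$ into $U(2^n)$. Since $A$ is invertible, $x\mapsto Ax$ is a bijection of $\mathbb{F}_2^n$, so $P_A$ is a permutation matrix and therefore unitary. Multiplicativity follows from $P_AP_B|x\rangle = |ABx\rangle = P_{AB}|x\rangle$. For injectivity, if $P_A = I$ then $Ae_k = e_k$ for every standard basis vector, so $A = I$.

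Third, I conclude with a generation argument. The \CNOT\ group is generated by the $P_{{\bf T}^{(ij)}} = \Phi({\bf T}^{(ij)})$. Transvections generate $\mathrm{GL}_n(\mathbb{F}_2)$ (the cited Rotman theorem), so the image of $\Phi$ is exactly the group generated by the \CNOT\ gates. Thus $\Phi$ is an isomorphism from $\mathrm{GL}_n(\mathbb{F}_2)$ onto the \CNOT\ group, and its inverse sends $\cnot{j}{i}$ to ${\bf T}^{(ij)}$. Because a homomorphism defined on a generating set is determined by its values there, this inverse is precisely the map described in the statement, which is therefore well defined and an isomorphism.

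The only delicate point is well-definedness: a word in \CNOT\ gates may represent the same unitary as another word, and one must be sure both words give the same matrix. Defining the map in the direction $\mathrm{GL}_n(\mathbb{F}_2)\to$ unitaries avoids this entirely, since faithfulness of $\Phi$ settles it automatically. The remaining care is bookkeeping of the index convention, namely that the target $i$ corresponds to the row index and the control $j$ to the column index, which I verified in the first step.
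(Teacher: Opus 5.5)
Your proof is correct. The paper gives no argument of its own for this statement; it quotes it from Bataille's Theorem~6, so there is no in-paper route to compare against. What you have written is a self-contained replacement for the citation, and a clean one. Orienting the map from $\mathrm{GL}_n(\mathbb{F}_2)$ towards the unitaries, via the faithful permutation representation $P_A|x\rangle=|Ax\rangle$, makes well-definedness on words of gates automatic.

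It also justifies two conventions the paper later uses without comment:
\begin{itemize}
\item Your first step confirms the index convention $\cnot{j}{i}\mapsto{\bf T}^{(ij)}$ (target $\leftrightarrow$ row, control $\leftrightarrow$ column).
\item Write ${\bf T}_g$ for the transvection of a gate $g$. A circuit with gates $g_1,\dots,g_k$ applied left to right has unitary $U_{g_k}\cdots U_{g_1}$. The homomorphism property of $\Phi^{-1}$ then gives its matrix as ${\bf T}_{g_k}\cdots{\bf T}_{g_1}$. This is exactly the paper's prescription of multiplying the transvections \emph{in the reverse order}.
\end{itemize}

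Your only external input is that the ${\bf T}^{(ij)}$ generate $\mathrm{GL}_n(\mathbb{F}_2)$, the same fact the paper takes from Rotman. Transvections have determinant $1$, so the general theorem concerns $\mathrm{SL}_n$. You should therefore add that $\mathrm{GL}_n(\mathbb{F}_2)=\mathrm{SL}_n(\mathbb{F}_2)$, which holds because $\det A\in\mathbb{F}_2^\times=\{1\}$.

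Alternatively, argue directly. Over $\mathbb{F}_2$, Gaussian elimination needs no scalings and can avoid row swaps (add a lower row having a $1$ in the pivot column). So any invertible $A$ is reduced to $I$ by row additions $R_i\leftarrow R_i+R_j$. Each is left multiplication by the involution ${\bf T}^{(ij)}$, so $A$ is a product of the ${\bf T}^{(ij)}$.
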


So, the action of a \CNOT\ circuit is represented as an invertible \(n \times n\) matrix over \(\mathbb{F}_2\). From a given \CNOT\ circuit, it is easy to construct this matrix: we simply take the product of all transvections of the \CNOT\ gate of the circuit, in the reverse order. 

\begin{example} 
Consider the following circuit and its corresponding element of \(\mathrm{GL}_3(\mathbb{F}_2)\). We consider the \CNOT\ gates from left to right, and we take the product of the corresponding transvection from right to left: 
\[
\begin{quantikz}[row sep = {0.7cm,between origins}, column sep = 0.2cm]
    & \qw & \targ{} & \qw & \qw\\
    & \targ{} & \qw & \ctrl{1} & \qw \\
    & \ctrl{-1} & \ctrl{-2} & \targ{} & \qw \\
\end{quantikz}
\qquad 
\raisebox{0.25cm}{%
$\Rightarrow$ \qquad
\begingroup
\renewcommand*{\arraystretch}{0.7}
$A = {\bf T}^{(32)} {\bf T}^{(13)} {\bf T}^{(23)}
= \begin{pmatrix} 1 & 0 & 0 \\ 0 & 1 & 0 \\ 0 & 1 & 1 \end{pmatrix}
\begin{pmatrix} 1 & 0 & 1 \\ 0 & 1 & 0 \\ 0 & 0 & 1 \end{pmatrix}
\begin{pmatrix} 1 & 0 & 0 \\ 0 & 1 & 1 \\ 0 & 0 & 1 \end{pmatrix}
= \begin{pmatrix} 1 & 0 & 1 \\ 0 & 1 & 1 \\ 0 & 1 & 0 \end{pmatrix}.$
\endgroup
}\Halmos
\]
\end{example}

The isomorphism described in Theorem~\ref{th:iso-gl} specializes conveniently when we consider circuits containing only downward \CNOT\ gates. Indeed, the image of the group of circuits with downward \CNOT\ gates is the group of lower-triangular invertible matrices over \(\mathbb{F}_2\), a subgroup of \(\mathrm{GL}_n(\mathbb{F}_2)\). This subgroup is an example of a Borel subgroup of \(\mathrm{GL}_n(\mathbb{F}_2)\), that are well-studied and well-structured, and it is precisely this structure that we leverage in the following. 

\paragraph{From the matrix to the circuit.} 

The task now is to be able to read out a \CNOT\ circuit from an element of \(\mathrm{GL}_n(\mathbb{F}_2)\). In general, this task is much more complicated than in the opposite direction. However, this can be done easily for triangular matrices~\cite[Theorem~4]{laplace2025variational}. Let \((Q, \prec)\) be the ordered set with \(Q = \{(i, j) \mid 1 \leq j < i \leq n\}\), equipped with the anti-lexicographic order \(\prec\) given by 
$
(i_1, j_1) \prec (i_2, j_2)$ if and only if $[i_1 > i_2] \hspace{6pt} \text{or} \hspace{6pt} [i_1 = i_2 \text{ and } j_1 > j_2]
$. This is a total order on \(Q\). Denote by \(w_\star\) the word obtained by multiplying, on the left, the transvections indexed by \(Q\) in the order of \(\prec\):
$
w_\star = {\bf T}^{(21)} {\bf T}^{(31)} {\bf T}^{(32)} {\bf T}^{(41)} \cdots {\bf T}^{(n-1 \hspace{2pt} n-2)} {\bf T}^{(n1)} \cdots {\bf T}^{(n \hspace{2pt} n-2)} {\bf T}^{(n \hspace{2pt} n-1)}\,.
$

\begin{theorem}[{\cite[Theorem~4]{laplace2025variational}}]
    Any lower-triangular matrix in \(\mathrm{GL}_n(\mathbb{F}_2)\) can be written as a subword of \(w_\star\).
\end{theorem}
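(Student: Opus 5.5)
The plan is a direct factorization argument that exploits the nilpotent structure of strictly lower-triangular matrices over \(\mathbb{F}_2\). Let \(L \in \mathrm{GL}_n(\mathbb{F}_2)\) be lower-triangular. Since \(L\) is invertible and \(\det L = \prod_k L_{kk}\), every diagonal entry equals \(1\). We therefore write \(L = I + N\) with \(N = \sum_{i>j} l_{ij} E_{ij}\), where \(E_{ij}\) denotes the matrix unit and \(l_{ij} \in \{0,1\}\). We also note \({\bf T}^{(ij)} = I + E_{ij}\) and use throughout the elementary rule \(E_{ij}E_{kl} = \delta_{jk} E_{il}\).

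\textbf{Step 1 (one row at a time).} For each \(i \in \{2,\dots,n\}\) we set \(N_i = \sum_{j<i} l_{ij} E_{ij}\), the part of \(N\) living in row \(i\), and \(A_i = I + N_i\). We claim that
\[
A_i = \big({\bf T}^{(i1)}\big)^{l_{i1}} \big({\bf T}^{(i2)}\big)^{l_{i2}} \cdots \big({\bf T}^{(i\,i-1)}\big)^{l_{i\,i-1}},
\]
which is exactly a subword of the block \({\bf T}^{(i1)} \cdots {\bf T}^{(i\,i-1)}\) of \(w_\star\). Indeed, for \(j,k<i\) we have \(E_{ij}E_{ik} = \delta_{ji}E_{ik} = 0\), since \(j \neq i\). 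Hence in expanding the product all cross terms vanish, and the product equals \(I + \sum_j l_{ij}E_{ij}\). In particular the order inside the block does not even matter.

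\textbf{Step 2 (gluing the rows).} We show \(A_2 A_3 \cdots A_n = L\). Expanding \(\prod_{i=2}^n (I+N_i)\) in increasing order of \(i\), every cross term contains a factor \(N_i N_k\) with \(i<k\). Now \(N_i N_k\) is a sum of terms \(E_{ij}E_{kl}\) with \(j<i\), and such a term is nonzero only if \(j=k\). That would force \(k<i\), contradicting \(i<k\). So all cross terms vanish, and the product equals \(I + \sum_i N_i = I + N = L\).

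Concatenating the blocks for \(i=2,\dots,n\) in this order reproduces the ordering of the displayed word \(w_\star = {\bf T}^{(21)}\,{\bf T}^{(31)}{\bf T}^{(32)} \cdots {\bf T}^{(n1)} \cdots {\bf T}^{(n\,n-1)}\). Keeping letter \({\bf T}^{(ij)}\) exactly when \(l_{ij}=1\) therefore gives \(L\) as a subword of \(w_\star\). As a by-product, the subword is unique, which is the canonicity property (i) used later for the pruning.

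The only delicate point I expect is bookkeeping of conventions: whether "multiplying on the left in the order \(\prec\)" yields precisely the displayed left-to-right word, and whether the isomorphism of Theorem~\ref{th:iso-gl} (\(\cnot{j}{i} \mapsto {\bf T}^{(ij)}\), with products taken in reverse circuit order) is consistent with it. I would fix the displayed word as the definition and verify Step 2 with that order. If the reverse order were intended instead, the same argument needs rechecking, because \(N_k N_i\) with \(i<k\) can be nonzero (the entry \(E_{ki}\) hits row \(i\)). In that case I would fall back on an induction on \(n\): peel off the last row using right multiplication by column operations. I expect this to go through, but it is less clean than the argument above.
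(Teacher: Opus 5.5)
Your proof is correct and complete. Note that the paper gives no argument for this statement. It imports it from \cite[Theorem~4]{laplace2025variational} and then asserts, ``thanks to the specific ordering in $Q$'', that the representing subword is exactly the one indexed by $Q_M$ in \eqref{eq:q-m}.

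Your matrix-unit computation replaces the citation with a self-contained proof and gives more. Steps 1 and 2 hold for an arbitrary choice of exponents $l_{ij}\in\{0,1\}$, so every subword of $w_\star$ multiplies out to $I+\sum_{\text{selected }(i,j)}E_{ij}$. Subwords are therefore in bijection with lower unitriangular matrices, read off from their below-diagonal entries. This gives at once:
\begin{itemize}
\item existence;
\item the explicit recipe $Q_M$;
\item the canonicity property (i) used by the pruning step, which the paper states without justification.
\end{itemize}
The citation buys only brevity.

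Your convention worry resolves in your favor. $\prec$ puts larger row indices first and, within a row, larger column indices first, so the $\prec$-minimal pair is $(n,n-1)$. Start from ${\bf T}^{(n\,n-1)}$ and multiply each subsequent transvection on the left. This produces exactly the displayed word ${\bf T}^{(21)}{\bf T}^{(31)}{\bf T}^{(32)}\cdots{\bf T}^{(n\,n-1)}$, so the order you used in Step 2 is the intended one. The appendix example agrees: the entries $(2,1),(3,2)$ give $\cnot{2}{3}$ followed by $\cnot{1}{2}$, consistent with Theorem~\ref{th:iso-gl}. The circuit convention does not affect the statement itself, which is pure matrix algebra.

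Your fallback induction is also unnecessary. Each row block satisfies $A_i^2=I+N_i^2=I$, and its letters pairwise commute because $E_{ij}E_{ik}=0$ for $j,k<i$. So take the subword giving $L^{-1}=A_2\cdots A_n$ and reverse it: this gives $A_n\cdots A_2=(A_2\cdots A_n)^{-1}=L$ inside the reversed word. Existence therefore holds under either convention. Only the ``read off the entries of $M$'' description depends on the order.
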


Let \(M\) be an lower-triangular element of \(\mathrm{GL}_n(\mathbb{F}_2)\). Using the theorem above, we have that there exists a word \(w_M \le w_\star\) that represents this element \(M\). Thanks to the specific ordering in \(Q\), this subword is exactly the ordered product of the elements of the following set: 
\begin{equation}\label{eq:q-m}
Q_M = \left\{ {\bf T}^{(ij)} \mid M_{ij} = 1, \text{ for all } i > j \right\}.
\end{equation}
To get the corresponding circuit, we read the word from right to left, and for each transvection, we add the corresponding \CNOT\ gate. The whole pruning procedure is described in Algorithm~\ref{alg:simplification}.

\begin{algorithm}
\caption{Pruning procedure}\label{alg:simplification}
\small
\begin{algorithmic}[1]
\Require A circuit \(C\) only containing \(\cnot{}{}\) and \(\rz{}\) gates
\Ensure Another circuit \(C'\) only containing \(\cnot{}{}\) and \(\rz{}\) gates
\Function{Simplify}{$C$}
    \State \(c \gets \) empty circuit
    \For{$0 \leq \ell \leq$ depth of $C$} 
        \If{layer $\ell$ only contains $\cnot{}{}$ gates}
            \State add layer \(\ell\) to \(c\)
        \Else{}
            \State add layer \(\ell\) to \(c\) without the \(\rz{}\) gate(s)
            \State \(M \gets\) the identity matrix
            \For{$0 \leq k \leq$ depth of $c$}
                \For{each $\cnot{i}{j}$ in layer $k$ of $c$}
                    \State \(M \gets {\bf T}^{(ji)} M\)
                \EndFor
            \EndFor
            \State \(c' \gets\) empty circuit
            \For{$(i, j)$ in $Q_M$ read in the order given by \(\prec\)} \Comment{see Equation~\eqref{eq:q-m}}
                \State add gate \(\cnot{j}{i}\) to \(c'\)
            \EndFor
            \State concatenate \(c'\) to the right of \(C'\)
            \State add to \(C'\) the \(\rz{}\) gates of layer \(\ell\) of \(C\)
            \State \(c \gets\) empty circuit
        \EndIf
    \EndFor
    \State \textbf{return} \(C'\)
\EndFunction
\end{algorithmic}
\end{algorithm}

To summarize, we have seen for now that the group of \(\cnot{}{}\) circuits is isomorphic to \(\mathrm{GL}_n(\mathbb{F}_2)\), and the \(\cnot{}{}\) circuits extracted from the template circuits described previously form the subgroup of lower-triangular matrices. Then, each lower-triangular element of \(\mathrm{GL}_n(\mathbb{F}_2)\) can be written as a {\bf linear depth}, quadratic size circuit of \CNOT\ gates (see~\cite{laplace2025variational}).

\subsection{Putting things together: \VT}

The algorithm \VT\ (the name comes from a letter shuffling based on the lengthy template-pruning algorithm), reported in Algorithm~\ref{alg:VT} describes the full template-pruning algorithm. 
in Appendix~\ref{app.examples}), we offer some examples of how \VT\ behaves in simple cases.  For \VT, we have the following depth worst-case result. 

\begin{algorithm}
\caption{\VT}\label{alg:VT}
\small
\begin{algorithmic}[1]
\Require An instance $P$ of HUBO with $n$ variables, degree $d$, with $m$ monomials. 
\Ensure A circuit $C'$ of worst-case depth $O(m n)$
\State Build the template $\Te_n$ using Algorithm~\ref{alg:cap} and remove from it the rotation gates corresponding to monomials not in $P$. Obtain a circuit $C$
\State $C'$= Simplify($C$) \Comment{defined in Algorithm~\ref{alg:simplification}}
\end{algorithmic}
\end{algorithm}

\begin{proposition}\label{prop.2}
For polynomial $P(z)$ in $n$ variables, $m$ non-zero monomials, of degree $d$, the algorithm defined as \emph{\VT}\ outputs a circuit that has all the rotation gates we need to represent $P(z)$. It has at most $O(m n)$ layers.
\end{proposition}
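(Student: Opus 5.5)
The proof has two parts: correctness (every needed rotation is present and the circuit implements $U_P$), and the depth bound, which follows from counting the \CNOT\ blocks left after pruning.

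\emph{Correctness.} By Proposition~\ref{prop:1}, the template $\Te_n$ implements $\exp(-\jmath H_P(Z))$ for a generic polynomial in $n$ variables. It contains exactly one \RZ\ gate per subset $I\subseteq\llbracket n\rrbracket$. When that gate is applied, the qubit carrying it holds the parity $\bigotimes_{i\in I}Z_i$. Removing the \RZ\ gates of monomials absent from $P$ is the same as setting their angles to zero, so the resulting circuit $C$ implements $U_P$ exactly and keeps every rotation of a monomial of $P$.

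Next I will show that Simplify preserves the action of $C$. The procedure cuts $C$ into maximal \CNOT-only blocks separated by the layers carrying the surviving \RZ\ gates. Each such block $c$ is built from template \CNOTs. These are all downward, as stated in the discussion preceding Algorithm~\ref{alg:simplification}. Its matrix $M$ is therefore lower-triangular in $\mathrm{GL}_n(\mathbb{F}_2)$. By Theorem~\ref{th:iso-gl}, two \CNOT\ circuits with the same matrix implement the same unitary. By the subword theorem of \cite{laplace2025variational}, the product over $Q_M$ taken in the order $\prec$ equals $M$, so $c'$ implements the same unitary as $c$. Consequently, each qubit carries the same parity before every kept \RZ\ gate in $C'$ as in $C$, and $C'$ implements $U_P$.

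One subtlety concerns the layers where \CNOTs\ and \RZ\ gates sit together. In the template, a \RZ\ gate and a \CNOT\ that share a layer act on disjoint qubits. Moving those \CNOTs\ into the next block therefore commutes them past the rotation without changing the parity the rotation sees. I would state this explicitly; I expect it to be the main point that needs care, including the final trailing \CNOT\ block, which Algorithm~\ref{alg:simplification} must also flush and canonicalize.

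\emph{Depth.} The layers of $C$ carrying \RZ\ gates are those of the $m$ monomials, plus the first layer, which holds the singleton rotations. Hence $C'$ consists of at most $m+1$ \CNOT\ blocks $c'$ interleaved with at most $m+1$ rotation layers. Each $c'$ is the canonical subword of $w_\star$ for a lower-triangular matrix. By \cite{laplace2025variational}, its depth is $O(n)$ (linear depth, quadratic size). Summing over the blocks gives $O((m+1)n)+m+1=O(mn)$ layers.

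If the linear-depth claim has to be justified directly rather than cited, I would group $w_\star$ by target row $i$: the gates $\cnot{j}{i}$ for $j<i$. Each group by itself is sequential, so the saving must come from pipelining across groups: gates with distinct controls and distinct targets run in parallel. I would try a diagonal scheduling that places $\cnot{j}{i}$ at time roughly $i+j$ and check that this respects the non-commuting orderings. That is the step I would expect to be hardest if proved from scratch.
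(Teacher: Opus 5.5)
Your proof is correct and follows the paper's route: at most $m+1$ \CNOT-only blocks separated by rotation layers, each canonicalized through the lower-triangular transvection word into $O(n)$ depth (the paper states at most $2n$), for $O(mn)$ layers in total. Beyond the paper's sketch, you give an explicit correctness argument, your count of rotation layers does not need the paper's claim that rotations are never parallel, and you note that Algorithm~\ref{alg:simplification} must also flush the trailing block; the one slip is that the algorithm puts the \CNOTs\ of a mixed layer into the current block before the rotation, not the next block, but your disjoint-qubit argument covers either direction.
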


\begin{proof} We offer the proof in Appendix~\ref{ap:prop:2}. 
\end{proof}

Proposition~\ref{prop.2} offers a worst-case depth bound for \VT. The bound is really a worst-case in $n$, but rather tight in $m$, due to the Gray code construction. In fact, a naive implementation of a Gray code, we can achieve $\sum_{\mu \in M} 2 \lvert \mu \rvert - 1 \leq 2 m d = O(md)$ easily by arbitrarily ordering the monomials, and, for each monomial $\mu$, add $|\mu|-1$ \CNOT\ gates to it, put $\mu$ on the qbits having max degree and resetting the qbits with $|\mu|-1$ \CNOT\ gates in the reversed order. We see that in practice, \VT\ is better than the naive $O(m d)$. 
These bounds are to be compared with the $m 2^d$ of~\cite{verchere2023optimizing}, which is rather tight in both $m$ and $d$, giving us an important improvement. To further leverage parallelism and reduce the $m$ dependency, we turn now to binary linear programs.

\subsection{An efficient diagonal ansatz and more}

\VT\ is an efficient template-pruning method to compile binary polynomials for the use in QAOA. But this is not all. The fact that the circuit is parametric in the $\alpha$'s leaves us the opportunity to do a bit more. In particular, we highlight three complementary results. 

First, \VT\ offers an efficient ansatz for diagonal operators in the context of approximate compilation~\citep{madden2022best,zylberman2025efficient}. This means that $\alpha$'s in the circuit can be optimized to match a given target diagonal unitary matrix. One can decide to use the full template (matching the depth bound of $2^n$) to obtain an exact compilation. Or decide to use monomials up to degree $d$, obtaining an approximate compilation in the spirit of~\cite{welch2014efficient} and \cite{zylberman2025efficient}. 

Second, \VT\ can be readily used in multi-angle QAOA~\citep{herrman2022multi}, where all the angles are optimized, and not only a $\gamma$ per layer. 

Third, a \VT\ truncated at depth $d=2$ (or higher) can be readily used in approximate quadratization QAOA~\citep{druagoi2026approximate}, where the ansatz is fully optimized and it has a mismatch with the cost.

\section{Binary Linear Programming models} \label{sec:ILP}

We now formulate $\pbac$ as a binary linear program. Let \(n\) be the number of variables in the problem and \(M\) be the set of monomials ($m = |M|$). The maximum circuit depth considered is denoted by \(T\). This parameter is set as the minimum between the depth given by the template \VT \ described above, and twice the sum of the degrees of the monomials minus 1, i.e., \(\sum_{\mu \in M} 2 \lvert \mu \rvert - 1 = O(md)\), which represents the naive implementation of all monomials sequentially. Crucially, in the worst-case, this depth is linear in both $m$ and $d$. For handy notation we set \(\llbracket \ell, m \rrbracket \coloneqq \{\ell, \ldots, m\}\) and \(\llbracket m \rrbracket \coloneqq \llbracket 1, m \rrbracket\) for any integers \(1 \leq \ell \leq m\). 

Then, we can formulate $\pbac$ as the following binary linear program. 

\footnotesize
\begin{subnumcases}{\CDP \label{mod:milp}}
    \min ~ J(a) \coloneqq \sum_{r \in \llbracket T \rrbracket} a^r  \label{eq:obj}\\
     \qquad a^{r-1} \geq a^r, & $\forall~r \in \llbracket T \rrbracket \setminus 1. $  \label{eq:sym}\\ 
   \qquad b^0_{k, k}  = 1, \, b^T_{k, k}  = 1, &  $\forall~k \in \llbracket n \rrbracket, $ \label{eq:init1}\\
     \qquad b^0_{k, p}  = 0, \, b^T_{k, p}  = 0,& $\forall~k, p \in \llbracket n \rrbracket$ with $k \neq p.$ \label{eq:init2}\\ 
    \qquad \sum_{\ell \in \llbracket n \rrbracket} \left( c^r_{k, \ell} + c^r_{\ell, k} \right) + \sum_{\mu \in M} g^r_{k, \mu} \leq a^r, & $\forall~r \in \llbracket T \rrbracket$ $\forall~k \in \llbracket n \rrbracket.$ \label{eq:activity}\\ 
    \qquad c^r_{k, k} = 0, & $\forall~r \in \llbracket T \rrbracket$ and $\forall~k \in \llbracket n \rrbracket. $ \label{eq:noauto}\\
    \qquad \sum_{r \in \llbracket T \rrbracket} \sum_{k \in \llbracket n \rrbracket} g^r_{k, \mu} = 1, & $\forall~\mu \in M$. \label{eq:compute}\\
    \qquad g^r_{k, \mu} \leq b^{r-1}_{k, p}, & $\forall~p \in \mu$,  $\forall~r \in \llbracket T \rrbracket$, $\forall~k \in \llbracket n \rrbracket$ and $\forall~ \mu \in M$. \label{eq:valid1}\\
    \qquad g^r_{k, \mu} \leq 1 - b^{r-1}_{k, p},  & $\forall~p \in \llbracket n \rrbracket \setminus \mu$,  $\forall~r \in \llbracket T \rrbracket$, $\forall~k \in \llbracket n \rrbracket$ and $\forall~ \mu \in M$. \label{eq:valid2}\\
    \qquad --------------- \nonumber\\
    \qquad b^r_{k, p} \geq  b^{r-1}_{k, p} - \Gamma^r_{k,p}    & $\forall~r \in \llbracket T \rrbracket$, $\forall~k,p \in \llbracket n \rrbracket$ \label{eq:propagation1}\\
    \qquad b^r_{k, p}\geq   \Gamma^r_{k,p} - b^{r-1}_{k, p}   & $\forall~r \in \llbracket T \rrbracket$, $\forall~k,p \in \llbracket n \rrbracket$ \label{eq:propagation2} \\
    \qquad b^r_{k, p}\leq   b^{r-1}_{k, p} + \Gamma^r_{k,p}   & $\forall~r \in \llbracket T \rrbracket$, $\forall~k,p \in \llbracket n \rrbracket$ \label{eq:propagation3} \\
    \qquad b^r_{k, p} \leq 2 - b^{r-1}_{k, p} - \Gamma^r_{k,p}   & $\forall~r \in \llbracket T \rrbracket$, $\forall~k,p \in \llbracket n \rrbracket$ \label{eq:propagation4} \\
        \qquad --------------- \nonumber\\
    \qquad \Gamma^r_{k,p} = \sum_{\ell \in \llbracket n \rrbracket} \gamma^r_{\ell, k,p} & $\forall~r \in \llbracket T \rrbracket$, $\forall~k,p \in \llbracket n \rrbracket$ \label{eq:lin0} \\
        \qquad --------------- \nonumber\\
    \qquad \gamma^r_{\ell, k,p} \leq c^r_{\ell, k} & $\forall~r \in \llbracket T \rrbracket$, $\forall~k,p,\ell \in \llbracket n \rrbracket$ \label{eq:lin1} \\
    \qquad \gamma^r_{\ell, k,p} \leq b^{r-1}_{\ell, p} & $\forall~r \in \llbracket T \rrbracket$, $\forall~k,p,\ell \in \llbracket n \rrbracket$ \label{eq:lin2} \\
    \qquad \gamma^r_{\ell, k,p} \geq c^r_{\ell, k} + b^{r-1}_{\ell, p} -1 & $\forall~r \in \llbracket T \rrbracket$, $\forall~k,p,\ell \in \llbracket n \rrbracket$ \label{eq:lin3} \\
    \qquad --------------- \nonumber\\
    \qquad a,~c,~b,~g,~\gamma,~\Gamma & binary variables \nonumber
\end{subnumcases}
\normalsize
\smallskip

We have introduced the following supporting binary variables taking values in \(\{0, 1\}\):
\begin{itemize}
\item \(a^r\): \(1\) if and only if  something happens in the circuit at layer \(r\), 
\item \(c^r_{k, \ell}\): \(1\) if and only if  there is a \(\cnot{}{}\) gate from qubit \(k\) to qubit \(\ell\) on layer \(r\), 
\item \(g^r_{k, \mu}\): \(1\) if and only if  there is a gate corresponding to monomial \(\mu\) on qubit \(k\) on layer \(r\), 
\item \(b^r_{k, p}\): \(1\) if and only if  the \(p\)-th variable \textbf{is taken into account on qubit }\(k\) (via sequences of \(\cnot{}{}\) gates) between layer \(r\) and $r+1$, after applying the gates on layer $r$. For notation convenience  $b^0_{k, p}$ represents the $\rz{}{}$ applied at the beginning of the circuit. 
\end{itemize}

The objective function~\eqref{eq:obj} aims to minimize the sum of active layers. 
Constraints~\eqref{eq:sym} are symmetry-breaking inequalities that will avoid empty layers unless all the monomials have been taken into account. 
Constraints~\eqref{eq:init1} and~\eqref{eq:init2} state the initial and final conditions of the circuit. On the first layer, each qubit carries only monomials of degree one, corresponding to the variable of the same index. 
On the last layer, uncomputing is required, as this circuit will be part of a larger circuit. 
Constraints~\eqref{eq:activity} model the fact that a layer is active if there is a \(\cnot{}{}\) or a rotation gate on that layer, and a qubit cannot be acted on by more than one gate per layer. 
The fact that a qubit cannot control itself through the use of a \(\cnot{}{}\) gate is enforced by Constraints~\eqref{eq:noauto}. 
With Constraints~\eqref{eq:compute} all the monomials are implemented exactly once.
Constraints~\eqref{eq:valid1} and~\eqref{eq:valid2} enforce rotation validity. A rotation corresponding to a given monomial can only be implemented if that monomial is properly represented on a given qubit.

The following constraints~\eqref{eq:propagation1}-\eqref{eq:lin3} implement the propagation rule. This rule can be written as the non-linear Equation~\eqref{eq: propagation} in the initial variables. The behavior of \CNOT\ gates, and the overall propagation of terms throughout the circuit, can be  explained as a binary sum \(\oplus_2\): 
\begin{equation}\label{eq: propagation}
b^r_{k, p} = b^{r-1}_{k, p} \oplus_2 \sum_{\ell \in \llbracket n \rrbracket} c^r_{\ell, k} b^{r-1}_{\ell, p}, \quad \text{for all layers } r \in \llbracket T \rrbracket, \text{ and qubits and variables } k, p \in \llbracket n \rrbracket.
\end{equation}
which has been translated  in~\cite{verchere2023optimizing} as the following equation:
\begin{equation}\label{eq: propagation2}
b^r_{k, p} = b^{r-1}_{k, p} + \sum_{\ell \in \llbracket n \rrbracket} c^r_{\ell, k} b^{r-1}_{\ell, p} - 2 b^{r-1}_{k, p} \sum_{\ell \in \llbracket n \rrbracket} c^r_{\ell, k} b^{r-1}_{\ell, p}.
\end{equation}

In this paper, we adopt a different writing that will lead to a better computational behavior
\begin{equation}\label{eq: propagation3}
b^r_{k, p} = \left( b^{r-1}_{k, p} - \sum_{\ell \in \llbracket n \rrbracket} c^r_{\ell, k} b^{r-1}_{\ell, p}  \right)^2 \,.
\end{equation}
Equation~\eqref{eq: propagation3} is a quartic polynomial expression. In order to linearize it, we introduce the binary variables:
\begin{itemize}
    \item $\gamma$ that will satisfy $\gamma^r_{\ell, k,p} = c^r_{\ell, k}  b^{r-1}_{\ell, p}$ and this last equality is enforced through Constraints~\eqref{eq:lin1}-\eqref{eq:lin3}
    \item $\Gamma$ that will satisfy Equation~\eqref{eq:lin0}. Observe that, from Constraints~\eqref{eq:activity}, at most one of the $\gamma$ variables in this summation is equal to one so $\Gamma^r_{k,p}$ is also a binary variable.
\end{itemize}
Finally, Equation~\eqref{eq: propagation3} is enforced by the linear equations~\eqref{eq:propagation1}-\eqref{eq:propagation3}.

\begin{proposition}
\CDP\ features \(O(T n^3 + T n m)\) binary variables and \(O(T n^3+ T n^2 m)\) constraints. Since $T$ is at worst $O(m d)$, both counts are polynomial in the input.  
\end{proposition}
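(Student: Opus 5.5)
The argument is a direct count: go through the variable families and the constraint families of \CDP\ one by one, bound each by the size of its index set, and add the bounds.

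\textbf{Variables.} We count by index set:
\begin{itemize}
\item $a^r$ gives $T$ variables.
\item $c^r_{k,\ell}$ gives $Tn^2$.
\item $g^r_{k,\mu}$ gives $Tnm$.
\item $b^r_{k,p}$ gives $(T+1)n^2$.
\item $\Gamma^r_{k,p}$ gives $Tn^2$.
\item $\gamma^r_{\ell,k,p}$ gives $Tn^3$.
\end{itemize}
The $\gamma$ family dominates the terms that do not involve $m$, and the $g$ family contributes the $Tnm$ term. Summing gives $O(Tn^3+Tnm)$.

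\textbf{Constraints.} We count family by family:
\begin{itemize}
\item \eqref{eq:sym} contributes $T-1$.
\item \eqref{eq:init1}--\eqref{eq:init2} contribute $O(n^2)$.
\item \eqref{eq:activity} and \eqref{eq:noauto} contribute $Tn$ each.
\item \eqref{eq:compute} contributes $m$.
\item \eqref{eq:propagation1}--\eqref{eq:propagation4} and \eqref{eq:lin0} contribute $O(Tn^2)$.
\item \eqref{eq:lin1}--\eqref{eq:lin3} contribute $3Tn^3$.
\end{itemize}
The only family that needs care is \eqref{eq:valid1}--\eqref{eq:valid2}. For fixed $r,k,\mu$, these two families together range over all $p\in\mu$ and all $p\in\llbracket n\rrbracket\setminus\mu$. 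That is exactly $n$ constraints, so this pair contributes $Tn\cdot n\cdot m = Tn^2m$ in total. This is the term that yields the $Tn^2m$ part of the bound. Summing all families gives $O(Tn^3+Tn^2m)$.

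\textbf{Polynomiality.} By construction, $T$ is set to at most the naive depth
$$\sum_{\mu\in M}\bigl(2\lvert\mu\rvert-1\bigr)\le 2md.$$
Hence $T=O(md)$. Since $d\le n$, substituting this into both counts yields bounds polynomial in $n$ and $m$, which are the input size parameters.

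None of these steps is a real obstacle. The only point that needs care is the validity constraints \eqref{eq:valid1}--\eqref{eq:valid2}: one must note that their quantifiers jointly range over all $n$ values of $p$ for each $(r,k,\mu)$. This is why the constraint bound carries $n^2m$ while the variable bound carries only $nm$.
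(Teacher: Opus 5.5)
Your proposal is correct and is the same family-by-family count that the paper compresses into ``By direct computation.'' The terms $Tn^3$ (from $\gamma$ and \eqref{eq:lin1}--\eqref{eq:lin3}), $Tnm$ (from $g$) and $Tn^2m$ (from \eqref{eq:valid1}--\eqref{eq:valid2} jointly covering all $n$ values of $p$) are the dominant ones. Together with $T\le\sum_{\mu\in M}(2\lvert\mu\rvert-1)\le 2md$ and $d\le n$, they give exactly the stated bounds and polynomiality.
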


\begin{proof}
By direct computation. 
\end{proof}

As one can expect, the problem \CDP \ (depicted in the model~\eqref{mod:milp}) is difficult to solve to optimality in reasonable amounts of time. To tackle this problem, we add some additional constraints to restrict the set of solutions to explore.

\subsection{Downward \CNOT\ gates (\TB)}

To simplify the routing of the variables throughout the circuit, we can ask for only having \CNOT\ gates such that the index of the control qubit is smaller than the index of the target qubits: 
\begin{equation}\label{eq: downward}
c^r_{\ell, k} = 0, \quad \text{for all layers } r \in \llbracket T \rrbracket \text{ and for all pairs of qubits } \ell, k \in \llbracket n \rrbracket \text{ such that } k > \ell. 
\end{equation}

\subsection{Downward \CNOT\ gates with fixed rotation gates (\TBFM)}

To further simplify, we set the rotation gates placement to be the same as in the template described at the previous section, while keeping the new constraints on downward \CNOT\ gates (Equation~\eqref{eq: downward}). The upper bound \(T\) is set to be the depth of the circuit constructed by the template, and the constraint about empty layers (Equation~\eqref{eq:sym}) is dropped. We define a map \(\tau\) as follows: 
\[ \begin{aligned} 
\tau\colon \llbracket T \rrbracket \times \llbracket n \rrbracket \times M & \longrightarrow \{0, 1\} \\
(r, k, \mu) & \longmapsto \tau(r, k, \mu) = \begin{cases}
1, & \text{if the template puts a rotation gate at layer } r \\[-0.15cm] & \text{ on qubit } k \text{ corresponding to the monomial } \mu, \\
0, & \text{otherwise}. 
\end{cases}
\end{aligned} \] 
We have the additional new constraint: 
\begin{equation}
g^r_{k, \mu} = \tau(r, k, \mu), \quad \text{for all layers } r \in \llbracket T \rrbracket, \text{ all qubits } k \in \llbracket n \rrbracket \text{ and all monomials } \mu \in M. 
\end{equation}
Note that the objective is not affected by this change, since only active layers are considered there. 

\section{Performance evaluation}\label{sec:results}

We evaluate the different compiling methods over random and low autocorrelation binary sequences (LABS) instances. In both cases, we compare the following algorithms: Qiskit transpilation algorithm (version 2.3.0), \texttt{tket} transpilation algorithm (version 2.18.1), \VT, a version of \VT\ to which we add a Qiskit pass after the algorithm and we call Q-\VT, as well as the binary linear programs \CDP, \TB, \TBFM. We give a $10$ minute time-out constraint, and if reached Gurobi was asked to yield the best solution found or nothing if it had not found one. 

We use Gurobi 13.0.1 for solving the binary linear program. We report runtime when the code is run on a MacBook Pro 2020, 2GHz quad-core Intel Core i5 with 32GB RAM. The depth of our methods is computed by importing the resulting circuit in Qiskit. 

\subsection{Instance description}

{\bf Random instances.} We generated $130$ random polynomials with $n$ variables, degree $d$, and sparsity $\delta \in[0,1]$, for the expected number of monomials $\lvert M \rvert$ is $\delta \cdot \sum_{k = 1}^d \binom{n}{d}$ (that is a fraction $\delta$ of the set of all possible monomials). To do so, for each possible monomial, that monomial is added with probability $\delta$ to the polynomial. In order to ensure the degree $d$, at least one random monomial with $d$ variables is added to the polynomial if none was added during the process. The coefficients of the monomials are random integers uniformly selected from \(-100\) to \(100\).

{\bf LABS instances.} We also generated a second set of $28$ LABS polynomials\footnote{Given two numbers $n$ and $r$ with $r\leq n$,  the LABS problem consists in finding a sequence of lengths $n$ composed of $-1$ and $1$ that solves the following optimization problem:
$$\min_{s \in \{-1,1\}^n} E_{n,r}(s)=\underset{s \in \{-1,1\}^n}{\min} \sum_{p = 1}^{(r-1)}
\sum_{i =1}^{(n-r+1)}  \left( \sum_{j= i}^{(i+r-1-p)}s_j s_{j+p}\right)^2,$$
with $r\le n$, where $s$ is the sequence and $s_i$ is its $i^{th}$ component. Here we consider the limited reach version where the reach $r$ limits the distance between components that interact. The objective function is named the energy function of the sequence $s$. This optimization problem is by nature a quadric polynomial in $\{-1, 1\}$ variables and is very well studied, see e.g.~\cite{clausen2024solving}.} with fixed degree $4$ and monomials of degree $2$ and $4$. 

\subsection{Experimental results}

All the numerical results are given as tables in Appendix~\ref{app.results}). Here we report summarizing figures.

{\bf Random instances. } We start with the fraction of instances solved below a given depth as well as the runtime in seconds with the respect to the number of monomials (Figure~\ref{fig.1}). We also report the obtained depth with respect to the number of monomials (Figure~\ref{fig.2}). 

As we notice from the figures and the accompanying tables, if we consider the returned depth, the binary linear programs offer the best solutions when they are solved at optimality. \VT\ and Q-\VT\ are in par with Qiskit and \texttt{tket} on small instances, but then they become very competitive in medium-to-large instances. In particular, Qiskit and then \texttt{tket} are deliver the longest circuits. As for runtime, Qiskit is globally the fastest, but \VT\ and Q-\VT\ become the methods of choice in large instances. Notice, in the tables, that the depth of \VT\ is consistently lower than the naive $2 m d$, as discussed.  

{\bf LABS instances. } We report similar results for LABS (Figures~\ref{fig.3}-\ref{fig.4}), yielding globally the same conclusions.

\begin{figure}
\centering
    \includegraphics[width=0.425\textwidth]{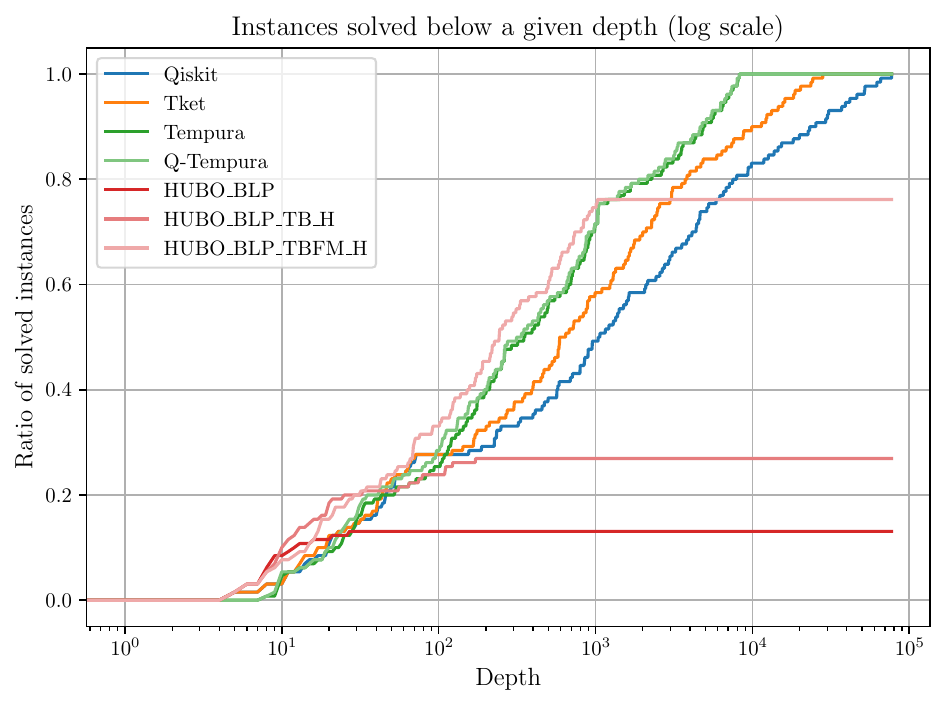}
    \includegraphics[width=0.515\textwidth]{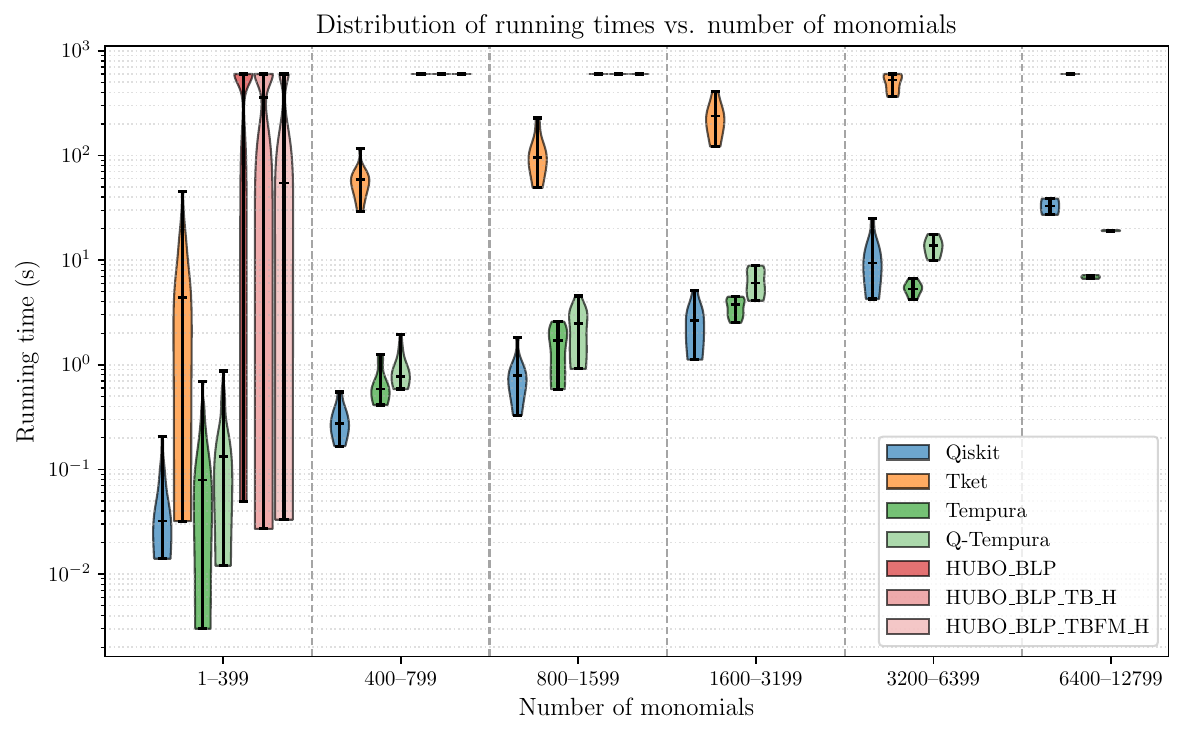}
    \caption{Performance of each method (Random instances). [{\bf Left}] The plots represent the fraction of instances (\(y\)-axis) for which a specific algorithm synthesizes a circuit under a given depth (\(x\)-axis). [{\bf Right}] The plots represent the runtime in seconds for each algorithm versus the number of monomials in the polynomial (\(x\)-axis).}
    \label{fig.1}
\end{figure}
\begin{figure}
\centering
    \includegraphics[width=0.9\textwidth]{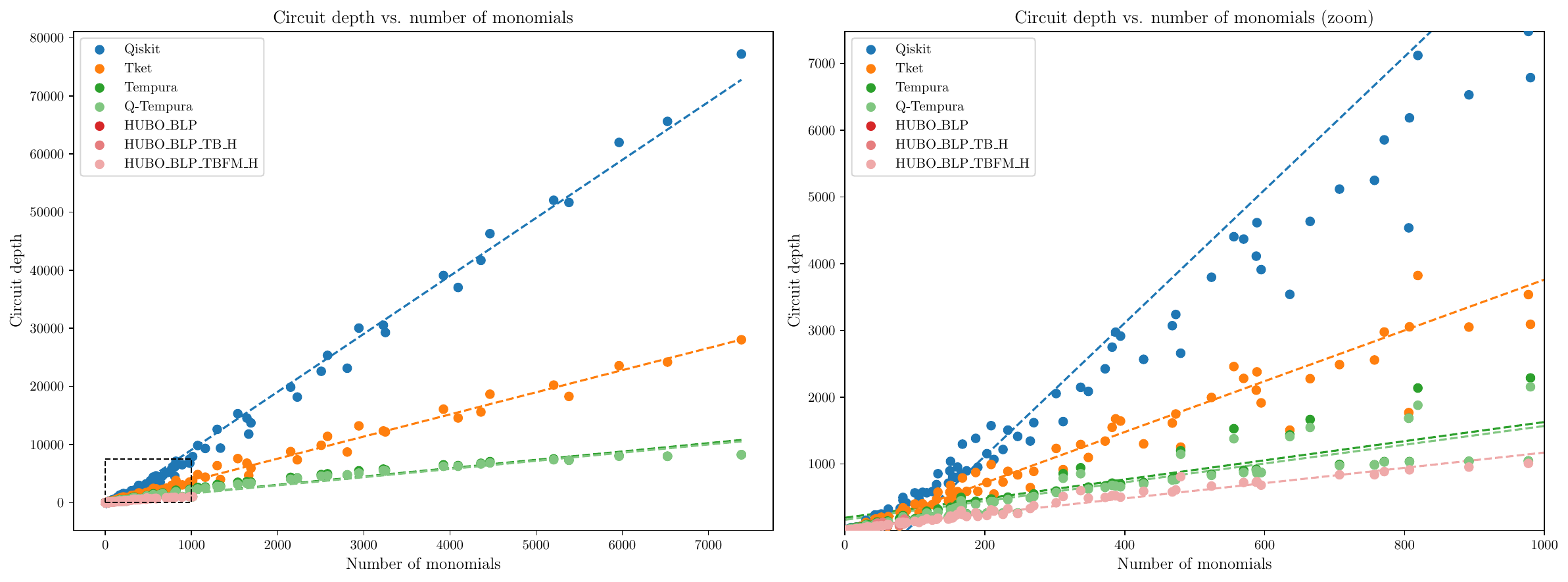}
    \caption{Performance of each method (Random instances). The plots represent the depth (\(y\)-axis) which a specific algorithm yield versus the number of monomials in the polynomial (\(x\)-axis). The regression lines are only for display purpose. }
    \label{fig.2}
\end{figure}

\begin{figure}
\centering
    \includegraphics[width=0.425\textwidth]{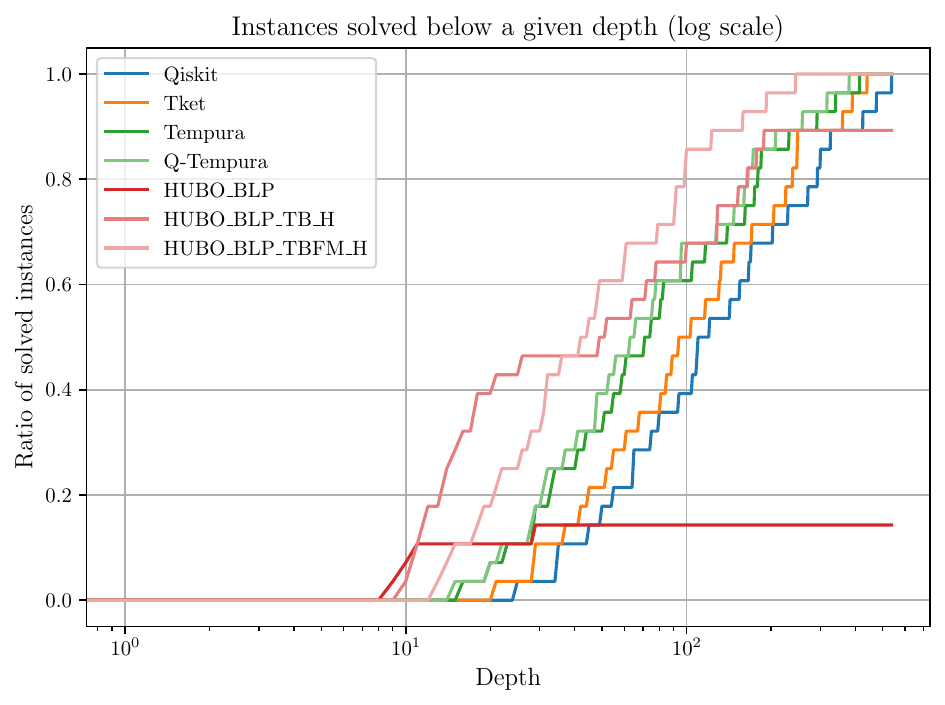}
    \includegraphics[width=0.515\textwidth]{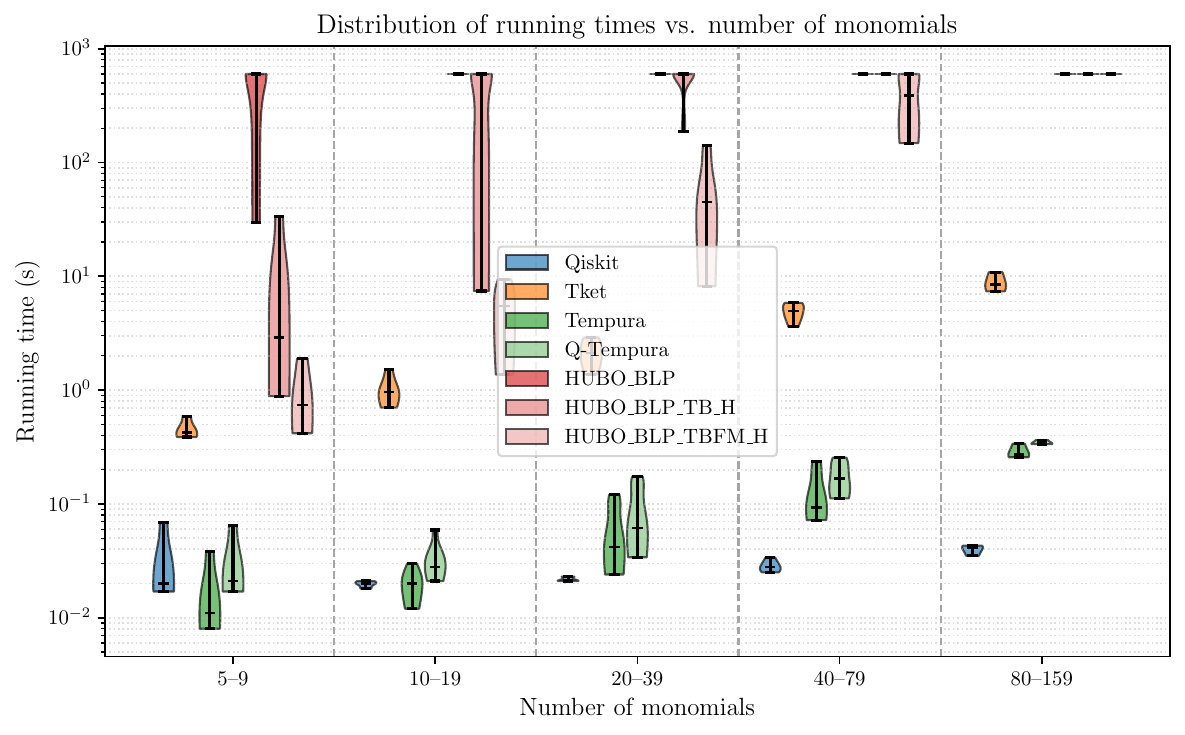}
    \caption{Performance of each method (LABS instances). [{\bf Left}] The plots represent the fraction of instances (\(y\)-axis) for which a specific algorithm synthesizes a circuit under a given depth (\(x\)-axis). [{\bf Right}] The plots represent the runtime in seconds for each algorithm versus the number of monomials in the polynomial (\(x\)-axis).}
    \label{fig.3}
\end{figure}
\begin{figure}
\centering
    \includegraphics[width=0.9\textwidth]{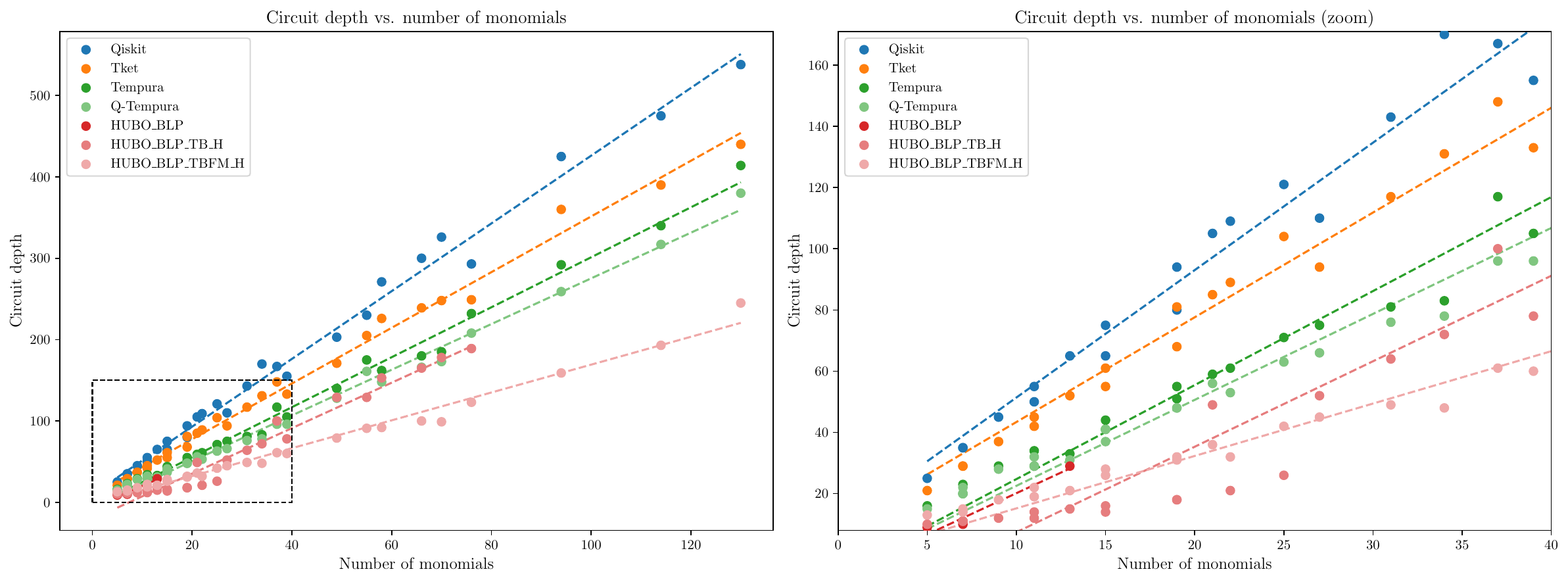}
    \caption{Performance of each method (LABS instances). The plots represent the depth (\(y\)-axis) which a specific algorithm yield versus the number of monomials in the polynomial (\(x\)-axis). The regression lines are only for display purpose.}
    \label{fig.4}
\end{figure}

\section{Conclusions}\label{sec:results}

We have formulated the circuit design problem for binary polynomial optimization as a binary linear program. We have proposed a fast and scalable heuristic solution to it, as well as some variants of the binary programs. We have argue the difficulty of solving such problem, by showing the \NP-completeness of a variant of it. We have supported our algorithms by numerical evaluations with respect the state of the art, yielding far shorter circuits in comparable runtimes (for the heuristic). Given the obtained results, while our heuristic delivers shorter circuits with respect to the state of the art, there is still a considerable gap with the optimal solution, leaving plenty of room for future improvements.

\bibliographystyle{informs2014} 
\bibliography{biblio.bib, Liambib.bib} 

\begin{thebibliography}{45}
\providecommand{\natexlab}[1]{#1}
\providecommand{\url}[1]{\texttt{#1}}
\providecommand{\urlprefix}{URL }

\bibitem[{Acuaviva et~al.(2026)Acuaviva, Acuaviva, \protect\BIBand{}
  Acuaviva}]{acuaviva2026cnotdistance}
Acuaviva A, Acuaviva A, Acuaviva P (2026) {CNOT-Distance is NP-complete under
  all-to-all connectivity}. \urlprefix\url{https://arxiv.org/abs/2608.03825}.

\bibitem[{Amy et~al.(2018)Amy, Azimzadeh, \protect\BIBand{} Mosca}]{Amy2019}
Amy M, Azimzadeh P, Mosca M (2018) On the controlled-not complexity of
  controlled-not–phase circuits. \emph{Quantum Science and Technology}
  4(1):015002, \urlprefix\url{http://dx.doi.org/10.1088/2058-9565/aad8ca}.

\bibitem[{Arora \protect\BIBand{} Barak(2009)}]{arora2009computational}
Arora S, Barak B (2009) \emph{Computational complexity: a modern approach}
  (Cambridge University Press).

\bibitem[{Bataille(2022)}]{bataille2022quantum}
Bataille M (2022) Quantum circuits of cnot gates: optimization and
  entanglement. \emph{Quantum Information Processing} 21(7):269.

\bibitem[{Botea et~al.(2018)Botea, Kishimoto, \protect\BIBand{}
  Marinescu}]{botea2018complexity}
Botea A, Kishimoto A, Marinescu R (2018) On the complexity of quantum circuit
  compilation. \emph{Proceedings of the International Symposium on
  Combinatorial Search}, volume~9, 138--142.

\bibitem[{Clausen et~al.(2024)Clausen, Crama, Lusby, Rodr{\'\i}guez-Heck,
  \protect\BIBand{} Ropke}]{clausen2024solving}
Clausen JV, Crama Y, Lusby R, Rodr{\'\i}guez-Heck E, Ropke S (2024) Solving
  unconstrained binary polynomial programs with limited reach: Application to
  low autocorrelation binary sequences. \emph{Computers \& Operations Research}
  165:106586.

\bibitem[{De~Brugi{\`e}re et~al.(2021)De~Brugi{\`e}re, Baboulin, Valiron,
  Martiel, \protect\BIBand{} Allouche}]{de2021gaussian}
De~Brugi{\`e}re TG, Baboulin M, Valiron B, Martiel S, Allouche C (2021)
  Gaussian elimination versus greedy methods for the synthesis of linear
  reversible circuits. \emph{ACM Transactions on Quantum Computing} 2(3):1--26.

\bibitem[{Dr{\u{a}}goi et~al.(2026)Dr{\u{a}}goi, Baiardi, \protect\BIBand{}
  Egger}]{druagoi2026approximate}
Dr{\u{a}}goi S, Baiardi A, Egger DJ (2026) Approximate quadratization of
  high-order hamiltonians for combinatorial quantum optimization.
  \emph{Physical Review Research} 8(2):023159.

\bibitem[{Farhi et~al.(2014)Farhi, Goldstone, \protect\BIBand{} Gutmann}]{QAOA}
Farhi E, Goldstone J, Gutmann S (2014) A quantum approximate optimization
  algorithm. \emph{arXiv preprint arXiv:1411.4028} .

\bibitem[{Gavrielov et~al.(2024)Gavrielov, Ivrii, \protect\BIBand{}
  Garion}]{gavrielov2024linear}
Gavrielov N, Ivrii A, Garion S (2024) Linear circuit synthesis using weighted
  steiner trees. \emph{Quantum Information and Computation} 24(9\&10):721--733.

\bibitem[{Glos et~al.(2022)Glos, Krawiec, \protect\BIBand{}
  Zimbor{\'a}s}]{HOBO}
Glos A, Krawiec A, Zimbor{\'a}s Z (2022) Space-efficient binary optimization
  for variational quantum computing. \emph{npj Quantum Information} 8(1):1--8.

\bibitem[{Herrman et~al.(2022)Herrman, Lotshaw, Ostrowski, Humble,
  \protect\BIBand{} Siopsis}]{herrman2022multi}
Herrman R, Lotshaw PC, Ostrowski J, Humble TS, Siopsis G (2022) Multi-angle
  quantum approximate optimization algorithm. \emph{Scientific Reports}
  12(1):6781.

\bibitem[{Herrman et~al.(2021)Herrman, Treffert, Ostrowski, Lotshaw, Humble,
  \protect\BIBand{} Siopsis}]{GOQAOA}
Herrman R, Treffert L, Ostrowski J, Lotshaw PC, Humble TS, Siopsis G (2021)
  {Globally optimizing QAOA circuit depth for constrained optimization
  problems}. \emph{Algorithms} 14(10):294.

\bibitem[{{IBM Quantum}(2019)}]{Qiskit2019}
{IBM Quantum} (2019) {Qiskit: An Open-source Framework for Quantum Computing}.

\bibitem[{Jung \protect\BIBand{} Choi(2025)}]{jung2025new}
Jung J, Choi IC (2025) A new multicommodity network flow model and branch and
  cut for optimal quantum boolean circuit synthesis. \emph{INFORMS Journal on
  Computing} 37(1):42--62.

\bibitem[{Koch et~al.(2026)Koch, Panahiyan, Mukherjee, Doetsch,
  \protect\BIBand{} Jaksch}]{koch2026resource}
Koch F, Panahiyan S, Mukherjee R, Doetsch J, Jaksch D (2026) Resource-efficient
  quantum optimization via higher-order encoding. \emph{EPJ Quantum Technology}
  .

\bibitem[{Laplace~Mermoud et~al.(2026)Laplace~Mermoud, Simonetto,
  \protect\BIBand{} Elloumi}]{laplace2025variational}
Laplace~Mermoud D, Simonetto A, Elloumi S (2026) Variational quantum algorithms
  for permutation-based combinatorial problems: {O}ptimal ansatz generation
  with applications to quadratic assignment problems and beyond.
  \emph{{Quantum}} 10:1998, ISSN 2521-327X,
  \urlprefix\url{http://dx.doi.org/10.22331/q-2026-02-09-1998}.

\bibitem[{Leven \protect\BIBand{} Galil(1983)}]{leven1983np}
Leven D, Galil Z (1983) Np completeness of finding the chromatic index of
  regular graphs. \emph{Journal of Algorithms} 4(1):35--44.

\bibitem[{Madden \protect\BIBand{} Simonetto(2022)}]{madden2022best}
Madden L, Simonetto A (2022) Best approximate quantum compiling problems.
  \emph{ACM Transactions on Quantum Computing} 3(2):1--29.

\bibitem[{Mandal et~al.(2020)Mandal, Roy, Upadhyay, \protect\BIBand{}
  Ushijima-Mwesigwa}]{mandal2020compressed}
Mandal A, Roy A, Upadhyay S, Ushijima-Mwesigwa H (2020) Compressed
  quadratization of higher order binary optimization problems.
  \emph{Proceedings of the 17th ACM International Conference on Computing
  Frontiers}, 126--131.

\bibitem[{Markov et~al.(2008)Markov, Patel, \protect\BIBand{}
  Hayes}]{markov2008optimal}
Markov K, Patel I, Hayes J (2008) Optimal synthesis of linear reversible
  circuits. \emph{Quantum Information and Computation} 8(3\&4):0282--0294.

\bibitem[{Maslov(2007)}]{maslov2007linear}
Maslov D (2007) Linear depth stabilizer and quantum fourier transformation
  circuits with no auxiliary qubits in finite-neighbor quantum architectures.
  \emph{Physical Review A—Atomic, Molecular, and Optical Physics}
  76(5):052310.

\bibitem[{Meijer-van~de Griend \protect\BIBand{} Li(2022)}]{de2022dynamic}
Meijer-van~de Griend A, Li SM (2022) Dynamic qubit routing with cnot circuit
  synthesis for quantum compilation. \emph{arXiv preprint arXiv:2205.00724} .

\bibitem[{Moro et~al.(2021)Moro, Paris, Restelli, \protect\BIBand{}
  Prati}]{moro2021quantum}
Moro L, Paris MG, Restelli M, Prati E (2021) Quantum compiling by deep
  reinforcement learning. \emph{Communications Physics} 4(1):178.

\bibitem[{Nagarajan et~al.(2021)Nagarajan, Lockwood, \protect\BIBand{}
  Coffrin}]{nagarajan2021quantumcircuitopt}
Nagarajan H, Lockwood O, Coffrin C (2021) Quantumcircuitopt: An open-source
  framework for provably optimal quantum circuit design. \emph{2021 IEEE/ACM
  Second International Workshop on Quantum Computing Software (QCS)}, 55--63
  (IEEE).

\bibitem[{Nagarajan \protect\BIBand{} Szabo(2025)}]{nagarajan2025provably}
Nagarajan H, Szabo Z (2025) Provably optimal quantum circuits with
  mixed-integer programming. \emph{arXiv preprint arXiv:2510.00649} .

\bibitem[{Nannicini et~al.(2022)Nannicini, Bishop, G{\"u}nl{\"u}k,
  \protect\BIBand{} Jurcevic}]{nannicini2022optimal}
Nannicini G, Bishop LS, G{\"u}nl{\"u}k O, Jurcevic P (2022) Optimal qubit
  assignment and routing via integer programming. \emph{ACM Transactions on
  Quantum Computing} 4(1):1--31.

\bibitem[{Nielsen \protect\BIBand{} Chuang(2002)}]{nielsenbook}
Nielsen MA, Chuang I (2002) \emph{Quantum computation and quantum information}
  (American Association of Physics Teachers).

\bibitem[{Rehfeldt et~al.(2023)Rehfeldt, Koch, \protect\BIBand{}
  Shinano}]{Rehfeldt2023}
Rehfeldt D, Koch T, Shinano Y (2023) Faster exact solution of sparse {MaxCut}
  and {QUBO} problems. \emph{Mathematical Programming Computation}
  15(3):445--470, ISSN 1867-2957,
  \urlprefix\url{http://dx.doi.org/10.1007/s12532-023-00236-6}.

\bibitem[{Romero et~al.(2025)Romero, Visuri, Cadavid, Simen, Solano,
  \protect\BIBand{} Hegade}]{romero2025bias}
Romero SV, Visuri AM, Cadavid AG, Simen A, Solano E, Hegade NN (2025)
  Bias-field digitized counterdiabatic quantum algorithm for higher-order
  binary optimization. \emph{Communications Physics} 8(1):348.

\bibitem[{Rotman(1999)}]{rotman1999introduction}
Rotman JJ (1999) \emph{An Introduction to the Theory of Groups}, volume 148
  (Springer Science \& Business Media).

\bibitem[{Seeley et~al.(2012)Seeley, Richard, \protect\BIBand{} Love}]{BK12}
Seeley JT, Richard MJ, Love PJ (2012) The {Bravyi-Kitaev} transformation for
  quantum computation of electronic structure. \emph{The Journal of chemical
  physics} 137(22):224109.

\bibitem[{Shaydulin et~al.(2024)Shaydulin, Li, Chakrabarti, DeCross, Herman,
  Kumar, Larson, Lykov, Minssen, Sun, Alexeev, Dreiling, Gaebler, Gatterman,
  Gerber, Gilmore, Gresh, Hewitt, Horst, Hu, Johansen, Matheny, Mengle, Mills,
  Moses, Neyenhuis, Siegfried, Yalovetzky, \protect\BIBand{}
  Pistoia}]{Shaydulin2024}
Shaydulin R, Li C, Chakrabarti S, DeCross M, Herman D, Kumar N, Larson J, Lykov
  D, Minssen P, Sun Y, Alexeev Y, Dreiling JM, Gaebler JP, Gatterman TM, Gerber
  JA, Gilmore K, Gresh D, Hewitt N, Horst CV, Hu S, Johansen J, Matheny M,
  Mengle T, Mills M, Moses SA, Neyenhuis B, Siegfried P, Yalovetzky R, Pistoia
  M (2024) Evidence of scaling advantage for the quantum approximate
  optimization algorithm on a classically intractable problem. \emph{Science
  Advances} 10(22), ISSN 2375-2548,
  \urlprefix\url{http://dx.doi.org/10.1126/sciadv.adm6761}.

\bibitem[{Shende et~al.(2004)Shende, Markov, \protect\BIBand{}
  Bullock}]{shende2004lowerbound}
Shende VV, Markov IL, Bullock SS (2004) {Minimal universal two-qubit
  controlled-NOT-based circuits}. \emph{Phys. Rev. A} 69:062321,
  \urlprefix\url{http://dx.doi.org/10.1103/PhysRevA.69.062321}.

\bibitem[{Sivarajah et~al.(2020)Sivarajah, Dilkes, Cowtan, Simmons, Edgington,
  \protect\BIBand{} Duncan}]{tket}
Sivarajah S, Dilkes S, Cowtan A, Simmons W, Edgington A, Duncan R (2020)
  {t$\vert$ket${\rangle}$:} a retargetable compiler for {NISQ} devices.
  \emph{Quantum Science and Technology} 6(1):014003.

\bibitem[{Thei{\ss}inger et~al.(2026)Thei{\ss}inger, Gerlach, Berghaus,
  \protect\BIBand{} Bauckhage}]{theissinger2026beyond}
Thei{\ss}inger L, Gerlach T, Berghaus D, Bauckhage C (2026) Beyond
  reinforcement learning: Fast and scalable quantum circuit synthesis.
  \emph{arXiv preprint arXiv:2602.15146} .

\bibitem[{Uotila et~al.(2025)Uotila, Ripatti, \protect\BIBand{}
  Zhao}]{uotila2025higher}
Uotila V, Ripatti J, Zhao B (2025) Higher-order portfolio optimization with
  quantum approximate optimization algorithm. \emph{2025 IEEE International
  Conference on Quantum Computing and Engineering (QCE)}, volume~1, 01--12
  (IEEE).

\bibitem[{Vandaele et~al.(2022)Vandaele, Martiel, \protect\BIBand{} Goubault~de
  Brugi{\`e}re}]{vandaele2022phase}
Vandaele V, Martiel S, Goubault~de Brugi{\`e}re T (2022) Phase polynomials
  synthesis algorithms for nisq architectures and beyond. \emph{Quantum Science
  \& Technology} 7(4):045027.

\bibitem[{Verch{\`e}re et~al.(2023)Verch{\`e}re, Elloumi, \protect\BIBand{}
  Simonetto}]{verchere2023optimizing}
Verch{\`e}re Z, Elloumi S, Simonetto A (2023) {Optimizing Variational Circuits
  for Higher-Order Binary Optimization}. \emph{{IEEE International Conference
  on Quantum Computing and Engineering (QCE)}} (Seattle (WA), United States),
  \urlprefix\url{https://hal.science/hal-04174561}, 6 pages plus references.
  Conference: IEEE QCE 2023.

\bibitem[{Webster et~al.(2026)Webster, Koutsioumpas, \protect\BIBand{}
  Browne}]{Webster2026heuristicoptimal}
Webster M, Koutsioumpas S, Browne DE (2026) Heuristic and {O}ptimal {S}ynthesis
  of {CNOT} and {C}lifford {C}ircuits. \emph{{Quantum}} 10.

\bibitem[{Welch et~al.(2014)Welch, Greenbaum, Mostame, \protect\BIBand{}
  Aspuru-Guzik}]{welch2014efficient}
Welch J, Greenbaum D, Mostame S, Aspuru-Guzik A (2014) Efficient quantum
  circuits for diagonal unitaries without ancillas. \emph{New Journal of
  Physics} 16(3):033040.

\bibitem[{Yan et~al.(2024)Yan, Wu, Chen, Pan, Lu, Zhou, Wang, Wang,
  \protect\BIBand{} Yan}]{yan2024quantum}
Yan G, Wu W, Chen Y, Pan K, Lu X, Zhou Z, Wang Y, Wang R, Yan J (2024) Quantum
  circuit synthesis and compilation optimization: Overview and prospects.
  \emph{arXiv preprint arXiv:2407.00736} .

\bibitem[{Younis et~al.(2021)Younis, Iancu, Lavrijsen, Davis, \protect\BIBand{}
  Smith}]{doecode58510}
Younis E, Iancu CC, Lavrijsen W, Davis M, Smith E (2021) Berkeley quantum
  synthesis toolkit (bqskit) v1. [Computer Software]
  \url{https://doi.org/10.11578/dc.20210603.2},
  \urlprefix\url{http://dx.doi.org/10.11578/dc.20210603.2}.

\bibitem[{Zhou et~al.(2020)Zhou, Wang, Choi, Pichler, \protect\BIBand{}
  Lukin}]{Zhou2018}
Zhou L, Wang ST, Choi S, Pichler H, Lukin MD (2020) Quantum approximate
  optimization algorithm: Performance, mechanism, and implementation on
  near-term devices. \emph{Physical Review X} 10(2):021067.

\bibitem[{Zylberman et~al.(2025)Zylberman, Nzongani, Simonetto,
  \protect\BIBand{} Debbasch}]{zylberman2025efficient}
Zylberman J, Nzongani U, Simonetto A, Debbasch F (2025) Efficient quantum
  circuits for non-unitary and unitary diagonal operators with
  space-time-accuracy trade-offs. \emph{ACM Transactions on Quantum Computing}
  6(2):1--43.

\end{thebibliography}

\clearpage
\newpage


{\huge{\bf Appendices}}

\appendix

\section{Proof of Theorem~\ref{thm:np-completeness:1}}\label{app.proof}

We need a few intermediary results before tackling the main proof. 

The core of the proof is a reduction from the 3-edge-coloring problem which consists, given a graph $\mathcal G = (V, E)$ in determining whether it is possible to color the edges of $\mathcal G$ with three colors in such a way that no two incident edges share the same color. This problem is \textsf{NP}-Complete even if $\mathcal G$ is 3-regular \citep{leven1983np}. The reduction will, given an instance $\mathcal G$ of 3-edge-coloring builds an instance $(P_{\mathcal G}, {\mathcal D}_{\mathcal G})$ of $(2, 9)$-\arpbac. We recall that we indicate with $P_{\mathcal G}(z)$ the polynomial in $z$ for a given graph $\mathcal G$, and with $H_{P, \mathcal G}(Z)$ its representation with Pauli-Z operators. 

We first do an intermediate reduction to transform $\mathcal G$ into a 3-regular graph $\mathcal G'$ that will be close to our topology. In order to do so, we expand each edge $e = (u,v) \in  E$ into a bipartite graph with 8 vertices $(u, e_1, e_2, \dots, e_6, v)$ as represented in Figure~\ref{fig:np-compelteness:1}. Note that any feasible coloring of that bipartite graph impose that the color of $(u, e_1)$ is the same as the color of $(v, e_6)$, thus a feasible coloring of $\mathcal G$ can be deduced from a 3-edge-coloring of $\mathcal G'$ where the color of $(u, e_1)$ is assigned to $e$, and conversely. Consequently, $\mathcal G$ is 3-edge-colorable if and only if $\mathcal  G'$ is. 

Given $\mathcal  G'$, we now build the polynomial $P_{\mathcal G}$ and the topology ${\mathcal D}_{\mathcal G}$. Note that the order of $\mathcal G'$ is $|V| + 6 |E| = 10|V|$ as $\mathcal G$ is 3-regular, we then set the number of variables of the polynomials $P_{\mathcal G}(z)$ to $n = 10|V|$. We define three sets $V_1$, $V_2$, and $V_3$, respectively, hereinafter called \emph{source, intermediate}, and \emph{sink vertices} of $\mathcal D_{\mathcal G}$. The source vertices $V_1$ contain each vertex originally in $\mathcal G$ with the nodes $e_4$ and $e_5$ for each edge $e \in E$. The intermediate vertices $V_2$ are the set $\{e_1, | e \in E\}$ and the sink vertices $V_3$ contain $e_2, e_3$ and $e_6$ for each $e \in E$. We now get the topology $\mathcal D_{\mathcal G}$ by directing every edge of $\mathcal G'$ from $V_1$ to $V_2$, from $V_2$ to $V_3$ and from $V_1$ to $V_3$. Observe first that the vertices in $\mathcal G'$ and $\mathcal D_{\mathcal G}$ are the same, secondly that $V_1, V_2$ and $V_3$ are independent sets, thus $\mathcal D_{\mathcal G}$ is a 3-regular acyclic digraph. 

    \begin{figure}[h!]
        \centering
        \begin{minipage}[c]{0.2\textwidth}
        \centering
        \begin{tikzpicture}
            \node[draw, circle] (U) at (0, 0) {$u$};
            \node[draw, circle] (V) at (2, 0) {$v$};
            \draw (U) -- (V);
        \end{tikzpicture}
        \end{minipage}
        \qquad
        \begin{minipage}[c]{0.1\textwidth}
        \centering
        $\longrightarrow$
        \end{minipage}
        \qquad
        \begin{minipage}[c]{0.5\textwidth}
        \centering
        \begin{tikzpicture}
            \node[draw, circle] (U) at (-0.5, 0) {$u$};
            \node[draw, circle] (W1) at (1, 0) {$e_1$};
            \node[draw, circle] (W2) at (2.5, 0.5) {$e_2$};
            \node[draw, circle] (W3) at (2.5, -0.5) {$e_3$};
            \node[draw, circle] (W4) at (4, 0.5) {$e_4$};
            \node[draw, circle] (W5) at (4, -0.5) {$e_5$};
            \node[draw, circle] (W6) at (5.5, 0) {$e_6$};
            \node[draw, circle] (V) at (7, 0) {$v$};
            \draw[>=latex, ->] (U) -- (W1);
            \draw[>=latex, ->] (W1) -- (W2);
            \draw[>=latex, ->] (W1) -- (W3);
            \draw[>=latex, ->] (W4) -- (W2);
            \draw[>=latex, ->] (W5) -- (W2);
            \draw[>=latex, ->] (W4) -- (W3);
            \draw[>=latex, ->] (W5) -- (W3);
            \draw[>=latex, ->] (W4) -- (W6);
            \draw[>=latex, ->] (W5) -- (W6);
            \draw[>=latex, ->] (V) -- (W6);
        \end{tikzpicture}
        \end{minipage}
        \caption{Reduction from $\mathcal G$ to $\mathcal G'$. That graph is then directed to obtain the topology $\mathcal D_{\mathcal G}$.}
        \label{fig:np-compelteness:1}
    \end{figure}

Finally, we associate to each vertex an index from 1 to $n$ by numbering arbitrarily the source vertices with \(\{1, \ldots, \lvert V_1 \rvert\}\), then the intermediate vertices with \(\{\lvert V_1 \rvert + 1, \ldots, \lvert V_1 \rvert + \lvert V_2 \rvert\}\) and finally the sink vertices with \(\{\lvert V_1 \rvert + \lvert V_2 \rvert + 1, \ldots, \lvert V_1 \rvert + \lvert V_2 \rvert + \lvert V_3 \rvert = n\}\). We get an ordered list $W = (w_1, w_2, \dots, w_n)$ of vertices of $\mathcal D_{\mathcal G}$. Note that this is a topological ordering of the vertices of $\mathcal D_{\mathcal G}$. Given $A$ the arcs of $\mathcal D_{\mathcal G}$ and $(w_i, w_j) \in A$, we have that $i < j$. The polynomial $P_{\mathcal G}(z)$ is set to $P_{\mathcal G}(z) = \sum_{(w_i, w_j) \in A} z_iz_j$. The degree of $P_{\mathcal G}(z)$ is, as expected, 2. 

The performed procedure is achieved in polynomial time. We now prove that there exists a feasible circuit for that instance $(P_{\mathcal G}, \mathcal D_{\mathcal G})$ if and only if $\mathcal G'$ is 3-colorable. 

\begin{lemma}
\label{lem:np-completeness:1}
    If $\mathcal G'$ is 3-edge colorable, there exists a circuit of depth 9 encoding $P_{\mathcal G}$ following the topology $\mathcal D_{\mathcal G}$.
\end{lemma}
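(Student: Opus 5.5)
The plan is to turn a proper 3-edge-coloring of $\mathcal G'$ directly into a layered circuit. Each color class gives a block of three layers, and the three blocks together give depth $9$. First we fix a proper 3-edge-coloring $\chi\colon E(\mathcal G') \to \{1,2,3\}$. Since $\mathcal D_{\mathcal G}$ has the same vertices as $\mathcal G'$ and is obtained by orienting its edges, every arc $(w_i,w_j)\in A$ inherits a color. For each color $c$, the set $A_c$ of arcs of color $c$ is then a matching of $\mathcal D_{\mathcal G}$: no two arcs in $A_c$ share an endpoint. By construction, the monomials of $P_{\mathcal G}$ are exactly the $z_iz_j$ with $(w_i,w_j)\in A$. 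There are no linear terms, so every rotation has to be placed on a qubit carrying a degree-$2$ product.

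For each color $c\in\{1,2,3\}$ we build a block of three layers.
\begin{itemize}
\item Layer one applies $\cnot{i}{j}$ for every $(w_i,w_j)\in A_c$.
\item Layer two applies, on each such target qubit $j$, the rotation $\rz{}$ with angle $\alpha_{ij}=1$ (up to the global $\gamma$ scaling).
\item Layer three applies the same $\cnot{i}{j}$ gates again.
\end{itemize}
The circuit is the concatenation of the blocks for $c=1,2,3$, which has depth $3\cdot 3=9$. Every gate is either a single-qubit rotation or a \CNOT\ along an arc of $\mathcal D_{\mathcal G}$, so the topology constraint holds.

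Next we check that the layers are valid and that the circuit implements $U_P$. Because $A_c$ is a matching, the gates in each layer act on pairwise disjoint qubits, so at most one gate touches each qubit per layer (this is constraint~\eqref{eq:activity} in the model's language). For correctness we use the routing/propagation rule~\eqref{eq: propagation}. At the start of each block, every qubit $k$ carries only $z_k$. After layer one, the target $j$ of each arc carries $z_iz_j$, and all other qubits are unchanged. The rotation in layer two therefore implements $\exp(-\jmath\alpha_{ij}Z_i\otimes Z_j)$. Since $\cnot{i}{j}$ is an involution (${\bf T}^{(ji)}$ squared is the identity over $\mathbb{F}_2$), layer three restores each qubit to its own singleton, which is the state required at the start of the next block. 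Each arc belongs to exactly one color class, so each monomial is implemented exactly once. The final configuration is the identity, as demanded by~\eqref{eq:init1}--\eqref{eq:init2}. Finally, the diagonal $Z$-rotations commute, so the product of all the implemented exponentials equals $U_P=\exp(-\jmath H_{P,\mathcal G}(Z))$.

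The only step needing care is the disjointness argument: one must make sure that no qubit is involved in two gates within a layer. This is exactly where properness of the coloring is used, since two incident edges of $\mathcal G'$ with the same color would put two \CNOT\ gates on one qubit in the same layer. Everything else is routine bookkeeping of the propagation rule.
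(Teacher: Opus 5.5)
Your proposal is correct and follows the paper's proof: each color class yields three layers ($\cnot{i}{j}$ on the matching, $\rz{j}$ on the targets, then the same $\cnot{i}{j}$ gates again), for a total depth of $9$. You also spell out checks the paper leaves implicit: per-layer disjointness follows from properness of the coloring, the involutive \CNOT\ restores each qubit to its singleton, and the commuting $Z$-rotations compose to $U_P$.
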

\begin{proof}
Let $F_1, F_2, F_3$ be the partition of the sets of edges of $\mathcal G'$ colored with each of the three colors. By definition, each such set is a matching of $\mathcal G'$. The circuit contains, for each $k \in \{1,2,3\}$ three successive layers: 
\begin{itemize}
    \item (first layer) a \CNOT\ gate $\cnot{i}{j}$ for each $(w_i, w_j) \in F_k$
    \item (second layer) an \RZ\ gate $\rz{j}$ for each $(w_i, w_j) \in F_k$
    \item (third layer) a \CNOT\ gate $\cnot{i}{j}$ for each $(w_i, w_j) \in F_k$
\end{itemize}
Note that, first, after the third layer, every qubit represents only itself. Secondly, each monomial $z_iz_j$ is represented at the second, fifth or eighth layer depending if $(w_i, w_j)$ belongs to $F_1, F_2$ or $F_3$. The circuit contains 9 layers and satisfies the topology constraint, therefore it is feasible. 
\end{proof}

The rest of the proof is dedicated to show that, given a feasible circuit encoding $P_{\mathcal G}$, we can deduce a 3-edge-coloring of $\mathcal G'$. Due to the acyclic topology $\mathcal D_{\mathcal G}$, a feasible circuit for that instance must satisfy the following constraint: the circuit contains only downward \(\cnot{}{}\) gates. An immediate consequence is that, if $i < j$, then the former qubit cannot represent the latter. Therefore, the circuit cannot contain any \textsc{swap} circuit and any monomial $z_iz_j$ must be represented on the qubit $j$: there are three \RZ\ gates on each qubit associated with a sink vertex and one for the intermediate vertices. In the following lemmas we demonstrate that every \RZ\ gate is either at layer 2, 5 or 8. This partition of the $\rz{}$ gates allows us to build a 3-edge-coloring of $\mathcal G'$. For convenience, hereinafter we call $\rz{i,j}$ the $\rz{}$ gate that represents the phase associated to the monomial $z_iz_j$. We also say that a \CNOT\ gate sends a variable if a monomial $z_i$ is routed through it to form another monomial $z_i z_j$. In the same way a \CNOT\ gate can send multiple variables, if $z_i z_j \cdots$ is routed through it. 

\begin{lemma}
    \label{lem:np-completeness:2}
    Assuming that there exists a feasible 9-depth circuit for $(P_{\mathcal G}, {\mathcal D}_{\mathcal G})$, if the circuit contains a \CNOT\ gate sending two variables or more, there is another feasible 9-depth circuit containing only \CNOT\ gates sending one variable. 
\end{lemma}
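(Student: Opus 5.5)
The plan is to exploit the very rigid shape of $\mathcal D_{\mathcal G}$: sources have no incoming arcs, sinks have no outgoing arcs, and every intermediate vertex $e_1$ has exactly one in-arc $(u,e_1)$ and two out-arcs $(e_1,e_2),(e_1,e_3)$.

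\textbf{Which gates can send two variables.} I first track the parity vector (the $b$-vector of the \CNOT\ routing view) carried by each qubit.
\begin{itemize}
\item A source qubit is never targeted, so it always represents only its own variable.
\item A sink qubit never controls a gate.
\item An intermediate qubit $e_1$ can only receive from $u$, so its content is always $z_{e_1}$ or $z_uz_{e_1}$.
\end{itemize}
Hence the only \CNOT\ gates that can send two or more variables are $\cnot{e_1}{e_2}$ or $\cnot{e_1}{e_3}$ applied while $e_1$ holds $z_uz_{e_1}$. Call these the \emph{bad} gates; the goal is to eliminate them one edge-gadget at a time.

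\textbf{The $z_u$-component on $e_2$ and $e_3$.} The monomials to be rotated on $e_2$ are $z_{e_1}z_{e_2}$, $z_{e_4}z_{e_2}$ and $z_{e_5}z_{e_2}$. None of them contains $z_u$, and $e_2$ must also end representing only itself. The $z_u$-component of $e_2$ can only change through bad gates on the arc $(e_1,e_2)$. So between any two consecutive rotation layers of $e_2$, and between the last rotation and the end, the bad gates into $e_2$ come in pairs that cancel the $z_u$-component. The same holds for $e_3$.

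On $e_1$, the single rotation $\rz{u,e_1}$ forces $e_1$ to hold $z_uz_{e_1}$ at exactly one rotation layer $t$. Consider the maximal interval $[s,f]$ around $t$ during which $e_1$ holds $z_uz_{e_1}$; it is delimited by gates $\cnot{u}{e_1}$ at layers $s$ and $f$. Every other such interval contains no rotation of $e_1$. I delete the pair of $\cnot{u}{e_1}$ gates bounding each of these other intervals. This only frees slots on $u$ and $e_1$, and it removes the $z_u$-component of every send made inside those intervals. By the pairing argument above, the contents of $e_2$ and $e_3$ at their rotation layers and at the end are unchanged, so the modified circuit is still feasible with depth at most $9$.

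\textbf{Main obstacle: bad sends inside $[s,f]$.} It remains to handle bad sends that occur inside the interval $[s,f]$ itself. My first attempt is to \emph{shrink} the interval to $[t-1,t+1]$, that is, to move the two $\cnot{u}{e_1}$ gates to the layers adjacent to the rotation. If a layer $t\pm1$ on $e_1$ is occupied by a send $\cnot{e_1}{e_x}$, I would try to exchange the order of that send and the toggle. Exchanging them changes only the $z_u$-component of $e_x$, and the pairing argument again shows that feasibility is preserved provided the parity of $z_u$ in $e_x$ at its rotation layers is unaffected. If the slot on $u$ needed for the moved toggle is busy, I would use the depth bound $9$ together with the degree-$3$ bound on $u$. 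The vertex $u$ has three out-arcs, so at most three sends plus three toggles, and this should leave enough room. If the exchange argument fails, the fallback is to run the same pairing argument with the time intervals of $e_2$ and $e_3$ in place of those of $e_1$.

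I expect this local rescheduling inside $[s,f]$ to be the delicate step of the proof. Everything else follows from the parity bookkeeping.
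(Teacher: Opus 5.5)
\textbf{There is a gap: the case that decides the lemma is left open.} Your preliminary analysis is right. Only sends $\cnot{e_1}{e_2}$, $\cnot{e_1}{e_3}$ made while $e_1$ holds $z_uz_{e_1}$ carry two variables, and the exponent of $z_u$ on a sink is the parity of such bad sends into it. But the proof stops where the actual work is. Bad sends inside $[s,f]$ are handled only by a plan. Your exchange needs $u$ free at layer $t\pm1$ and the target sink free at the layer vacated by the toggle, and nothing guarantees either. In fact $u$ must send $z_u$ twice to each of its three out-neighbours, so at least six of its nine layers are already taken; ``this should leave enough room'' is not an argument.

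This is not a corner case. At depth $9$, qubit $e_1$ already carries at least seven gates: $\rz{u,e_1}$, two $\cnot{u}{e_1}$, and at least two single-variable sends to each of $e_2$ and $e_3$. The last count holds because the single-variable sends into $e_x$ are odd in number before the rotation of $z_{e_1}z_{e_x}$ and even in total. Bad sends into a sink also come in even number. So whenever bad sends exist there are exactly two, both inside $[s,f]$, and your first step never meets a bad send. Its justification is also wrong as stated. The pairing argument says that \emph{all} bad sends into a sink preceding one of its rotations (or the end) are even in number, not that the subset you convert into single-variable sends is. If an odd number of converted sends precede some rotation of the sink, the sink holds $z_u$ there.

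\textbf{The missing idea is to delete the bad sends themselves} instead of moving or deleting the toggles $\cnot{u}{e_1}$. This is the paper's argument, and it is what your fallback should become.
\begin{itemize}
\item Take a sink $e_x$, $x\in\{2,3\}$. Each bad send multiplies the monomial held by $e_x$ by $z_uz_{e_1}$. After deleting all of them, the monomial on $e_x$ at time $\tau$ equals the original one times $(z_uz_{e_1})^{\beta_\tau}$. Here $\beta_\tau$ is the parity of bad sends into $e_x$ before $\tau$, i.e.\ the exponent of $z_u$ in the original monomial.
\item That exponent is $0$ at every rotation layer of $e_x$, since none of $z_{e_1}z_{e_x}$, $z_{e_4}z_{e_x}$, $z_{e_5}z_{e_x}$ contains $z_u$. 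It is also $0$ at the end. So all rotations and the final state are unchanged.
\item Since $e_x$ controls no gate, no other qubit is affected.
\item The monomial on $e_1$ is untouched, so every remaining send from $e_1$ carries only $z_{e_1}$.
\item Deleting gates cannot increase depth.
\end{itemize}
No rescheduling and no use of the depth bound are needed.
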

\begin{proof}
    Let $\cnot{i}{j}$ be a \CNOT\ gate sending a variable $z_k$ ($k \neq i$), then there exists a directed path in $\mathcal D_{\mathcal G}$ connecting $w_k$ to $w_j$ through the vertex $w_i$. By construction, the only possible cases are the paths of size $3$: $(u, e_1, e_2)$ and $(u, e_1, e_3)$. In the two cases, the gate $\cnot{i}{j}$ is the only way to send $z_k$ to $j$. As $j$ does not represent $z_k$ after layer 9, there are an even number of such gates. Also, $P_{\mathcal G}$ does not contain the monomial $z_kz_j$, so while the variable $z_k$ is represented on the qubit $j$, no \RZ\ gate may be placed on that qubit. Thus the circuit where those gates are removed is still feasible. 
\end{proof}

\begin{lemma}
    \label{lem:np-completeness:3}
Let $w_j$ be a sink vertex. Assuming that there exists a feasible 9-depth circuit for $(P_{\mathcal G}, {\mathcal D}_{\mathcal G})$ where every \(\cnot{}{}\) gate sends only one variable, there are three \RZ\ gates $\rz{i_1, j}$, $\rz{i_2, j}$ and $\rz{i_3, j}$ respectively placed at layers 2, 5 and 8 on qubit $j$ where $w_{i_1}$, $w_{i_2}$, $w_{i_3}$ are the three neighbors of $w_j$ in $\mathcal G'$. In addition, two gates $\cnot{i_1}{j}$ are placed at layers 1 and either 3 or 4, two gates $\cnot{i_2}{j}$ are placed at layers either 3 or 4 and either 6 or 7, and, two gates $\cnot{i_3}{j}$ are placed at layers either 6 or 7 and 9. 
\end{lemma}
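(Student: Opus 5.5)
The plan is a counting argument on the gates acting on qubit $j$, where $j$ is the index of the sink vertex $w_j$. Let $w_{i_1}, w_{i_2}, w_{i_3}$ denote the three in-neighbours of $w_j$ in $\mathcal D_{\mathcal G}$. By construction every arc at a sink vertex enters it, so these are exactly the three neighbours of $w_j$ in $\mathcal G'$. The polynomial $P_{\mathcal G}$ then contains exactly the three monomials $z_{i_1}z_j$, $z_{i_2}z_j$, $z_{i_3}z_j$ involving $z_j$. Since all \CNOT\ gates are downward, each of these monomials must be implemented on qubit $j$, as already observed before Lemma~\ref{lem:np-completeness:2}.

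\textbf{Step 1: what the CNOTs on qubit $j$ can do.} By hypothesis every \CNOT\ gate sends a single variable. I read this as: a gate $\cnot{l}{j}$ toggles exactly one variable, and since arcs into $w_j$ come only from $w_{i_1}, w_{i_2}, w_{i_3}$, that variable is one of $z_{i_1}, z_{i_2}, z_{i_3}$. (A gate from $e_1$ carrying $z_u$ together with $z_{e_1}$ has already been excluded by Lemma~\ref{lem:np-completeness:2}.) Hence the content of qubit $j$ is always $z_j$ times a product of a subset $S \subseteq \{z_{i_1}, z_{i_2}, z_{i_3}\}$, and each \CNOT\ on qubit $j$ toggles exactly one element of $S$. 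At time $0$ and after layer $9$ we must have $S=\emptyset$. An \RZ\ gate $\rz{i_t,j}$ can be placed only while $S=\{z_{i_t}\}$. No \RZ\ can be placed while $S=\emptyset$, because $z_j$ alone is not a monomial of $P_{\mathcal G}$.

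\textbf{Step 2: counting slots.} Each $z_{i_t}$ must enter $S$ before its rotation and leave $S$ afterwards, so it is toggled at least twice. This gives at least $6$ \CNOT\ gates on qubit $j$. Together with the $3$ rotations, and using that a qubit carries at most one gate per layer (Constraint~\eqref{eq:activity}), we get at least $9$ gates in $9$ layers. Therefore every layer on qubit $j$ is occupied, and each $z_{i_t}$ is toggled exactly twice.

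Now order the rotations at layers $p_1<p_2<p_3$, acting on distinct variables. Passing from $S=\{z_a\}$ to $S=\{z_b\}$ with $a \neq b$ requires at least two toggles, namely removing $a$ and adding $b$. Before $p_1$ we need at least one toggle, and after $p_3$ at least one. These lower bounds sum to $1+2+2+1=6$, which equals the number of available \CNOT s. Hence all bounds are tight, and the layer pattern on qubit $j$ is forced to be C\,R\,C\,C\,R\,C\,C\,R\,C. The rotations therefore sit at layers $2,5,8$. Naming the variables in the order their rotations occur gives the following placements:
\begin{itemize}
\item $\cnot{i_1}{j}$ at layer $1$ and at one of layers $3,4$;
\item $\cnot{i_2}{j}$ at the other of layers $3,4$ and at one of layers $6,7$;
\item $\cnot{i_3}{j}$ at the remaining one of layers $6,7$ and at layer $9$.
\end{itemize}
The order within each pair is free, since only the state at the rotation layers matters.

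The main obstacle is Step 1, the justification that the state of qubit $j$ is described purely by the set $S$ of in-neighbour variables toggled one at a time. This requires pinning down what ``sending one variable'' means for a control qubit whose own content may have changed, in particular the intermediate qubit $e_1$. I would resolve it via Lemma~\ref{lem:np-completeness:2}: source qubits never receive gates, so they always hold only themselves. The only possible contamination is $z_u$ on $e_1$, and that lemma lets us remove it.
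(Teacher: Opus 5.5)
Your proof is correct and is essentially the paper's argument. The three rotations $\rz{i_t,j}$ must sit on qubit $j$; at least one \CNOT\ is needed before the first rotation, two between consecutive rotations and one after the last, so $6+3=9$ gates fill the nine layers and force the rotations to layers $2,5,8$ with the stated \CNOT\ placements. Your Step~1 (source qubits are never targeted, and the one-variable hypothesis applied to $e_1$) makes explicit what the paper compresses into the remark that there is no \textsc{swap}, so the sending gate must be $\cnot{i_1}{j}$.
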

\begin{proof}

By construction, the three neighbors of $w_j$ are in $V_1 \cup V_2$. Thus $i_1, i_2, i_3 \leq j$ and the three gates $\rz{i_1, j}$, $\rz{i_2, j}$ and $\rz{i_3, j}$ are placed on the qubit $j$. As $j$ represents only $z_j$ before the first layer, at least one \(\cnot{}{}\) gate sending $i_1$ on $j$ is placed before $\rz{i_1, j}$. As the circuit contains no \textsc{swap} gate, this \CNOT\ gate is $\cnot{i_1}{j}$. Then one or more \CNOT\ gates $\cnot{i_1}{j}$ and $\cnot{i_2}{j}$ on $j$ are placed between $\rz{i_1, j}$ and $\rz{i_2, j}$. Symmetrically, there are at least two \CNOT\ gates $\cnot{i_2}{j}$ and $\cnot{i_3}{j}$ between $\rz{i_2, j}$ and $\rz{i_3, j}$ and one after $\rz{i_3, j}$. As the depth of the circuit is 9, we deduce that no additional \CNOT\ gate may be placed: the layers 2, 5 and 8 contain the three \RZ\ gates and the six other layers contain a \CNOT\ gate. 
\end{proof}

\begin{lemma}
    \label{lem:np-completeness:4}
Let $w_j$ be an intermediate vertex. Assuming that there exists a feasible 9-depth circuit for $(P_{\mathcal G}, {\mathcal D}_{\mathcal G})$ where every \CNOT\ gate sends only one variable, there is one \RZ\ gate $\rz{i_1, j}$ placed at layers 2, 5 or 8 on qubit $j$ where $w_{i_1}$ is the neighbors of $w_j$ in $V_1$. In addition, two gates $\cnot{i_1}{j}$ are placed at layers 1 and 3 or 4 if $\rz{i_1,j}$ is at layer 2, at layers 3 or 4 and 6 or 7 if $\rz{i_1,j}$ is at layer 5, and at layers 6 or 7 and 9 if $\rz{i_1,j}$ is at layer 8.
\end{lemma}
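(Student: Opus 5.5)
The plan is to exploit the fact that an intermediate vertex $w_j = e_1$ sits between the source $u$ (its unique in-neighbour in $\mathcal D_{\mathcal G}$) and the two sinks $e_2, e_3$ (its out-neighbours). We then transfer the rigid layer structure already established for sinks in Lemma~\ref{lem:np-completeness:3} to qubit $j$.

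\textbf{Step 1: gates on qubit $j$.} Since all \CNOT\ gates are downward and $u$ is the only in-neighbour of $w_j$, the monomial $z_{i_1}z_j$ (with $w_{i_1}=u$) can only be formed on qubit $j$ by gates $\cnot{i_1}{j}$. Qubit $j$ starts and ends representing only $z_j$. The only other gates touching it have qubit $j$ as control, and these do not change what $j$ represents. Hence $\cnot{i_1}{j}$ appears an even number of times, and at least twice: once before $\rz{i_1,j}$ and once after it.

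Applying Lemma~\ref{lem:np-completeness:3} to $e_2$ and to $e_3$, the gates $\cnot{j}{e_2}$ (resp.\ $\cnot{j}{e_3}$) form one of the three ``slots'' of that lemma: $\{1, 3\text{ or }4\}$, $\{3\text{ or }4, 6\text{ or }7\}$, or $\{6\text{ or }7, 9\}$. The two slots used by $e_2$ and $e_3$ must be different, because qubit $j$ can take part in only one gate per layer.

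Qubit $j$ therefore carries at least seven gates in nine layers. These are two $\cnot{i_1}{j}$, one $\rz{i_1,j}$, and four controls towards $e_2, e_3$.

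\textbf{Step 2: the $u$-block cannot straddle a control.} By hypothesis every \CNOT\ sends a single variable. So while qubit $j$ represents $z_{i_1}z_j$, it cannot control a gate towards $e_2$ or $e_3$; that gate would send two variables. Consequently the block ``$\cnot{i_1}{j}$, $\rz{i_1,j}$, $\cnot{i_1}{j}$'' must lie in a window of at least three consecutive layers that contains none of the four control layers. (Any extra pairs of $\cnot{i_1}{j}$ only use up more layers, so we may assume there are exactly two.)

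\textbf{Step 3: case analysis on the free slot.} This is the main bookkeeping obstacle. Exactly one slot $k \in \{1,2,3\}$ is left unused by $e_2$ and $e_3$. I would enumerate the three cases, together with the $3/4$ and $6/7$ choices, and list the free gaps of consecutive layers in each.

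For example, if slot $1$ is free, then slots 2 and 3 are used. They must occupy both layers $6$ and $7$ (they cannot share a layer), slot 2 also takes $3$ or $4$, and slot 3 takes $9$. If slot 2 takes layer $3$, the free gaps are $\{1,2\}$, $\{4,5\}$ and $\{8\}$, none of length three, which is a contradiction. So slot 2 takes layer $4$, the only gap of length three is $\{1,2,3\}$, and this forces $\cnot{i_1}{j}$ at layers $1$ and $3$ and $\rz{i_1,j}$ at layer $2$.

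The cases where slot $3$ is free are symmetric to slot $1$ under reversing the layer order, giving layers $7$, $8$, $9$ with $\rz{i_1,j}$ at layer $8$. If slot $2$ is free, slots 1 and 3 use layers $1$, one of $3/4$, one of $6/7$, and $9$. The only possible gap of length three is then between them: layers $3$–$5$ or $4$–$6$ or $5$–$7$. In each of these, $\rz{i_1,j}$ must sit strictly between two $\cnot{i_1}{j}$ gates.

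\textbf{Step 4: pinning the middle case.} To force the rotation exactly to layer $5$ in the slot-$2$ case, I would additionally use that qubit $j$ cannot host two gates in the same layer. The obstacle I anticipate is this middle case: a window such as $\{4,5,6\}$ or $\{3,4,5\}$ might a priori place the rotation at layer $4$ or $6$. To rule this out, I expect to combine the exact count of available layers with Lemma~\ref{lem:np-completeness:3}, which puts $e_2$'s own rotations at layers $2,5,8$ and thereby determines which of $3/4$ and $6/7$ the controls occupy. I would first check whether this leaves only the configuration $\cnot{i_1}{j}$ at $\{3\text{ or }4\}$ and $\{6\text{ or }7\}$ with $\rz{i_1,j}$ at layer $5$. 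If it does not, I would use the symmetric roles of $e_2$ and $e_3$ to exclude the remaining placements.
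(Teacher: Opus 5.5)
Your ingredients are the right ones and match the paper's: Lemma~\ref{lem:np-completeness:3} applied to the two sink out-neighbours $e_2,e_3$, the single-variable hypothesis forbidding controls from qubit $j$ while it carries $z_{i_1}z_j$, and a layer count on qubit $j$. However, the case analysis has two holes.

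First, ``exactly one slot is left unused'' rests on the claim that $e_2$ and $e_3$ use different slots. The one-gate-per-layer rule only gives this for the outer slots, which would both need layer $1$ (resp.\ $9$). Both sinks can put their $z_j$-rotation at layer $5$, with controls at $\{3,6\}$ and $\{4,7\}$ (or $\{3,7\}$ and $\{4,6\}$), without any collision on qubit $j$. This case must be excluded separately. Your own gap count does it: the control layers are then $\{3,4,6,7\}$, the free runs are $\{1,2\}$, $\{5\}$, $\{8,9\}$, and there is no room for the three-layer block.

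Second, the slot-$2$-free case is left open, and the plan in your Step~4 would not close it. Lemma~\ref{lem:np-completeness:3} fixes $e_2$'s rotations at layers $2,5,8$. It does not decide whether $\cnot{j}{e_2}$ sits at layer $3$ or $4$ (or $6$ or $7$), because that depends on $e_2$'s other in-neighbours $e_4,e_5$. What finishes the case is the same enumeration you did for slot $1$.

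With controls at $\{1,a,b,9\}$, $a\in\{3,4\}$, $b\in\{6,7\}$, the free runs are $\{2,\dots,a-1\}$, $\{a+1,\dots,b-1\}$ and $\{b+1,\dots,8\}$. The windows $\{3,4,5\}$ and $\{5,6,7\}$ that you list always contain $a$ or $b$, so they are never free. Only $a=3$, $b=7$ gives a run of length three, namely $\{4,5,6\}$. A run of exactly three layers forces $\cnot{i_1}{j}$ at its two ends and $\rz{i_1,j}$ in the middle. So the rotation is at layer $5$ with the \CNOT\ gates at $4$ and $6$; there is no a priori freedom to place it at $4$ or $6$.

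With both repairs your enumeration is complete. It even yields the adjacency of the two $\cnot{i_1}{j}$ gates to the rotation, which the paper asserts at the end of its own proof. The paper organises the argument differently: it excludes rotation layers $3,4,6,7$ directly, since each forces both sinks into the same outer slot and hence a double use of layer $1$ or $9$ on qubit $j$. That route sidesteps the distinctness issue that tripped your slot count.
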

\begin{proof}
Let $w_{i_2} \in V_3$, $w_{i_3} \in V_3$ be the two neighbors of $w_j$ in $V_3$. The gate $\rz{i_1,j}$ is placed on the qubit $j$ as $i_1 < j$. That gate is surrounded by two CNOT gates $\cnot{i_1}{j}$ and is then not at layers 1 and 9. Let assume $\rz{i_1,j}$ is placed at layer 3. According to Lemma~\ref{lem:np-completeness:3}, as $w_{i_2}$ is a sink vertex, there are two \CNOT\ gates $\cnot{j}{i_2}$ in the circuit sending only the variable $z_j$ on $i_2$. As $i_1$ is represented by $j$ between the two gates $\cnot{i_1}{j}$, the gates $\cnot{j}{i_2}$ cannot be placed between them, thus not at layer 2, 3 or 4. However, by Lemma~\ref{lem:np-completeness:3}, if the gate $\rz{j, i_2}$ is placed at layer 2 or 5 of the qubit $i_2$, there is one gate $\cnot{j}{i_2}$ at layer 3 or 4, which is a contradiction. Thus, $\rz{j, i_2}$ is at layer 8 and there is a gate $\cnot{j}{i_2}$ at layer 9. The same demonstration can be made for $i_3$. Hence there are two \CNOT\ gates at layer 9 on qubit $j$ and this is a contradiction. Therefore, $\rz{i_1,j}$ is not at layer 3. Symmetrically, is it not at layer 4, 6 or 7. Thus, it is either at layer 2, 5 or 8. A similar argument can be used to prove that the two gates $\cnot{i_1}{j}$ are placed at the layer preceding and following the gate $\rz{i_1,j}$, otherwise there is no room for the two gates $\cnot{j}{i_2}$ and the two gates $\cnot{j}{i_3}$. 
\end{proof}

\begin{lemma}
    \label{lem:np-completeness:5}
Assuming that there exists a feasible 9-depth circuit for $(P_{\mathcal G}, {\mathcal D}_{\mathcal G})$ where every \CNOT\ gate sends only one variable, then $\mathcal G'$ is 3-edge-colorable.
\end{lemma}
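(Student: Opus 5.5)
We read the coloring directly off the layer of the \RZ\ gate implementing each edge monomial. Every edge of $\mathcal G'$ corresponds to an arc $(w_i, w_j)$ of $\mathcal D_{\mathcal G}$, hence to a monomial $z_iz_j$ of $P_{\mathcal G}$ with $i<j$. We color that edge with $k \in \{1,2,3\}$ when $\rz{i,j}$ sits at layer $3k-1$. Since only downward \CNOT\ gates are available, $\rz{i,j}$ is on qubit $j$, and $w_j$ is either a sink vertex or an intermediate vertex. Lemma~\ref{lem:np-completeness:3} and Lemma~\ref{lem:np-completeness:4} then place every \RZ\ gate at layer $2$, $5$ or $8$, so the coloring is well defined. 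It remains to check that it is proper at every vertex of $\mathcal G'$.

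\textbf{Sink vertices $w_j \in V_3$.} All three incident edges are implemented on qubit $j$. By Lemma~\ref{lem:np-completeness:3} they sit at layers $2$, $5$ and $8$, one each, so they receive three distinct colors.

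\textbf{Intermediate vertices $w_j \in V_2$, i.e.\ $w_j = e_1$.} The incident edges are $(u,e_1)$, implemented on qubit $j$, and $(e_1,e_2)$, $(e_1,e_3)$, implemented on qubits $i_2$ and $i_3$ respectively. Suppose $\rz{i_1,j}$ is at layer $3k-1$. By Lemma~\ref{lem:np-completeness:4}, the two gates $\cnot{i_1}{j}$ occupy qubit $j$ within the block $\{3k-2, 3k-1, 3k\}$, extended possibly by one layer into the next block. During that interval, qubit $j$ represents $z_{i_1}z_j$.

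Suppose also that $\rz{j,i_2}$ were at the same layer $3k-1$. By Lemma~\ref{lem:np-completeness:3}, the gates $\cnot{j}{i_2}$ surrounding it lie in the same window. Each such gate occupies qubit $j$ as control and must send only $z_j$, so it cannot act while qubit $j$ holds $z_{i_1}z_j$. This is exactly the kind of window conflict already exploited in the proof of Lemma~\ref{lem:np-completeness:4}, and it gives a contradiction. The same holds for $i_3$.

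Finally, $\rz{j,i_2}$ and $\rz{j,i_3}$ cannot share a layer. Otherwise their surrounding \CNOT\ gates, both controlled by qubit $j$, would fall at the same layers (layer $1$ or $9$), or else there would be four gates on $j$ in two-layer windows plus the two $\cnot{i_1}{j}$, which exceeds the available slots on qubit $j$.

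\textbf{Source vertices $w_i \in V_1$.} All three incident edges are implemented on distinct target qubits $j_1, j_2, j_3$, and each needs gates $\cnot{i}{j_t}$ controlled by qubit $i$. Qubit $i$ carries at most one gate per layer. Lemmas~\ref{lem:np-completeness:3}--\ref{lem:np-completeness:4} force a gate at layer $3k-2$ (or $3k-3$ merged) before the \RZ\ at layer $3k-1$. More precisely, the opening \CNOT\ for block $k$ lies in $\{3k-3, 3k-2\}$ and the closing one in $\{3k, 3k+1\}$.

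I expect this counting step to be the main obstacle. If two edges at $w_i$ shared color $k$, then four \CNOT\ gates controlled by qubit $i$ would have to fit around block $k$. For $k=1$, layer $1$ is the only opening slot, since the opening gate must precede layer $2$; this is immediate. For $k=2$ and $k=3$, the candidate layers for the four gates must be enumerated, giving $\{3,4\}$ for openings and $\{6,7\}$ for closings when $k=2$. Together with the gates forced by the third edge, the pigeonhole principle on qubit $i$'s nine layers should give a contradiction. I would carry this out by a short case analysis on which of layers $3,4,6,7$ hold the gates of the third edge.

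With properness established at all three vertex types, the color classes are matchings of $\mathcal G'$, so $\mathcal G'$ is 3-edge-colorable.
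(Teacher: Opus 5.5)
Your split by vertex type works, and most of it is right. The sink case and both same-tail cases are sound: two edges leaving a source, and $(e_1,e_2),(e_1,e_3)$ leaving an intermediate vertex. The $k=2$ enumeration you postpone for sources is exactly the paper's argument. The four gates $\cnot{i}{j_1},\cnot{i}{j_2}$ fill layers $3,4,6,7$ of qubit $i$. Whatever the colour of the third edge, Lemmas~\ref{lem:np-completeness:3}--\ref{lem:np-completeness:4} force one of its \CNOT\ gates on qubit $i$ into one of these layers. This is a positional argument, not a pigeonhole count over nine layers. Also, $k=3$ is as immediate as $k=1$, since the two closing gates would collide at layer $9$.

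The step that fails is the head--tail case at $j=e_1$. You claim that if $\rz{i_1,j}$ and $\rz{j,i_2}$ both sit at layer $3k-1$, then $\cnot{j}{i_2}$ being unable to act while qubit $j$ holds $z_{i_1}z_j$ already gives a contradiction. This holds for $k=1$ and $k=3$, where the opening gates collide at layer $1$ (resp.\ the closing gates at layer $9$). It fails for $k=2$. Put $\cnot{j}{i_2}$ at layers $3$ and $7$, $\cnot{i_1}{j}$ at layers $4$ and $6$, and both rotations at layer $5$. This placement satisfies all of the following:
\begin{itemize}
\item Lemma~\ref{lem:np-completeness:3} on qubit $i_2$, whose other gates go at layers $1,4,6,9$;
\item Lemma~\ref{lem:np-completeness:4}, even in the stronger ``immediately before and after'' form used in its proof;
\item the one-variable condition, because the $\cnot{j}{i_2}$ pair simply encloses the window in which $j$ holds $z_{i_1}z_j$.
\end{itemize}
Your description of that window as extending only into the next block also misses that the opening gate can sit at layer $3$.

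The contradiction needs the third edge $(e_1,e_3)$. In this configuration qubit $j$ is busy at layers $3,4,6,7$, and every placement of the two $\cnot{j}{i_3}$ gates allowed by Lemma~\ref{lem:np-completeness:3} uses one of them. So the repair is the same four-slot argument you use for sources, applied to qubit $e_1$ and its third neighbour. Once you fix this, your case split is more complete than the paper's proof. That proof only treats two arcs with a common head or a common tail, and never addresses an arc entering $e_1$ together with one leaving it.
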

\begin{proof}
As no source vertex has an entering arc, every \RZ\ gate is on a qubit associated with an intermediate vertex or a sink vertex. By Lemmas~\ref{lem:np-completeness:3} and \ref{lem:np-completeness:4} all those gates are placed on layers 2, 5 or 8. We may now define a partition $F_1, F_2$ and $F_3$ of the edges of $\mathcal G'$ with those sets being respectively all the edges $(w_i, w_j)$ such that $\rz{i, j}$ is at layer 2, 5 or 8. We finally prove that this partition defines a 3-edge-coloring of $\mathcal G'$ by showing that each set is a matching. By definition, two arcs $(w_{i_1}, w_{j_1})$ and $(w_{i_2}, w_{j_2})$ in $F_k$, for any $k \in \{1, 2, 3\}$ satisfy $j_1 \neq j_2$ as the two gates $\rz{i_1, j_1}$ and $\rz{i_2, j_2}$ cannot be on the same qubit and at the same layer. If the first set $F_1$ is not a matching then it contains two arcs $(i, j_1)$ and $(i, j_2)$. The two gates $\rz{i, j_1}$ and $\rz{i, j_2}$ are placed at layer 2. However, by Lemmas~\ref{lem:np-completeness:3} and \ref{lem:np-completeness:4}, this implies that there are two \CNOT\ gates $\cnot{i}{j_1}$ and $\cnot{i}{j_2}$ at layer 1 which is a contradiction. Consequently $F_1$ is a matching. For the same reason, $F_3$ also a matching. Assuming that $F_2$ is not a matching, there are two gates $\rz{i, j_1}$ and $\rz{i, j_2}$ placed at layer 5. Hence, by Lemmas~\ref{lem:np-completeness:3} and \ref{lem:np-completeness:4}, there are four \CNOT\ gates, $\cnot{i}{j_1}$ and $\cnot{i}{j_2}$, each placed twice, at layers 3, 4, 6 and 7. However, let $w_{j_3}$ be the third neighbor of $w_i$. Again, by Lemmas~\ref{lem:np-completeness:3} and \ref{lem:np-completeness:4}, if $j_3 < i$, there must be a gate $\cnot{j_3}{i}$ at layer 3 or 4 or 6 or 7, depending if the gate $\rz{j_3, i}$ is at layer 2, 5 or 8. Similarly if $i < j_3$. There is then a contradiction with the four first \CNOT\ gates: $F_2$ is a matching. Consequently, $\mathcal G'$ is 3-edge colorable. 
\end{proof}

We are now ready to prove Theorem~\ref{thm:np-completeness:1}

\begin{proof}
    By Lemmas~\ref{lem:np-completeness:1}, \ref{lem:np-completeness:2} and \ref{lem:np-completeness:5}, the described procedure is a polynomial reduction from the 3-edge-coloring problem to (2, 9)-\arpbac, thus this problem is \textsf{NP}-Complete.
\end{proof}

\section{Proof of Proposition~\ref{prop:1}}\label{ap:prop:1}

To represent a polynomial of degree $n$, we need all the unique combinations of indices up to degree $n$, of which there are $2^{n}-1$. With our construction, $\Te_{n}$ has $2^{n-1}-1$ rotations coming from $\Te_{n-1}$ plus one considering $\alpha_n$. To this, we add all the rotations on the second $2^{n-1}$ layers, i.e., all the rotations on qubit $n-1$, as $2^{n-2}-1$ and all the ones on qubit $n$, as $2^{n-2}$. Adding up we obtain a total of unique rotation gates of $2^{n}-1$, as required. 

As for the layers: $\Te_{n}$ has firstly $2^{n-1}$ layers by construction, at which we add all the \CNOTs\ plus \RZ\ gates of these first $2^{n-1}$ layers a second time, adding another $2^{n-1}$ layers, totalling at $2^{n}$. Note that the circuit acting on qubit $n-1$ on the second half can happen in parallel with the one acting on qubit $n$, so it is not counted here. \Halmos

\section{Simple examples of the application of \VT}\label{app.examples}

\begin{example}
Let us use our algorithm on the following polynomial: 
\[
P(z) = z_1 z_2 + z_1 z_2 z_3 + z_1 z_2 z_3 z_4 + z_2 z_4. 
\]
First, we apply the template to it, by putting the coefficients of the missing monomials to \(0\). We obtain the following circuit, depicted in Figure~\ref{fig: ex-template-simplification}. 

\begin{figure}[h]
    \centering
    \begin{quantikz}[row sep = {0.7cm,between origins}, column sep = 0.2cm]
    & \qw & \ctrl{3}\gategroup[4, steps = 4, style = {dashed, rounded corners, inner xsep = 0pt}, label style = {label position = below, anchor = north, yshift = -0.2cm}]{\begin{minipage}{3.8cm} \begin{center} First \CNOT -ony subcircuit \(C_1\) \end{center} \end{minipage}} & \qw & \ctrl{1} & \ctrl{3} & \qw & \ctrl{1} & \qw\gategroup[4, steps = 5, style = {dashed, rounded corners, inner xsep = 0pt}, label style = {label position = below, anchor = north, yshift = -0.2cm}]{\begin{minipage}{3.8cm} \begin{center} Second \CNOT -only subcircuit \(C_2\) \end{center} \end{minipage}} & \qw & \ctrl{2} & \ctrl{3} & \qw & \qw & \qw & \ctrl{2}\gategroup[4, steps = 5, style = {dashed, rounded corners, inner xsep = 0pt}, label style = {label position = below, anchor = north, yshift = -0.2cm}]{\begin{minipage}{3.8cm} \begin{center} Third \CNOT -only subcircuit \(C_3\) \end{center} \end{minipage}} & \ctrl{3} & \qw & \qw & \qw & \qw & \qw \\
    & \qw      & \qw & \ctrl{2} & \targ{} & \qw & \gate{\alpha_{12}} & \targ{} & \qw & \qw & \qw & \qw & \ctrl{1} & \qw & \ctrl{2} & \qw & \qw & \ctrl{1} & \qw & \qw & \qw & \qw \\
    & \qw      & \qw & \qw & \qw & \qw & \qw & \qw & \ctrl{1} & \qw & \targ{} & \qw & \targ{} & \qw & \gate{\alpha_{123}} & \targ{} & \qw & \targ{} & \qw &\ctrl{1} & \qw & \qw \\
    & \qw & \targ{} & \targ{} & \qw & \targ{} & \qw & \gate{\alpha_{24}} & \targ{} & \qw & \qw & \targ{} & \qw & \gate{\alpha_{1234}} & \targ{} & \qw & \targ{} & \qw & \qw & \targ{} & \qw & \qw
    \end{quantikz}
    \caption{Template circuit for \(P\). The subcircuits \(C_1\), \(C_2\) and \(C_3\) only contain \CNOT\ gates.}
    \label{fig: ex-template-simplification}
\end{figure}

In this circuit, because some rotation gates are missing, we observe that there are some sequences of \CNOT\ that may contain some redundancy, meaning that we could rewrite this part of the circuit with possibly less gates. For the subcircuit \(C_1\), the corresponding matrix is
\[
\begingroup
\renewcommand*{\arraystretch}{0.7}
A_1 = \begin{pmatrix} 1 & 0 & 0 & 0 \\ 0 & 1 & 0 & 0 \\ 0 & 0 & 1 & 0 \\ 1 & 0 & 0 & 1 \end{pmatrix} \begin{pmatrix} 1 & 0 & 0 & 0 \\ 1 & 1 & 0 & 0 \\ 0 & 0 & 1 & 0 \\ 0 & 0 & 0 & 1 \end{pmatrix} \begin{pmatrix} 1 & 0 & 0 & 0 \\ 0 & 1 & 0 & 0 \\ 0 & 0 & 1 & 0 \\ 0 & 1 & 0 & 1 \end{pmatrix} \begin{pmatrix} 1 & 0 & 0 & 0 \\ 0 & 1 & 0 & 0 \\ 0 & 0 & 1 & 0 \\ 1 & 0 & 0 & 1 \end{pmatrix} = \begin{pmatrix} 1 & 0 & 0 & 0 \\ 1 & 1 & 0 & 0 \\ 0 & 0 & 1 & 0 \\ 0 & 1 & 0 & 1 \end{pmatrix}.
\endgroup
\]
By looking at the numbers of non-zero off-diagonal term of \(A_1\), we know that we can rewrite the subcircuit \(C_1\), that originally contained \(4\) \CNOT\ gates, with only \(2\) \CNOT\ gates, namely \(\cnot{2}{4}\) followed by \(\cnot{1}{2}\). The newly obtained circuit is depicted in Figure~\ref{fig: ex-template-simplification-final}. \Halmos

\begin{figure}[h]
    \centering
    \begin{quantikz}[row sep = {0.7cm,between origins}, column sep = 0.2cm]
    & \qw & \qw & \ctrl{1} & \qw & \ctrl{1} & \qw & \qw & \ctrl{2} & \ctrl{3} & \qw & \qw & \ctrl{2} & \ctrl{3} & \qw & \qw & \qw & \qw & \qw \\
    & \qw & \ctrl{2} & \targ{} & \gate{\alpha_{12}} & \targ{} & \qw & \qw & \qw & \qw & \ctrl{1} & \ctrl{2} & \qw & \qw & \ctrl{1} & \qw & \qw & \qw & \qw \\
    & \qw & \qw & \qw & \qw & \qw & \ctrl{1} & \qw & \targ{} & \qw & \targ{} & \gate{\alpha_{123}} & \targ{} & \qw & \targ{} & \qw &\ctrl{1} & \qw & \qw \\
    & \qw & \targ{} & \qw & \qw & \gate{\alpha_{24}} & \targ{} & \qw & \qw & \targ{} & \gate{\alpha_{1234}} & \targ{} & \qw & \targ{} & \qw & \qw & \targ{} & \qw & \qw
    \end{quantikz}
    \caption{Final circuit.}
    \label{fig: ex-template-simplification-final}
\end{figure}
\end{example}

To conclude the presentation of the template-pruning method, we describe two examples that exhibits some non-trivial behavior. 

\begin{example}
    Consider the following circuit, depicted in Figure~\ref{fig:ex-swap} (left). 
\begin{figure}[H]
\centering
\begin{quantikz}[row sep = {0.7cm,between origins}, column sep = 0.2cm]
    \qw & \ctrl{1} & \qw & \ctrl{2} & \qw \\
    \qw & \targ{}  & \ctrl{1} & \qw & \qw \\
    \qw & \qw & \targ{} & \targ{} & \qw 
\end{quantikz} $\qquad \Rightarrow \qquad$ \begin{quantikz}[row sep = {0.7cm,between origins}, column sep = 0.2cm]
    \qw & \qw & \ctrl{1} & \qw \\
    \qw & \ctrl{1} & \targ{} & \qw \\
    \qw & \targ{} & \qw & \qw
\end{quantikz}
\caption{Initial (left) and final (right) circuits.}
\label{fig:ex-swap}
\end{figure}
To translate this circuit into a matrix, we read it from right to left. The rightmost \CNOT\ gate is controlled by the first qubit, targeting the third one. The matrix representing it is 
\[
\begingroup
\renewcommand*{\arraystretch}{0.7}
{\bf T}^{(31)} = \begin{pmatrix}
1 & 0 & 0 \\ 0 & 1 & 0 \\ 1 & 0 & 1 
\end{pmatrix}.
\endgroup
\]
The second rightmost \CNOT\ gate is controlled by the second qubit, targeting the third one. So far, the matrix representing the part of the circuit we read is 
\[
\begingroup
\renewcommand*{\arraystretch}{0.7}
A = {\bf T}^{(31)} {\bf T}^{(32)} = \begin{pmatrix} 1 & 0 & 0 \\ 0 & 1 & 0 \\ 1 & 0 & 1 \end{pmatrix} \begin{pmatrix} 1 & 0 & 0 \\ 0 & 1 & 0 \\ 0 & 1 & 1 \end{pmatrix} = \begin{pmatrix} 1 & 0 & 0 \\ 0 & 1 & 0 \\ 1 & 1 & 1 \end{pmatrix}. 
\endgroup
\]
The next and last \CNOT\ gate is controlled by the first qubit, targeting the second one, so the whole circuit is represented by
\[
\begingroup
\renewcommand*{\arraystretch}{0.7}
A {\bf T}^{(21)} = \begin{pmatrix} 1 & 0 & 0 \\ 0 & 1 & 0 \\ 1 & 1 & 1 \end{pmatrix} \begin{pmatrix} 1 & 0 & 0 \\ 1 & 1 & 0 \\ 0 & 0 & 1 \end{pmatrix} = \begin{pmatrix} 1 & 0 & 0 \\ 1 & 1 & 0 \\ 0 & 1 & 1 \end{pmatrix}.
\endgroup
\]
We can already read from the matrix that the returned circuit will be of size \(2\), because there are two non-zero entries below the diagonal. The circuit will be composed of the \CNOT\ gates \(\cnot{1}{2}\) and \(\cnot{2}{3}\). We now determined in which order they should be added to the circuit. We start reading the matrix from the bottom right, we read the whole row from right to left, and we do the same on the row above, etc. So here the first non-zero entry we read this way is the one on the third row and second column, so \(\cnot{2}{3}\) comes first. Then, the final circuit is represented in Figure~\ref{fig:ex-swap} (right).

This example is interesting because it shows how a circuit can be rewritten by dropping a \CNOT\ gate, but it also swapped the two other \CNOT\ gates. So the pruning does not simply detect redundant gates one by one, it takes the whole block into account. \Halmos 
\end{example}

\begin{example}
    Consider the following circuit, depicted in Figure~\ref{fig:ex-longer} (left). 
\begin{figure}[h]
\centering
\begin{quantikz}[row sep = {0.7cm,between origins}, column sep = 0.2cm]
    \qw & \ctrl{1} & \qw & \qw \\
    \qw & \targ{} & \ctrl{1} & \qw \\
    \qw & \qw & \targ{} & \qw
\end{quantikz} $\qquad \Rightarrow \qquad$ \begin{quantikz}[row sep = {0.7cm,between origins}, column sep = 0.2cm]
    \qw & \qw      & \ctrl{2} & \ctrl{1} & \qw \\
    \qw & \ctrl{1} & \qw      & \targ{}  & \qw \\
    \qw & \targ{}  & \targ{}  & \qw      & \qw
\end{quantikz}
\caption{Initial (left) and final (right) circuits.}
\label{fig:ex-longer}
\end{figure}

    Its matrix representation is given by 
    \[
    \begingroup
\renewcommand*{\arraystretch}{0.7}
    {\bf T}^{(32)} {\bf T}^{(21)} = \begin{pmatrix} 1 & 0 & 0 \\ 0 & 1 & 0 \\ 0 & 1 & 1 \end{pmatrix} \begin{pmatrix} 1 & 0 & 0 \\ 1 & 1 & 0 \\ 0 & 0 & 1 \end{pmatrix} = \begin{pmatrix} 1 & 0 & 0 \\ 1 & 1 & 0 \\ 1 & 1 & 1 \end{pmatrix}. 
    \endgroup
    \]
    By applying our algorithm, we obtain a circuit of size \(3\), because of the three non-zero entries below the diagonal. The final circuit is depicted in Figure~\ref{fig:ex-longer} (right).
    
This example shows that the transformation of a \CNOT -only circuit into its canonical form can make it longer. However, we show in the results that in general the returned circuits are much shorter than the original ones. \Halmos
\end{example}

\section{Proof of Proposition~\ref{prop.2}}\label{ap:prop:2}

There are $m$ non-zero monomials, so \VT\ outputs a circuit with $m$ \RZ\ gates. Note that, given the nature of the template (and owning to the Gray code), besides the first layer, each of these \RZ\ gates are not in parallel. Between each consecutive \RZ\ gates, we need at most $2 n$ layers of \CNOT\ gates, due to the transvection construction and~\citep[Theorem~4]{laplace2025variational}. So we need $2 m n + m$ layers (the $m$ due to the \RZ\ gates). \Halmos

It is interesting to now that even though the degree of the polynomial is $d$, there may be multiple \CNOT\ gates that belong to different qubits between consecutive \RZ\ gates. So we cannot reduce them in $2 d$ layers.

\section{Tables of experimental results}\label{app.results}

All the depth below are reported considering also an initial Hadamard tower and a final \texttt{rx} tower for QAOA implementation. Note that we have also tried a different simplification (instead of our pruning), called greedy Gaussian elimination (GGE, adapted from the more general~\cite{de2021gaussian}), with worse depth results than our pruning, and we have chosen not to include it in the tables.  

\subsection{Random instances}

\bigskip
\begin{center}
\centering
\adjustbox{max width=1.\textwidth}{\renewcommand{\arraystretch}{1.}
\begin{tabular}{ccccccccccccccccccccccccc}
\toprule
& & & & \multicolumn{4}{c}{Baselines} & \multicolumn{4}{c}{Ansatze} & \multicolumn{6}{c}{Binary linear programs} \\
\cmidrule(lr){5-8} \cmidrule(lr){9-12} \cmidrule(lr){13-18}
\multicolumn{3}{c}{Instances} & \hspace{0.2cm} & \multicolumn{2}{c}{Qiskit} & \multicolumn{2}{c}{Tket} & \multicolumn{2}{c}{\VT} & \multicolumn{2}{c}{Q-\VT} & \multicolumn{2}{c}{\CDP} & \multicolumn{2}{c}{\TB} & \multicolumn{2}{c}{\TBFM} \\
\cmidrule(lr){1-3} \cmidrule(lr){5-6} \cmidrule(lr){7-8} \cmidrule(lr){9-10} \cmidrule(lr){11-12} \cmidrule(lr){13-14} \cmidrule(lr){15-16} \cmidrule(lr){17-18}
$n$ & $d$ & $|M|$ &  & Depth & CPU [s] & Depth & CPU [s] & Depth & CPU [s] & Depth & CPU [s] & Depth & CPU [s] & Depth & CPU [s] & Depth & CPU [s] \\
\midrule
3 & 2 & 2 &  & 8 & 0.01 & 8 & 0.07 & 11 & \textbf{0.00} & 9 & 0.01 & \textbf{6} & 0.07 & \textbf{6} & 0.03 & \textbf{6} & 0.03 \\
3 & 2 & 3 &  & 11 & 0.02 & 11 & 0.09 & 10 & \textbf{0.00} & 10 & 0.01 & \textbf{8} & 0.15 & \textbf{8} & 0.06 & \textbf{8} & 0.04 \\
\midrule
5 & 2 & 2 &  & \textbf{5} & 0.01 & \textbf{5} & 0.03 & 10 & \textbf{0.01} & 10 & 0.01 & \textbf{5} & 0.26 & \textbf{5} & 0.14 & \textbf{5} & 0.12 \\
5 & 2 & 5 &  & 14 & 0.01 & 14 & 0.11 & 19 & \textbf{0.01} & 19 & 0.02 & \textbf{9} & 35.56 & \textbf{9} & 0.55 & 10 & 0.30 \\
5 & 2 & 6 &  & 14 & 0.02 & 14 & 0.22 & 19 & \textbf{0.01} & 19 & 0.01 & \textbf{9} & 31.01 & 10 & 1.66 & 13 & 0.38 \\
5 & 2 & 8 &  & 20 & 0.01 & 20 & 0.24 & 22 & \textbf{0.01} & 20 & 0.02 & \textbf{11} & 600.00 & \textbf{11} & 1.33 & 15 & 0.47 \\
5 & 2 & 10 &  & 23 & 0.01 & 23 & 0.31 & 25 & \textbf{0.01} & 23 & 0.02 & 13 & 600.00 & \textbf{12} & 6.41 & 17 & 0.50 \\
5 & 4 & 4 &  & 15 & 0.01 & 13 & 0.16 & 13 & \textbf{0.01} & 13 & 0.01 & \textbf{8} & 3.27 & 9 & 0.38 & 9 & 0.19 \\
5 & 4 & 7 &  & 21 & 0.01 & 17 & 0.21 & 17 & \textbf{0.01} & 16 & 0.02 & \textbf{9} & 70.01 & 10 & 1.16 & 12 & 0.29 \\
5 & 4 & 15 &  & 46 & 0.02 & 41 & 0.63 & 28 & \textbf{0.01} & 27 & 0.02 & 21 & 600.00 & \textbf{16} & 4.83 & 18 & 0.58 \\
5 & 4 & 20 &  & 53 & 0.02 & 47 & 0.70 & 31 & \textbf{0.01} & 31 & 0.03 & 27 & 600.00 & \textbf{20} & 27.80 & 22 & 0.77 \\
5 & 5 & 3 &  & 21 & 0.01 & 13 & 0.21 & 15 & \textbf{0.01} & 15 & 0.02 & -- & -- & -- & -- & \textbf{10} & 0.22 \\
5 & 5 & 12 &  & 30 & 0.02 & 29 & 0.43 & 24 & \textbf{0.01} & 22 & 0.02 & -- & -- & -- & -- & \textbf{18} & 0.52 \\
5 & 5 & 16 &  & 53 & 0.02 & 50 & 0.83 & 25 & \textbf{0.01} & 25 & 0.02 & -- & -- & -- & -- & \textbf{21} & 0.64 \\
5 & 5 & 23 &  & 67 & 0.02 & 64 & 1.06 & 33 & \textbf{0.01} & 31 & 0.02 & -- & -- & -- & -- & \textbf{29} & 0.91 \\
5 & 5 & 26 &  & 72 & 0.02 & 67 & 1.16 & 33 & \textbf{0.01} & 33 & 0.02 & -- & -- & -- & -- & \textbf{32} & 0.93 \\
\midrule
8 & 2 & 2 &  & 8 & 0.01 & 8 & 0.08 & 10 & \textbf{0.01} & 10 & 0.02 & \textbf{6} & 2.12 & \textbf{6} & 0.34 & \textbf{6} & 0.56 \\
8 & 2 & 9 &  & 20 & 0.02 & 20 & 0.32 & 29 & \textbf{0.01} & 22 & 0.03 & 12 & 600.00 & \textbf{11} & 26.88 & 18 & 1.99 \\
8 & 2 & 15 &  & 26 & \textbf{0.02} & 26 & 0.51 & 43 & 0.02 & 35 & 0.03 & -- & 600.00 & \textbf{13} & 112.91 & 26 & 5.78 \\
8 & 2 & 22 &  & 38 & \textbf{0.02} & 38 & 0.72 & 54 & 0.03 & 42 & 0.04 & -- & 600.00 & \textbf{18} & 600.00 & 35 & 5.28 \\
8 & 2 & 28 &  & 41 & \textbf{0.02} & 41 & 0.96 & 65 & 0.04 & 52 & 0.05 & -- & 600.00 & \textbf{20} & 600.00 & 43 & 7.84 \\
8 & 4 & 9 &  & 46 & \textbf{0.02} & 34 & 0.73 & 39 & 0.02 & 32 & 0.03 & -- & 600.00 & \textbf{15} & 45.22 & 17 & 3.24 \\
8 & 4 & 48 &  & 234 & \textbf{0.03} & 167 & 3.58 & 116 & 0.06 & 106 & 0.07 & -- & 600.00 & 111 & 600.00 & \textbf{69} & 21.39 \\
8 & 4 & 54 &  & 227 & \textbf{0.03} & 168 & 4.03 & 120 & 0.08 & 105 & 0.07 & -- & 600.00 & 123 & 600.00 & \textbf{71} & 600.00 \\
8 & 4 & 85 &  & 399 & \textbf{0.03} & 243 & 5.76 & 153 & 0.09 & 148 & 0.09 & -- & 600.00 & 172 & 600.00 & \textbf{102} & 43.02 \\
8 & 4 & 106 &  & 472 & \textbf{0.04} & 269 & 6.88 & 176 & 0.17 & 176 & 0.16 & -- & 600.00 & -- & 600.00 & \textbf{123} & 49.83 \\
8 & 5 & 30 &  & 156 & \textbf{0.02} & 122 & 2.84 & 88 & 0.04 & 83 & 0.05 & -- & -- & -- & -- & \textbf{47} & 65.00 \\
8 & 5 & 62 &  & 322 & \textbf{0.03} & 211 & 4.38 & 136 & 0.06 & 131 & 0.08 & -- & -- & -- & -- & \textbf{92} & 48.44 \\
8 & 5 & 88 &  & 413 & \textbf{0.03} & 302 & 6.28 & 164 & 0.08 & 154 & 0.12 & -- & -- & -- & -- & \textbf{120} & 48.04 \\
8 & 5 & 125 &  & 586 & \textbf{0.04} & 391 & 8.24 & 201 & 0.11 & 195 & 0.13 & -- & -- & -- & -- & \textbf{158} & 58.30 \\
8 & 5 & 148 &  & 712 & \textbf{0.05} & 448 & 9.56 & 212 & 0.12 & 210 & 0.14 & -- & -- & -- & -- & \textbf{172} & 66.71 \\
8 & 6 & 45 &  & 227 & \textbf{0.02} & 176 & 4.25 & 109 & 0.05 & 97 & 0.08 & -- & -- & -- & -- & \textbf{66} & 34.52 \\
8 & 6 & 79 &  & 333 & \textbf{0.03} & 274 & 6.12 & 150 & 0.07 & 132 & 0.10 & -- & -- & -- & -- & \textbf{105} & 36.57 \\
8 & 6 & 111 &  & 575 & \textbf{0.04} & 402 & 9.29 & 195 & 0.09 & 184 & 0.12 & -- & -- & -- & -- & \textbf{152} & 55.54 \\
8 & 6 & 154 &  & 796 & \textbf{0.06} & 507 & 11.45 & 235 & 0.15 & 230 & 0.21 & -- & -- & -- & -- & \textbf{191} & 72.55 \\
8 & 6 & 190 &  & 951 & \textbf{0.06} & 589 & 13.30 & 252 & 0.14 & 252 & 0.20 & -- & -- & -- & -- & \textbf{214} & 86.65 \\
8 & 7 & 52 &  & 250 & \textbf{0.04} & 201 & 4.81 & 121 & 0.06 & 110 & 0.11 & -- & -- & -- & -- & \textbf{76} & 600.00 \\
8 & 7 & 83 &  & 456 & \textbf{0.05} & 304 & 7.15 & 170 & 0.08 & 155 & 0.13 & -- & -- & -- & -- & \textbf{118} & 41.69 \\
8 & 7 & 131 &  & 690 & \textbf{0.08} & 469 & 11.22 & 213 & 0.11 & 204 & 0.19 & -- & -- & -- & -- & \textbf{170} & 102.25 \\
8 & 7 & 174 &  & 896 & 0.16 & 589 & 13.40 & 254 & \textbf{0.13} & 251 & 0.20 & -- & -- & -- & -- & \textbf{212} & 79.66 \\
8 & 7 & 226 &  & 1216 & \textbf{0.10} & 728 & 16.39 & 265 & 0.17 & 265 & 0.17 & -- & -- & -- & -- & \textbf{244} & 258.19 \\
8 & 8 & 43 &  & 235 & \textbf{0.04} & 171 & 4.44 & 103 & 0.05 & 96 & 0.07 & -- & -- & -- & -- & \textbf{64} & 22.38 \\
8 & 8 & 100 &  & 566 & \textbf{0.06} & 399 & 9.55 & 176 & 0.08 & 158 & 0.13 & -- & -- & -- & -- & \textbf{138} & 46.89 \\
8 & 8 & 150 &  & 895 & \textbf{0.07} & 577 & 13.23 & 233 & 0.11 & 224 & 0.16 & -- & -- & -- & -- & \textbf{191} & 93.68 \\
8 & 8 & 203 &  & 1153 & \textbf{0.09} & 721 & 16.44 & 263 & 0.12 & 262 & 0.21 & -- & -- & -- & -- & \textbf{227} & 92.61 \\
8 & 8 & 247 &  & 1412 & \textbf{0.09} & 835 & 19.37 & 263 & 0.12 & 263 & 0.17 & -- & -- & -- & -- & \textbf{256} & 113.55 \\
\midrule
10 & 2 & 8 &  & 20 & \textbf{0.02} & 20 & 0.26 & 30 & 0.03 & 26 & 0.04 & 16 & 600.00 & \textbf{10} & 21.00 & 15 & 3.70 \\
10 & 2 & 17 &  & 32 & \textbf{0.02} & 32 & 0.62 & 53 & 0.04 & 43 & 0.05 & -- & 600.00 & \textbf{13} & 600.00 & 27 & 11.71 \\
10 & 2 & 28 &  & 41 & \textbf{0.02} & 41 & 0.95 & 72 & 0.04 & 51 & 0.09 & -- & 600.00 & \textbf{21} & 600.00 & 43 & 600.00 \\
10 & 2 & 35 &  & 44 & \textbf{0.03} & 44 & 1.23 & 83 & 0.06 & 66 & 0.12 & -- & 600.00 & \textbf{33} & 600.00 & 53 & 39.88 \\
10 & 2 & 45 &  & \textbf{53} & \textbf{0.03} & \textbf{53} & 1.53 & 106 & 0.07 & 79 & 0.14 & -- & 600.00 & 65 & 600.00 & 70 & 40.20 \\ 
\bottomrule\end{tabular}
}\end{center}
\begin{center}
\centering
\adjustbox{max width=1.\textwidth}{\renewcommand{\arraystretch}{1.}
\begin{tabular}{ccccccccccccccccccccccccc}
\toprule
& & & & \multicolumn{4}{c}{Baselines} & \multicolumn{4}{c}{Ansatze} & \multicolumn{6}{c}{Binary linear programs} \\
\cmidrule(lr){5-8} \cmidrule(lr){9-12} \cmidrule(lr){13-18}
\multicolumn{3}{c}{Instances} & \hspace{0.2cm} & \multicolumn{2}{c}{Qiskit} & \multicolumn{2}{c}{Tket} & \multicolumn{2}{c}{\VT} & \multicolumn{2}{c}{Q-\VT} & \multicolumn{2}{c}{\CDP} & \multicolumn{2}{c}{\TB} & \multicolumn{2}{c}{\TBFM} \\
\cmidrule(lr){1-3} \cmidrule(lr){5-6} \cmidrule(lr){7-8} \cmidrule(lr){9-10} \cmidrule(lr){11-12} \cmidrule(lr){13-14} \cmidrule(lr){15-16} \cmidrule(lr){17-18}
$n$ & $d$ & $|M|$ &  & Depth & CPU [s] & Depth & CPU [s] & Depth & CPU [s] & Depth & CPU [s] & Depth & CPU [s] & Depth & CPU [s] & Depth & CPU [s] \\ \midrule
10 & 4 & 46 &  & 188 & \textbf{0.03} & 143 & 3.72 & 129 & 0.07 & 112 & 0.14 & -- & 600.00 & 110 & 600.00 & \textbf{69} & 227.05 \\
10 & 4 & 117 &  & 567 & \textbf{0.06} & 354 & 9.22 & 291 & 0.16 & 263 & 0.23 & -- & 600.00 & -- & 600.00 & \textbf{175} & 295.89 \\
10 & 4 & 157 &  & 799 & \textbf{0.06} & 460 & 12.05 & 355 & 0.25 & 337 & 0.32 & -- & 600.00 & -- & 600.00 & \textbf{220} & 600.00 \\
10 & 4 & 213 &  & 1066 & \textbf{0.08} & 549 & 14.50 & 434 & 0.25 & 429 & 0.42 & -- & 600.00 & -- & 600.00 & \textbf{293} & 446.20 \\
10 & 4 & 265 &  & 1343 & \textbf{0.10} & 645 & 16.61 & 492 & 0.31 & 492 & 0.46 & -- & 600.00 & -- & 600.00 & \textbf{332} & 592.32 \\
10 & 5 & 83 &  & 498 & \textbf{0.04} & 343 & 7.98 & 226 & 0.12 & 207 & 0.19 & -- & -- & -- & -- & \textbf{127} & 600.00 \\
10 & 5 & 161 &  & 952 & \textbf{0.07} & 585 & 13.46 & 394 & 0.20 & 349 & 0.36 & -- & -- & -- & -- & \textbf{245} & 600.00 \\
10 & 5 & 270 &  & 1618 & \textbf{0.10} & 888 & 20.95 & 550 & 0.30 & 511 & 0.41 & -- & -- & -- & -- & \textbf{374} & 600.00 \\
10 & 5 & 348 &  & 2087 & \textbf{0.13} & 1096 & 25.49 & 653 & 0.36 & 623 & 0.52 & -- & -- & -- & -- & \textbf{487} & 600.00 \\
10 & 5 & 427 &  & 2566 & \textbf{0.17} & 1300 & 29.25 & 721 & 0.41 & 699 & 0.65 & -- & -- & -- & -- & \textbf{590} & 600.00 \\
10 & 6 & 133 &  & 853 & \textbf{0.06} & 577 & 14.85 & 319 & 0.21 & 275 & 0.23 & -- & -- & -- & -- & \textbf{187} & 600.00 \\
10 & 6 & 233 &  & 1506 & \textbf{0.09} & 889 & 22.99 & 502 & 0.26 & 465 & 0.41 & -- & -- & -- & -- & \textbf{328} & 600.00 \\
10 & 6 & 372 &  & 2425 & \textbf{0.15} & 1342 & 34.16 & 669 & 0.41 & 649 & 0.61 & -- & -- & -- & -- & \textbf{498} & 600.00 \\
10 & 6 & 468 &  & 3071 & \textbf{0.20} & 1612 & 116.77 & 780 & 0.52 & 759 & 0.70 & -- & -- & -- & -- & \textbf{580} & 600.00 \\
10 & 6 & 595 &  & 3911 & \textbf{0.27} & 1915 & 59.23 & 872 & 0.52 & 872 & 0.77 & -- & -- & -- & -- & \textbf{682} & 600.00 \\
10 & 7 & 151 &  & 1038 & \textbf{0.06} & 679 & 18.28 & 349 & 0.21 & 313 & 0.25 & -- & -- & -- & -- & \textbf{219} & 600.00 \\
10 & 7 & 302 &  & 2054 & \textbf{0.12} & 1232 & 30.63 & 594 & 0.33 & 571 & 0.58 & -- & -- & -- & -- & \textbf{418} & 600.00 \\
10 & 7 & 473 &  & 3239 & \textbf{0.20} & 1748 & 44.42 & 798 & 0.46 & 765 & 0.59 & -- & -- & -- & -- & \textbf{610} & 600.00 \\
10 & 7 & 588 &  & 4113 & \textbf{0.27} & 2105 & 50.92 & 919 & 0.52 & 900 & 0.78 & -- & -- & -- & -- & \textbf{725} & 600.00 \\
10 & 7 & 757 &  & 5249 & \textbf{0.38} & 2557 & 62.11 & 987 & 0.59 & 985 & 0.81 & -- & -- & -- & -- & \textbf{838} & 600.00 \\
10 & 8 & 168 &  & 1296 & \textbf{0.08} & 790 & 20.73 & 409 & 0.22 & 369 & 0.28 & -- & -- & -- & -- & \textbf{243} & 600.00 \\
10 & 8 & 382 &  & 2750 & \textbf{0.17} & 1548 & 44.23 & 712 & 0.36 & 679 & 0.59 & -- & -- & -- & -- & \textbf{527} & 600.00 \\
10 & 8 & 524 &  & 3797 & \textbf{0.32} & 1996 & 53.69 & 846 & 0.64 & 828 & 0.65 & -- & -- & -- & -- & \textbf{668} & 600.00 \\
10 & 8 & 707 &  & 5118 & \textbf{0.49} & 2488 & 65.27 & 992 & 0.63 & 975 & 0.99 & -- & -- & -- & -- & \textbf{837} & 600.00 \\
10 & 8 & 892 &  & 6530 & \textbf{0.76} & 3050 & 79.09 & 1039 & 0.77 & 1039 & 1.14 & -- & -- & -- & -- & \textbf{952} & 600.00 \\
10 & 9 & 209 &  & 1573 & \textbf{0.12} & 991 & 29.59 & 475 & 0.26 & 434 & 0.41 & -- & -- & -- & -- & \textbf{311} & 600.00 \\
10 & 9 & 394 &  & 2916 & \textbf{0.20} & 1642 & 45.38 & 700 & 0.40 & 657 & 0.62 & -- & -- & -- & -- & \textbf{502} & 600.00 \\
10 & 9 & 570 &  & 4369 & \textbf{0.30} & 2281 & 58.59 & 903 & 0.51 & 870 & 0.75 & -- & -- & -- & -- & \textbf{721} & 600.00 \\
10 & 9 & 771 &  & 5856 & \textbf{0.55} & 2977 & 75.63 & 1035 & 0.58 & 1029 & 0.87 & -- & -- & -- & -- & \textbf{885} & 600.00 \\
10 & 9 & 977 &  & 7480 & 0.82 & 3537 & 86.11 & 1044 & \textbf{0.59} & 1044 & 1.00 & -- & -- & -- & -- & \textbf{1008} & 600.00 \\
10 & 10 & 187 &  & 1383 & \textbf{0.10} & 871 & 24.82 & 439 & 0.26 & 395 & 0.34 & -- & -- & -- & -- & \textbf{267} & 600.00 \\
10 & 10 & 387 &  & 2974 & \textbf{0.21} & 1675 & 44.43 & 696 & 0.41 & 668 & 0.47 & -- & -- & -- & -- & \textbf{522} & 600.00 \\
10 & 10 & 589 &  & 4615 & \textbf{0.39} & 2379 & 62.01 & 893 & 0.48 & 862 & 0.68 & -- & -- & -- & -- & \textbf{735} & 600.00 \\
10 & 10 & 807 &  & 6185 & \textbf{0.58} & 3054 & 79.46 & 1035 & 0.59 & 1029 & 0.91 & -- & -- & -- & -- & \textbf{911} & 600.00 \\
10 & 10 & 1013 &  & 7931 & 0.85 & 3723 & 91.77 & 1033 & \textbf{0.58} & 1033 & 0.99 & -- & -- & -- & -- & \textbf{1024} & 600.00 \\
\midrule
13 & 2 & 9 &  & 17 & \textbf{0.02} & 17 & 0.26 & 34 & 0.09 & 24 & 0.15 & -- & 600.00 & \textbf{10} & 106.59 & 16 & 14.55 \\
13 & 2 & 32 &  & 47 & \textbf{0.03} & 47 & 1.06 & 94 & 0.12 & 59 & 0.17 & -- & 600.00 & \textbf{25} & 600.00 & 55 & 54.28 \\
13 & 2 & 46 &  & 62 & \textbf{0.03} & 62 & 1.56 & 114 & 0.19 & 92 & 0.16 & -- & 600.00 & \textbf{56} & 600.00 & 69 & 600.00 \\
13 & 2 & 61 &  & \textbf{65} & \textbf{0.03} & \textbf{65} & 2.15 & 144 & 0.16 & 102 & 0.28 & -- & 600.00 & 75 & 600.00 & 91 & 600.00 \\
13 & 2 & 78 &  & \textbf{71} & \textbf{0.04} & \textbf{71} & 3.48 & 178 & 0.18 & 133 & 0.22 & -- & 600.00 & 79 & 600.00 & 124 & 176.19 \\
13 & 4 & 166 &  & 847 & \textbf{0.07} & 528 & 14.64 & 498 & 0.41 & 448 & 0.48 & -- & 600.00 & -- & 600.00 & \textbf{299} & 600.00 \\
13 & 4 & 312 &  & 1634 & \textbf{0.12} & 915 & 24.09 & 854 & 0.64 & 785 & 0.79 & -- & 600.00 & -- & 600.00 & \textbf{512} & 600.00 \\
13 & 4 & 480 &  & 2659 & \textbf{0.21} & 1252 & 37.16 & 1198 & 0.86 & 1145 & 1.08 & -- & 600.00 & -- & 600.00 & \textbf{809} & 600.00 \\
13 & 4 & 636 &  & 3540 & \textbf{0.27} & 1506 & 44.25 & 1434 & 1.06 & \textbf{1409} & 1.51 & -- & 600.00 & -- & 600.00 & -- & 600.00 \\
13 & 4 & 806 &  & 4537 & \textbf{0.33} & 1769 & 49.08 & \textbf{1686} & 1.27 & \textbf{1686} & 1.75 & -- & 600.00 & -- & 600.00 & -- & 600.00 \\
13 & 5 & 337 &  & 2148 & \textbf{0.18} & 1291 & 34.85 & 942 & 0.70 & 853 & 0.87 & -- & -- & -- & -- & \textbf{599} & 600.00 \\
13 & 5 & 665 &  & 4634 & \textbf{0.37} & 2277 & 57.29 & 1665 & 1.24 & \textbf{1546} & 1.94 & -- & -- & -- & -- & -- & 600.00 \\
13 & 5 & 980 &  & 6788 & \textbf{0.54} & 3092 & 76.62 & 2288 & 1.71 & \textbf{2155} & 2.76 & -- & -- & -- & -- & -- & -- \\
13 & 5 & 1336 &  & 9385 & \textbf{0.82} & 3989 & 99.73 & 2855 & 2.04 & \textbf{2743} & 3.42 & -- & -- & -- & -- & -- & -- \\
13 & 5 & 1664 &  & 11802 & \textbf{1.12} & 4689 & 121.57 & 3385 & 2.52 & \textbf{3203} & 4.09 & -- & -- & -- & -- & -- & -- \\
13 & 6 & 556 &  & 4404 & \textbf{0.25} & 2459 & 73.43 & 1527 & 1.19 & \textbf{1377} & 1.33 & -- & -- & -- & -- & -- & -- \\
13 & 6 & 1160 &  & 9332 & \textbf{0.71} & 4385 & 120.31 & 2694 & 2.58 & \textbf{2521} & 2.94 & -- & -- & -- & -- & -- & -- \\
13 & 6 & 1690 &  & 13728 & \textbf{1.28} & 5917 & 158.31 & 3539 & 2.92 & \textbf{3351} & 4.46 & -- & -- & -- & -- & -- & -- \\
13 & 6 & 2228 &  & 18178 & \textbf{1.96} & 7371 & 201.98 & 4247 & 4.40 & \textbf{4151} & 6.04 & -- & -- & -- & -- & -- & -- \\
13 & 6 & 2808 &  & 23141 & \textbf{3.16} & 8715 & 252.52 & \textbf{4773} & 4.46 & \textbf{4773} & 8.19 & -- & -- & -- & -- & -- & -- \\
13 & 7 & 819 &  & 7121 & \textbf{0.47} & 3823 & 110.50 & 2137 & 1.71 & \textbf{1879} & 2.21 & -- & -- & -- & -- & -- & -- \\
13 & 7 & 1643 &  & 14560 & \textbf{1.33} & 6798 & 194.67 & 3609 & 2.92 & \textbf{3301} & 4.54 & -- & -- & -- & -- & -- & -- \\
13 & 7 & 2506 &  & 22613 & \textbf{2.63} & 9874 & 285.00 & 4833 & 4.18 & \textbf{4562} & 8.85 & -- & -- & -- & -- & -- & -- \\
13 & 7 & 3251 &  & 29287 & \textbf{4.25} & 12180 & 380.19 & 5625 & 4.43 & \textbf{5438} & 10.42 & -- & -- & -- & -- & -- & -- \\
13 & 7 & 4095 &  & 37038 & 6.59 & 14580 & 418.87 & 6384 & \textbf{5.71} & \textbf{6258} & 14.13 & -- & -- & -- & -- & -- & -- \\
13 & 8 & 1072 &  & 9838 & \textbf{0.81} & 4835 & 144.58 & 2628 & 1.92 & \textbf{2359} & 3.09 & -- & -- & -- & -- & -- & -- \\
13 & 8 & 2151 &  & 19880 & \textbf{3.03} & 8785 & 239.21 & 4347 & 3.22 & \textbf{4029} & 5.76 & -- & -- & -- & -- & -- & -- \\
13 & 8 & 3227 &  & 30539 & 5.39 & 12339 & 362.78 & 5770 & \textbf{4.21} & \textbf{5520} & 9.97 & -- & -- & -- & -- & -- & -- \\
13 & 8 & 4360 &  & 41707 & 9.37 & 15605 & 486.82 & 6772 & \textbf{5.21} & \textbf{6652} & 13.80 & -- & -- & -- & -- & -- & -- \\
13 & 8 & 5382 &  & 51675 & 13.83 & 18288 & 523.48 & \textbf{7305} & \textbf{6.41} & \textbf{7305} & 17.53 & -- & -- & -- & -- & -- & -- \\
13 & 9 & 1298 &  & 12613 & \textbf{1.16} & 6372 & 185.78 & 3114 & 2.25 & \textbf{2780} & 3.38 & -- & -- & -- & -- & -- & -- \\
13 & 9 & 2579 &  & 25354 & \textbf{3.59} & 11409 & 338.76 & 4965 & 3.74 & \textbf{4626} & 7.47 & -- & -- & -- & -- & -- & -- \\
13 & 9 & 3925 &  & 39100 & 8.56 & 16087 & 555.62 & 6503 & \textbf{5.25} & \textbf{6255} & 11.49 & -- & -- & -- & -- & -- & -- \\
13 & 9 & 5205 &  & 52050 & 14.10 & 20235 & 600.00 & 7532 & \textbf{5.77} & \textbf{7381} & 15.44 & -- & -- & -- & -- & -- & -- \\
13 & 9 & 6526 &  & 65624 & 27.11 & 24201 & 600.00 & 7992 & \textbf{6.63} & \textbf{7976} & 18.89 & -- & -- & -- & -- & -- & -- \\
13 & 10 & 1537 &  & 15310 & \textbf{1.82} & 7586 & 227.87 & 3502 & 2.48 & \textbf{3146} & 4.54 & -- & -- & -- & -- & -- & -- \\
13 & 10 & 2943 &  & 30050 & 5.12 & 13219 & 409.48 & 5478 & \textbf{3.99} & \textbf{5085} & 8.28 & -- & -- & -- & -- & -- & -- \\
13 & 10 & 4465 &  & 46300 & 10.53 & 18668 & 600.00 & 7060 & \textbf{5.30} & \textbf{6835} & 13.55 & -- & -- & -- & -- & -- & -- \\
13 & 10 & 5964 &  & 62006 & 24.85 & 23575 & 600.00 & 8080 & \textbf{6.64} & \textbf{7985} & 17.70 & -- & -- & -- & -- & -- & -- \\
13 & 10 & 7384 &  & 77213 & 38.71 & 28044 & 600.00 & \textbf{8253} & \textbf{7.16} & \textbf{8253} & 19.30 & -- & -- & -- & -- & -- & -- \\
\bottomrule\end{tabular}
}\end{center}

\subsection{Low Autocorrelation Binary Sequences (LABS) instances}

\bigskip
\begin{center}
\centering
\adjustbox{max width=1.\textwidth}{\renewcommand{\arraystretch}{1.}
\begin{tabular}{cccccccccccccccccc}
\toprule
& & & & \multicolumn{4}{c}{Baselines} & \multicolumn{4}{c}{Ansatze} & \multicolumn{6}{c}{Binary linear programs} \\
\cmidrule(lr){5-8} \cmidrule(lr){9-12} \cmidrule(lr){13-18}
\multicolumn{3}{c}{Instances} & \hspace{0.2cm} & \multicolumn{2}{c}{Qiskit} & \multicolumn{2}{c}{Tket} & \multicolumn{2}{c}{\VT} & \multicolumn{2}{c}{Q-\VT} & \multicolumn{2}{c}{\CDP} & \multicolumn{2}{c}{\TB} & \multicolumn{2}{c}{\TBFM} \\
\cmidrule(lr){1-3} \cmidrule(lr){5-6} \cmidrule(lr){7-8} \cmidrule(lr){9-10} \cmidrule(lr){11-12} \cmidrule(lr){13-14} \cmidrule(lr){15-16} \cmidrule(lr){17-18}
$n$ & $d$ & $|M|$ &  & Depth & CPU [s] & Depth & CPU [s] & Depth & CPU [s] & Depth & CPU [s] & Depth & CPU [s] & Depth & CPU [s] & Depth & CPU [s] \\
\midrule
5 & 4 & 5 &  & 25 & 0.07 & 21 & 0.41 & 16 & \textbf{0.04} & 15 & 0.07 & \textbf{9} & 29.53 & 10 & 0.89 & 13 & 0.42 \\
5 & 4 & 7 &  & 35 & 0.02 & 29 & 0.39 & 20 & \textbf{0.01} & 20 & 0.02 & \textbf{11} & 600.00 & \textbf{11} & 0.90 & 14 & 0.44 \\
\midrule
6 & 4 & 7 &  & 35 & 0.02 & 29 & 0.44 & 23 & \textbf{0.01} & 22 & 0.02 & \textbf{10} & 600.00 & 11 & 4.91 & 15 & 1.05 \\
6 & 4 & 11 &  & 50 & 0.02 & 42 & 0.71 & 29 & \textbf{0.01} & 29 & 0.02 & -- & 600.00 & \textbf{14} & 18.16 & 19 & 1.37 \\
6 & 4 & 13 &  & 65 & 0.02 & 52 & 0.96 & 33 & \textbf{0.01} & 31 & 0.02 & 29 & 600.00 & \textbf{15} & 7.43 & 21 & 1.78 \\
\midrule
7 & 4 & 9 &  & 45 & 0.02 & 37 & 0.59 & 29 & \textbf{0.01} & 28 & 0.02 & -- & 600.00 & \textbf{12} & 33.40 & 18 & 1.90 \\
7 & 4 & 15 &  & 65 & 0.02 & 55 & 0.97 & 41 & \textbf{0.02} & 37 & 0.03 & -- & 600.00 & \textbf{16} & 600.00 & 26 & 3.00 \\
7 & 4 & 19 &  & 94 & \textbf{0.02} & 81 & 1.52 & 55 & 0.02 & 48 & 0.04 & -- & 600.00 & \textbf{18} & 600.00 & 32 & 7.77 \\
7 & 4 & 22 &  & 109 & \textbf{0.02} & 89 & 1.60 & 61 & 0.02 & 53 & 0.03 & -- & 600.00 & \textbf{21} & 187.70 & 32 & 8.19 \\
\midrule
8 & 4 & 11 &  & 55 & 0.02 & 45 & 0.70 & 34 & \textbf{0.02} & 32 & 0.03 & -- & 600.00 & \textbf{12} & 88.33 & 22 & 6.14 \\
8 & 4 & 19 &  & 80 & \textbf{0.02} & 68 & 1.26 & 51 & 0.03 & 48 & 0.04 & -- & 600.00 & \textbf{18} & 600.00 & 31 & 5.51 \\
8 & 4 & 25 &  & 121 & \textbf{0.02} & 104 & 2.02 & 71 & 0.03 & 63 & 0.04 & -- & 600.00 & \textbf{26} & 600.00 & 42 & 142.24 \\
8 & 4 & 31 &  & 143 & \textbf{0.02} & 117 & 2.23 & 81 & 0.04 & 76 & 0.05 & -- & 600.00 & 64 & 600.00 & \textbf{49} & 115.41 \\
8 & 4 & 34 &  & 170 & \textbf{0.02} & 131 & 2.60 & 83 & 0.04 & 78 & 0.05 & -- & 600.00 & 72 & 600.00 & \textbf{48} & 14.60 \\
\midrule
10 & 4 & 15 &  & 75 & \textbf{0.02} & 61 & 0.95 & 44 & 0.03 & 41 & 0.06 & -- & 600.00 & \textbf{14} & 600.00 & 28 & 9.42 \\
10 & 4 & 27 &  & 110 & \textbf{0.02} & 94 & 1.83 & 75 & 0.05 & 66 & 0.07 & -- & 600.00 & 52 & 600.00 & \textbf{45} & 17.44 \\
10 & 4 & 37 &  & 167 & \textbf{0.02} & 148 & 2.90 & 117 & 0.10 & 96 & 0.14 & -- & 600.00 & 100 & 600.00 & \textbf{61} & 48.74 \\
10 & 4 & 49 &  & 203 & \textbf{0.03} & 171 & 3.60 & 140 & 0.07 & 128 & 0.11 & -- & 600.00 & 129 & 600.00 & \textbf{79} & 148.14 \\
10 & 4 & 58 &  & 271 & \textbf{0.03} & 226 & 4.60 & 162 & 0.09 & 148 & 0.14 & -- & 600.00 & 153 & 600.00 & \textbf{92} & 211.14 \\
10 & 4 & 66 &  & 300 & \textbf{0.03} & 239 & 5.33 & 180 & 0.09 & 166 & 0.12 & -- & 600.00 & 165 & 600.00 & \textbf{100} & 600.00 \\
10 & 4 & 70 &  & 326 & \textbf{0.03} & 248 & 5.85 & 185 & 0.10 & 173 & 0.19 & -- & 600.00 & 178 & 600.00 & \textbf{99} & 600.00 \\
\midrule
13 & 4 & 21 &  & 105 & \textbf{0.02} & 85 & 1.38 & 59 & 0.10 & 56 & 0.15 & -- & 600.00 & 49 & 600.00 & \textbf{36} & 40.82 \\
13 & 4 & 39 &  & 155 & \textbf{0.02} & 133 & 2.64 & 105 & 0.12 & 96 & 0.17 & -- & 600.00 & 78 & 600.00 & \textbf{60} & 61.19 \\
13 & 4 & 55 &  & 230 & \textbf{0.03} & 205 & 4.44 & 175 & 0.23 & 161 & 0.23 & -- & 600.00 & 129 & 600.00 & \textbf{91} & 235.33 \\
13 & 4 & 76 &  & 293 & \textbf{0.03} & 249 & 5.63 & 232 & 0.19 & 208 & 0.26 & -- & 600.00 & 189 & 600.00 & \textbf{123} & 542.26 \\
13 & 4 & 94 &  & 425 & \textbf{0.04} & 360 & 7.38 & 292 & 0.26 & 259 & 0.34 & -- & 600.00 & -- & 600.00 & \textbf{159} & 600.00 \\
13 & 4 & 114 &  & 475 & \textbf{0.04} & 390 & 8.50 & 340 & 0.27 & 317 & 0.34 & -- & 600.00 & -- & 600.00 & \textbf{193} & 600.00 \\
13 & 4 & 130 &  & 538 & \textbf{0.04} & 440 & 10.88 & 414 & 0.34 & 380 & 0.36 & -- & 600.00 & -- & 600.00 & \textbf{245} & 600.00 \\
\bottomrule
\end{tabular}
}\end{center}

\end{document}